\newtheorem{definition}{Definition}
\newtheorem{proposition}[definition]{Proposition}
\newtheorem{lemma}[definition]{Lemma}
\newtheorem{remark}[definition]{Remark}
\newtheorem{corollary}[definition]{Corollary}
\newtheorem{theorem}[definition]{Theorem}
\numberwithin{equation}{section} 
\newcommand{\asl}{\widehat{\mathfrak{sl}_2}}
\newcommand{\aslp}{\widehat{\mathfrak{sl}_2^+}}
\newcommand{\aslm}{\widehat{\mathfrak{sl}_2^-}}
\newcommand{\Vir}{\mathsf{Vir}}
\newcommand{\U}{\mathrm{U}}
\newcommand{\C}{\mathbb C}
\newcommand{\N}{\mathbb N}
\newcommand{\Z}{\mathbb Z}
\newcommand{\V}{\mathcal V}
\renewcommand{\H}{\mathcal H}
\newcommand{\W}{\mathcal W}
\newcommand{\Wt}{\widetilde{\mathcal W} \vphantom{\mathcal W}}
\newcommand{\rad}{\mathrm{Rad}}
\newcommand{\w}{\mathfrak w}
\newcommand{\f}{\mathfrak f}
\newcommand{\F}{\mathcal F}
\newcommand{\psibar}{\overline \psi \vphantom{\psi}}
\newcommand{\chibar}{\overline \chi \vphantom{\chi}}
\newcommand{\J}{\mathcal J}
\newcommand{\sJ}{\mathsf J}
\newcommand{\ut}{{}^{\otimes}}
\newcommand{\Ker}{\mathrm{Ker}}
\newcommand{\Ran}{\mathrm{Ran}}
\title{\LARGE Decomposition of $\asl{}_{,k} \ \oplus \ \asl{}_{,1}$ highest weight representations for generic level $k$ and \\ equivalence between two-dimensional CFT models}
\author{Leszek Hadasz$^\sharp$\footnote{leszek.hadasz@uj.edu.pl, ORCID:0000-0002-8142-8185}\; and B\l{}a\.zej Ruba$^\dagger$}
\affil{\textit{\small$^\sharp$Institute of Theoretical Physics, Jagiellonian University in Kraków, Poland}}
\affil{\textit{\small$^\dagger$Department of Mathematical Sciences, University of Copenhagen, \protect\\
\vspace{-0.7em}
Universitetsparken 5, 2100 Copenhagen, Denmark }}
\date{\today}
\begin{document}
	
	\maketitle
	
\begin{abstract}
We construct highest weight vectors of $\asl_{,k+1} \oplus \Vir$ in tensor products of highest weight modules of $\asl_{,k}$ and $\asl_{,1}$, and thus for generic weights we find the decomposition of the tensor product into irreducibles of $\asl_{k+1} \oplus \Vir$. The construction uses Wakimoto representations of $\asl_{,k}$, but the obtained vectors can be mapped back to Verma modules. Singularities of this mapping are cancelled by a renormalization. A detailed study of ``degenerations'' of Wakimoto modules allowed to find the renormalization factor explicitly. The obtained result is a~significant step forward in a proof of equivalence of certain two-dimensional CFT models.
\end{abstract}

\newpage
\tableofcontents

\section{Introduction}
The Goddard-Kent-Olive (GKO) construction \cite{GKO} realizes minimal CFT$_2$ models as quotients, schematically\footnote{We would like to stress that equations \eqref{mrel0} and \eqref{eq:rel:intro:2} should not be viewed as relations between (product of) algebras, but as certain equivalences between conformal models with the indicated chiral symmetry structure. We believe that such of lack of precision is acceptable in the introductory section.} representing the space of states as
\begin{equation}
	\label{mrel0}
	V(k,m) \sim \frac{\asl{}_{,k}\times \asl{}_{,m}}{\asl{}_{,k+m}},
\end{equation}
where  $\asl{}_{,p}$ denote the $\asl$ Kac-Moody algebras at (integer) level $p.$
This result was used in \cite{Crnkovic:1989gy,Crnkovic:1989ug,Lashkevich:1992sb,Lashkevich:1993fb}
to derive a relation between the Virasoro minimal models $V(m)=V(1,m)$ and the ${\cal N}=1$  superconformal minimal models $SV(m)=V(2,m)$:
\begin{equation}
V(1)\otimes SV(m) \sim V(m)\otimes_P V(m+1), \;\;\;\;m=1,2,\dots,
\end{equation}
where the symbol $\otimes_P$ denotes a projected tensor product in which only selected pairs of conformal families are present.
The SL-LL relation of the form
\begin{equation}
	\label{eq:rel:intro:1}
	\hbox{free fermion}\,\otimes\,{\cal N} =1\ \hbox{super-Liouville}
	\hskip 8pt \sim \hskip 8pt
	\hbox{Liouville} \ \otimes_P\, \hbox{Liouville},
\end{equation}
can be seen as a non-rational counterpart of the relation (\ref{mrel0}). It was discussed in \cite{Bonelli:2011jx,Bonelli:2011kv,Belavin:2011sw} and explained  in \cite{Wyllard:2011mn}
in the context of the AGT correspondence \cite{Alday:2009aq}. 

The AGT relation was initially formulated for Virasoro conformal blocks on two dimensional Riemann surfaces and for SU(2) instanton partition functions in $\mathcal{N}=2$ supersymmetric gauge theories on $\mathbb{C}^2$ with the  $\Omega$ background \cite{Nekrasov:2002qd}. It was then generalized in various directions: the $\mathcal{N}=1$ super-Virasoro blocks turn out to correspond to SU(2) instanton partition functions on $\mathbb{C}^2/\mathbb{Z}_2$
\cite{Bonelli:2011jx, Bonelli:2011kv, Belavin:2011pp,  Belavin:2011tb,  Ito:2011mw, Belavin:2012aa},
conformal blocks of higher spin $\mathcal{W}_N$ algebra were related to SU$(N)$ instanton partition functions on $\mathbb{C}^2/\mathbb{Z}_N$ \cite{Wyllard:2009hg, Mironov:2009by}, and conformal blocks of the WZNW $\widehat{\mathfrak{sl}_n}$ level $k$ theory (with integer level $k$) turn out to be expressible through the SU$(k)$ instanton partition functions on $\mathbb{C}^2/\mathbb{Z}_n$ 
in the presence of surface operators \cite{Alday:2010vg,Foda:2019msm}.
As for the SL-LL relation itself, its proof  in the Neveu-Schwarz sector has been completed in \cite{SL-LL} while its  extension to the Ramond sector was analyzed in \cite{Schomerus:2012se}.

The relation
\begin{equation}
	\label{eq:rel:intro:2}
	\asl{}_{,k} \, \otimes\, \asl{}_{,1}
	\hskip 8pt \sim \hskip 8pt
	\hbox{Liouville}\, \otimes_P\, \asl{}_{,k+1}
\end{equation}
with an arbitrary level $k\in \mathbb{R}$ related to the central charge of the Liouville theory $c_{\scriptscriptstyle\mathrm{L}}$ as
\begin{equation}
\label{eq:Liouville:central:charge:through:k}
c_{\scriptscriptstyle\mathrm{L}} = 1-\frac{6}{(k+2)(k+3)},
\end{equation}
was put forward 
and analyzed in \cite{Jaskolski:2015uya}. It can be viewed as a non-trivial extension of the GKO coset construction of minimal models, with the  branching functions encoded in the projected tensor product.
Being  motivated by the SL-LL relation, the authors of \cite{Jaskolski:2015uya}
proposed to regard (\ref{eq:rel:intro:2}) as an exact equivalence of CFT$_2$ models. The proof of this equivalence was however missing in \cite{Jaskolski:2015uya}.

By a proof of the claim that the relation (\ref{eq:rel:intro:2}) is an equivalence between two quantum models we mean a demonstration that the corresponding correlation functions on both sides coincide. To achieve this one may start with equations (\ref{eq:J_curl}) and (\ref{eq:T_Vir}), which express generators of the 
$\mathsf{Vir}\, \oplus \, \asl{}_{,k+1}$ algebra (with the Virasoro algebra $\mathsf{Vir}$ being the chiral symmetry algebra of the Liouville theory) through generators of the $\asl{}_{,k} \, \oplus\, \asl{}_{,1}$ algebra and allow to consider representation spaces of the former as sub-spaces of representations spaces of the latter. Then one should:
\begin{enumerate}
\renewcommand{\theenumi}{\emph{\roman{enumi}}}
\item make the statement (\ref{eq:rel:intro:2}) precise by specifying which representations of the $\asl{}_{,k}\, \oplus\, \asl{}_{,1}$ we include in the spectrum of the theory, 
\item decompose these representations of $\asl{}_{,k}\, \oplus\, \asl{}_{,1}$ according to the action of $\mathsf{Vir}\, \oplus\, \asl{}_{,k+1}.$
\end{enumerate}
As for the point $i,$ in the present paper we are working with the highest weight representations of the involved algebras: Verma modules $\V^{k,j}\!$ of $\asl{}_{,k}$ (with general complex $k,j$) and the irreducible highest weight modules $\H^{1,\epsilon}$ of $\asl{}_{,1}$ (with $\epsilon =0$ or $\frac12$).

The decomposition we are after is discussed in Section \ref{section:tensor:product}. 
Our main findings relevant for the point $ii$ can be provisionally summarized as:
\begin{theorem} \label{thm:main_thm}
\begin{enumerate}
    \item For every $n \in \mathbb Z$ there exists a nonzero vector in $\V^{k,j} \otimes \H^{1,\epsilon} $ which is a~highest weight vector for $\mathsf{Vir}\, \oplus\, \asl{}_{,k+1}$ of conformal weight $\Delta_{k,j,\epsilon}^n$, given in \eqref{eq:highest_weights_Vir}, and $\asl{}_{,k+1}$ charge $j + \epsilon + n$;
    \item explicit form of this vector is given in Eq.\ \eqref{eq:special_states_Verma} (up to a possible re-normalization, see~Proposition \ref{prop:renormalization} and Corollary \ref{cor:existence:of:hws});
    \item for generic $k,j,$ $\V^{k,j} \otimes \H^{1,\epsilon}$ decomposes into a direct sum of Verma modules for $\mathsf{Vir}\, \oplus\, \asl{}_{,k+1}$ as in \eqref{eq:module:decomp}. 
\end{enumerate}
\end{theorem}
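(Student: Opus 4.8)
The plan is to package the family of highest weight vectors from parts $(1)$–$(2)$ into a single comparison map and then promote it to an isomorphism by matching graded dimensions. For each $n\in\Z$ let $\mu_n=j+\epsilon+n$ be the $\asl{}_{,k+1}$ charge and $\Delta^n=\Delta^n_{k,j,\epsilon}$ the Virasoro weight of the vector $w_n$ produced in \eqref{eq:special_states_Verma} (renormalized as in Proposition~\ref{prop:renormalization} and Corollary~\ref{cor:existence:of:hws}). Sending the canonical highest weight generator of the outer tensor product $V_{\Vir}(\Delta^n)\otimes\V^{k+1,\mu_n}$ to $w_n$ defines, by universality of Verma modules, a homomorphism of $\Vir\oplus\asl{}_{,k+1}$–modules
\[
\Phi\colon\ \bigoplus_{n\in\Z} V_{\Vir}(\Delta^n)\otimes\V^{k+1,\mu_n}\ \longrightarrow\ \V^{k,j}\otimes\H^{1,\epsilon},
\]
and the assertion \eqref{eq:module:decomp} is precisely the statement that $\Phi$ is an isomorphism for generic $(k,j)$.

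First I would record that $\Phi$ respects the bigrading by the total $L_0$ (tracked by a variable $q$) and the total Cartan charge $J_0$ (tracked by $z$), and that every bihomogeneous piece is finite dimensional on both sides. A Verma/PBW count gives for the left-hand side the product $\mathrm{ch}\,\V^{k,j}\cdot\mathrm{ch}\,\H^{1,\epsilon}$, while the right-hand side is $\sum_n \mathrm{ch}\,V_{\Vir}(\Delta^n)\cdot\mathrm{ch}\,\V^{k+1,\mu_n}$. The denominators of the two expressions coincide, and after cancelling them the theta series $\sum_{m}q^{(m+\epsilon)^2}z^{2(m+\epsilon)}$ coming from the level-one module $\H^{1,\epsilon}$ reproduces precisely the index $n\in\Z$; matching the coefficient of each $z^{2\mu_n}$ then reduces the character identity to the additivity of conformal weights encoded in the definition \eqref{eq:highest_weights_Vir} of $\Delta^n$. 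Consequently the two sides have equal graded dimensions, so $\Phi$ is an isomorphism as soon as it is injective (equivalently surjective).

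Injectivity for generic $(k,j)$ I would obtain from irreducibility. For $(k,j)$ outside a countable union of resonance loci the relevant Kac (Shapovalov) determinants do not vanish, so each Virasoro Verma $V_{\Vir}(\Delta^n)$ and each affine Verma $\V^{k+1,\mu_n}$ is irreducible; since both are countable dimensional, $\mathrm{End}=\C$ by Schur's lemma, and the outer tensor product of two absolutely irreducible modules over $\Vir\oplus\asl{}_{,k+1}$ is itself irreducible by a Jacobson density argument. Each summand therefore maps injectively, its image being nonzero because $w_n\neq 0$; and as the charges $\mu_n$ are pairwise distinct, the images are pairwise non-isomorphic simple submodules. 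A sum of pairwise non-isomorphic simple submodules is automatically direct, whence $\Phi$ is injective and thus an isomorphism, which is \eqref{eq:module:decomp}.

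The main obstacle is the control of genericity together with the outer-tensor irreducibility step: one must verify that the bad loci coming from all (countably many) bidegrees, together with the poles of the renormalization of $w_n$, really sweep out only a countable union of proper subvarieties, so that the complement is nonempty and deserves to be called generic. An alternative that sidesteps the density lemma is to argue injectivity bidegree by bidegree: by the character identity $\Phi$ is square in each bidegree, and its determinant is a rational function of $(k,j)$ which one need only show is not identically zero—most cleanly by specialising to, or taking a limit of, the Wakimoto free-field realization underlying the construction of the $w_n$, where the matrix of $\Phi$ becomes triangular. I expect this determinant computation, rather than any conceptual point, to be where the real work lies.
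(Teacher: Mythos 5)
Your argument for part (3) is essentially the paper's own proof of Theorem \ref{th:character:decomposition}: build the comparison homomorphism by sending the highest weight generator of each summand to $v^n_{\kappa,j,\epsilon}$, get injectivity for generic $(k,j)$ from irreducibility and pairwise non-isomorphism of the summands, and get surjectivity from the character identity \eqref{eq:chracter:decomposition}; the Schur/density and ``sum of non-isomorphic simples is direct'' points you spell out are exactly what the paper leaves implicit. Since you correctly defer parts (1)--(2) to the construction \eqref{eq:special_states_Verma}, Proposition \ref{prop:renormalization} and Corollary \ref{cor:existence:of:hws} (which is where the paper's actual hard work lies, rather than in any determinant for $\Phi$), the proposal is correct and follows the same route.
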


Point 3.\ of Theorem \ref{thm:main_thm} is known, see e.g.\ \cite{Bershtein:2013oka} (and \cite{Iohara:2002be} for a related statement for so-called admissible modules, which are not generic). The most important outcome of this work is an explicit expression \eqref{eq:special_states_Verma} for highest weight vectors $v^n_{\kappa,j,\epsilon}$, i.e.\  points 1.\  and 2.\  of Theorem \ref{thm:main_thm}. This construction is not restricted to generic $k,j$.



Now, in any two-dimensional CFT there exists a state-operator map. This map associates with a~highest weight vector $v_{k,j},$ c.f.\ eq.\ (\ref{eq:hw_conditions}), a primary field $\Phi( v_{k,j}|z).$ 
For the highest weight vectors defined by the $\asl$ level $k$ algebra (\ref{eq:sl2:algebra}) and for $z \in \mathbb{C},$  the primary field $\Phi( v_{k,j}|z)$ satisfies commutation relations:
\begin{equation}
\label{eq:primary:field:commutation:definition}
\left[\sJ^0_0,\Phi( v_{k,j}|z)\right] = j\Phi( v_{k,j}|z),
\hskip 1cm
\left[\sJ^+_0,\Phi( v_{k,j}|z)\right] = \left[\sJ^a_n,\Phi( v_{k,j}|z)\right] = 0,
\hskip 3mm n>0,\; a = \pm,0,
\end{equation}
as well as:
\begin{equation}
\left[L_{-1},\Phi( v_{k,j}|z)\right] = \partial_z\Phi( v_{k,j}|z),
\end{equation}
with $L_{-1}$ defined by (\ref{eq:T:expanded:in:modes}). Any other states are obtained by acting on $v_{k,j}$ with a symmetry algebra generators and the corresponding fields --- descendants of $\Phi( v_{k,j}|z)$ --- are defined as in \cite{Knizhnik:1984nr}.


The $z_i-$dependent factor of the three-point correlation function of primary fields\footnote{
To be precise, the $\sJ^0_0$ conservation law implies that the three-point correlation function of $\asl$ \emph{primary} fields vanish unless $j_1+j_2+j_3=0.$ To remedy this one replaces at least on of the fields $\Phi(v_{k,j}|z)$ with the ``primary field in the izospin basis'', $\Phi(x|z) = \Phi\left(\mathrm{e}^{x\sJ^-_0}v_{k,j}|z\right), x\in {\mathbb C}.$ } 
$\Phi(v_{k,j_i}|z_i)$, $i=1,2,3,$ is universal (its form follows from the global conformal invariance) and the essential information this function carries has the form of the three-point structure \emph{constant}, a function of the level $k$ and weights $j_i.$ Structure constants of the Liouville theory were computed  in \cite{Dorn:1994xn,Zamolodchikov:1995aa,Teschner:1995yf,Teschner:2001rv} and the structure constants of the $H^+_3 = \mathrm{SL}(2,\mathbb{C})/\mathrm{SU}(2)$ coset model (with its underlying $\asl$ chiral symmetry algebra) were determined by the bootstrap method in  \cite{Gawedzki:1991yu,Teschner:1997fv,Teschner:1997ft,Teschner:1999ug}  and by path integral methods in \cite{Ishibashi:2000fn,Hosomichi:2000bm,Hosomichi:2001fm,Satoh:2001bi}\footnote{On the other side of the $c=1,$ resp. $k=-2$ ``bareer'' we  have, in the case of the Liouville theory, the so called ``generalized minimal models'' 
\cite{Zamolodchikov:2005fy,Kostov:2005kk,Ribault:2015sxa} 
and in the case of the $H^+_3$ model the (real) SU$(2)$ WZNW model,
\cite{Gawedzki:1988hq,Gawedzki:1988nj,Gawedzki:1989rr}.}


Formula (\ref{eq:special_states_Verma}) for the $\mathsf{Vir}\,\oplus\,\asl{}_{,k+1}$  highest weight vectors $v^n_{\kappa,j,\epsilon}$ 
expresses them  explicitly as  ``excited'' states in the highest weight representation space of  the $\asl{}_{,k} \, \oplus\, \asl{}_{,1}$ algebra.  The corresponding (via state - operator map) fields are  descendants of the $\asl{}_{,k} \, \oplus\, \asl{}_{,1}$ primary fields.
Therefore, when computing three-point correlation functions of fields $\Phi\big(v^{n}_{\kappa,j,\epsilon}\big|z\big),$ we can treat them either as primary fields with respect to the $\mathsf{Vir}\, \oplus\, \asl{}_{,k+1}$
algebra and express the correlator in question directly through the product of the Liouville and $H^+_3/$WZNW (level $k+1$) structure constants, or use the $\asl$ Ward identities and express the very same three-point correlation function through the product of the corresponding three-point $\asl$ blocks and the $H^+_3/$WZNW  structure constants (at levels 1 and $k$). 

\newpage
The third step in the proof of the equivalence statement then consists in:
\begin{enumerate}
\renewcommand{\theenumi}{\emph{\roman{enumi}}}
\setcounter{enumi}{2}
\item computing the ratio of three-point correlation function
\begin{equation}
\label{eq:three:point:ratios}
\left\langle
\Phi\big(v^{n_3}_{\kappa,j_3,\epsilon_3}\big|z_3\big)
\Phi\big(v^{n_2}_{\kappa,j_2,\epsilon_2}\big|z_2\big)
\Phi\big(v^{n_1}_{\kappa,j_1,\epsilon_1}\big|z_2\big)
\right\rangle
\end{equation}
and the product of structure constants of the Liouville and $H^+_3/$WZNW  theories in two ways and comparing the results.
\end{enumerate}
The ratio of the corresponding three-point constants was calculated in \cite{Jaskolski:2015uya}, but only the construction of vectors $v^n_{\kappa,j,\epsilon}$ allows to compute the relevant three-point block. As expected from the results of \cite{Jaskolski:2015uya}, this block can be presented as a product of factors being affine functions of $j_1, j_2$ and $j_3.$ 
We shall present the full derivation of the form of the three-point block in question, using the free field methods presented here, in the forthcoming publication \cite{HR2024_to_come}.

Since in a conformal field theory the higher point correlation functions are uniquely determined by the three-point structure constants and the chiral symmetry algebra,  completion of tasks $i - iii$ will essentially complete the program initiated in \cite{Jaskolski:2015uya}\footnote{We leave aside a question of modular invariance of the related theories; based on the result of \cite{SL-LL} and \cite{Jaskolski:2015uya} we expect that the spectrum of the theory on the r.h.s.\ of (\ref{eq:rel:intro:2}) will be non-diagonal, with the relatively shifted scaling dimension of chiral and anti-chiral components, and the modular invariance of such a theory need to be separately checked.}.

Let us now discuss the content of the present paper in more detail.
As a preparation to study tensor products, in Section \ref{sec:prelim} we review the relevant elements of representation theory of $\asl$. Most of Section \ref{sec:prelim} consists of known results. We hope that it will help some readers in navigating the (rather technical) existing mathematical literature on the subject. We would like to highlight three parts of Section \ref{sec:prelim}. Subsection \ref{sec:sing_vec} is devoted to singular vectors in Verma modules of $\asl$, in particular the explicit construction of singular vectors due to Malikov, Feigin and Fuchs (MFF) \cite{MFF}. The formula obtained in \cite{MFF} involves complex powers of Lie algebra elements. A rigorous interpretation of such expressions was given in \cite{MFF}, but we found it to be not very convenient. We present our own approach based on noncommutative localization. The required background material is included in Appendix \ref{app:localization}. In Subsection \ref{sec:norm} we state a~conjecture \eqref{eq:singular:vector:norms:result} on normalizations factors of singular vectors. Subsection \ref{sec:fermi} is devoted to representations of $\asl$ at level $k=1$ constructed in terms of free fermions.

In order to analyze representations of $\asl$ with general $k$ we consider Wakimoto representations, which are introduced in Section \ref{sec:Wakimoto}. In the Wakimoto representations elements of $\asl$ are expressed in terms of three types of free field operators. One of the $\asl$ currents is linear in free fields, one is quadratic and one is cubic. The fact that some generators of the algebra become linear in free fields is responsible for significant simplifications in some calculations, for example one of the three free fields involved in the construction of the Wakimoto representation does not appear in expressions for singular vectors. After introducing two types of Wakimoto representations, in Subsection \ref{sec:bases} we study their degenerations, or equivalently, singularities of the transformation between $\sJ$ operators and free fields. In particular we show that the celebrated determinant of Kac and Kazhdan \cite{KK} factors into two (explicit) expressions which control singularities of the two Wakimoto modules. One consequence of this factorization is that in each type of Wakimoto modules one sees only ``half'' of singular vectors. In Subsection \ref{sec:screenieng_charges} we discuss how these singular vectors can be constructed using screening charges. This construction is particularly useful in the limit $k \to \infty$, where we obtain simple asymptotics. By contrast, structure of vectors obtained in the MFF construction is easier to understand for small $k$.

Section \ref{section:tensor:product} contains the construction of the highest weight vectors $v^n_{\kappa,j,\epsilon}$. We remark that in order to define $v^n_{\kappa,j,\epsilon}$ we need the Wakimoto representations, but we could do without the free fermion representation. However, the free fermion representation was crucial in checking that $v^n_{\kappa,j,\epsilon}$ satisfy highest weight equations. Moreover, we needed to include certain explicit normalization factors in $v^n_{\kappa,j,\epsilon}$ to cancel singularities of the transformation from free fields to $\sJ$ operators. The calculation which allowed to find these normalization factors relies on the detailed understanding of singular vectors in Wakimoto representations. Having found the highest weight vectors, we finally deduce Theorem \ref{th:character:decomposition}, which, for generic weights, gives the decomposition of the tensor product of Verma modules for $\asl{}_{,k}$ and $\asl{}_{,1}$ into irreducible representations of $\asl{}_{,k+1}$ and $\Vir$ and implies a corresponding decomposition of characters. 

Let us finally notice that since Wakimoto representation, being in a technical heart of the present work, is known also for higher rank affine algebras, it is conceivable that results presented here can be extended beyond rank 2 case.

\textbf{Note added.} A few months after this work was made public in the ArXiv database there appeared a paper 
\cite{Bershtein:2024kwe},
which contains results (sometimes expressed in a different representation) partially overlapping with ours and partially extending them. 




\section{Preliminaries on \texorpdfstring{$\asl$}{affine sl2}} \label{sec:prelim}
\subsection{Representations and currents} \label{sec:currents}

$\asl$ is a complex Lie algebra with basis $\{ {\mathsf J}^{\pm}_n, {\mathsf J}^0_n \}_{n \in \mathbb Z} \cup \{ \mathsf K \}$. The nonzero commutators of basis elements are:
\begin{equation}
\label{eq:sl2:algebra}
[{\mathsf J}_n^0,{\mathsf J}_m^{\pm}] = \pm {\mathsf J}_{n+m}^{\pm}, \quad [{\mathsf J}_n^+,{\mathsf J}_m^-] = 2 {\mathsf J}_{n+m}^0 +  n\delta_{n+m}  \mathsf K, \quad [{\mathsf J}_n^0,{\mathsf J}_m^0] = \frac{n}{2} \delta_{n+m} \mathsf K.
\end{equation}
It is sometimes convenient to write these relations as $[{\mathsf J}_n^a, {\mathsf J}_m^b] = \sum \limits_c f^{ab}_{\ \ c} {\mathsf J}_{n+m}^c + n \delta_{n+m} h^{ab} \mathsf K$, with self-evident $f^{ab}_{\ \ c}$ and $h^{ab}$ tensors. 

All representations $V$ of $\asl$ considered in this paper have the following properties:
\begin{enumerate}
    \item \label{ass:field} For every $v \in V$ there exists $n$ such that $\mathsf J^a_m v =0$ for $m \geq n$,
    \item \label{ass:level} $\mathsf K$ acts as a scalar $k \neq -2$, called the level.
\end{enumerate}

Currents $\mathsf J^a(z)$ are defined as the following operator-valued formal series:
\begin{equation}
    {\mathsf J}^a(z) = \sum_{n \in \Z} \frac{{\mathsf J}^a_n}{z^{n+1}}.
    \label{eq:current_modes}
\end{equation}
Assumption \ref{ass:field} is equivalent to the statement that for every $v \in V$, the series ${\mathsf J}^a(z)v$ has finitely many nonzero terms with negative powers of $z$. This is the definition of a field e.g.~in \cite{Vertex}. 

Two fields $A,B$ are said to be local with respect to each other if\footnote{For fermionic fields, which will be used later, this formula is modified by a sign.} for some integer $N$ 
\begin{equation}
(z-w)^NA(z)B(w)=(z-w)^N B(w) A(z).
\end{equation}
This simple assumption is enough to prove existence of operator product expansions (OPE) 
\begin{equation}
    A(z) B(w) \sim \frac{C_n(w)}{(z-w)^N} + \dots + \frac{C_1(w)}{z-w},
\end{equation}
where $\sim$ denotes expansion for $z \to w$ with regular terms omitted. The order $(z-w)^0$ term in the expansion is denoted $:\!A(w)B(w)\!:$ and called the normal (or~normally ordered) product of $A$ and $B$. If the product $A(z)B(w)$ is nonsingular for $z \to w$ (so $A(z)B(z)$ makes sense and equals $:\!A(z)B(z)\!:$), we write
\begin{equation}
    A(z) B(w) \sim \mathrm{reg}.
\end{equation}
This is equivalent to $A(z)B(w)=B(w) A(z)$.

Commutation relations of operators $\mathsf J^a_n$ are equivalent to the statement that fields $\mathsf J^a$ are local and satisfy the OPE
\begin{equation}
    {\mathsf J}^a(z) {\mathsf J}^b(w) \sim \frac{1}{z-w} \sum_c f^{ab}_{\ \ c} {\mathsf J}^c(w) + \frac{h^{ab} k}{(z-w)^2}.
    \label{eq:current_OPE}
\end{equation}

Sugawara's construction of the Virasoro field gives
\begin{equation}
    T(z) = \frac{1}{k+2} \left( :{\mathsf J}^0(z) {\mathsf J}^0(z): + \frac12 :{\mathsf J}^+(z) {\mathsf J}^-(z): + \frac12 :{\mathsf J}^-(z) {\mathsf J}^+(z) \right).
    \label{eq:Sugawara}
\end{equation}
$T$ satisfies the following OPEs:
\begin{gather}
T(z) T(w)  \sim \frac{\partial T(w)}{z-w} + \frac{2 T(w)}{(z-w)^2} + \frac{c_k}{2 (z-w)^4}, \quad \text{with } c_k = \frac{3k}{k+2}, \\
    T(z) {\mathsf J}^a(w)  \sim \frac{\partial {\mathsf J}^a(w)}{z-w} + \frac{{\mathsf J}^a(w)}{(z-w)^2}. \nonumber
\end{gather}
This implies that operators $L_n$ in the series expansion
\begin{equation}
\label{eq:T:expanded:in:modes}
    T(z) = \sum_{n \in \Z} \frac{L_n}{z^{n+2}}
\end{equation}
satisfy the commutations relations: 
\begin{equation}
 [L_n, L_m] = (n-m) L_{n+m} + \frac{c_k}{12} n (n^2-1) \delta_{n+m,0}, \qquad  [L_n , {\mathsf J}^a_m ] = - m {\mathsf J}^a_{n+m}.
\end{equation}
In particular, $V$ is a representation of the Virasoro algebra with central charge $c_k$.

\begin{lemma} \label{lem:Virasoro_invariant_submodules}
    If $V' \subset V$ is a submodule, i.e.\ it is mapped to itself by every $\mathsf J^a_n$, then $L_n V \subset V$ for every $n$ as well.
\end{lemma}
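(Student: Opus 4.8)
The plan is to use the explicit description of each $L_n$ as a mode sum of quadratic monomials in the $\mathsf J^a_m$, and then to exploit that a submodule is by definition stable under every such monomial. First I would expand the normally ordered products in the Sugawara formula \eqref{eq:Sugawara} into modes. Using the convention \eqref{eq:current_modes} together with the rule that annihilation modes stand to the right, each normally ordered product $:\!\mathsf J^a\mathsf J^b\!:$ contributes to $L_n$ a term of the shape
\[
\sum_{m \le -1} \mathsf J^a_m \mathsf J^b_{n-m} + \sum_{m \ge 0} \mathsf J^b_{n-m} \mathsf J^a_m ,
\]
so that $L_n$ is, up to the overall factor $(k+2)^{-1}$, a doubly-infinite sum of operators of the form $\mathsf J^a_p \mathsf J^b_q$ with $p+q=n$.

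The key observation is then immediate: if $v \in V'$ and $V'$ is stable under every mode $\mathsf J^c_\ell$, then $\mathsf J^a_p \mathsf J^b_q v \in V'$ for each monomial, since $\mathsf J^b_q v \in V'$ and a second application of $\mathsf J^a_p$ keeps us in $V'$. Thus every individual term in the mode expansion of $L_n$ maps $V'$ into itself, and the only genuine issue is that this sum is infinite, so term-by-term invariance does not by itself yield $L_n v \in V'$ unless the sum collapses to a finite one on $v$.

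This is precisely where Assumption \ref{ass:field} enters. Fix $v \in V'$ and choose $N$ with $\mathsf J^c_\ell v = 0$ for all $\ell \ge N$. In the first sum the inner factor $\mathsf J^b_{n-m} v$ vanishes once $n-m \ge N$, i.e.\ once $m \le n-N$, leaving only the finitely many indices $n-N < m \le -1$; in the second sum $\mathsf J^a_m v$ vanishes once $m \ge N$, leaving only $0 \le m < N$. Hence $L_n v$ is a \emph{finite} sum of vectors $\mathsf J^a_p \mathsf J^b_q v$, each lying in $V'$, and since $V'$ is a linear subspace we conclude $L_n v \in V'$. As $v$ was arbitrary, $L_n V' \subset V'$, which is the claim.

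I expect no serious obstacle here: the entire content is the bookkeeping of the two summation ranges and the invocation of the field property to guarantee finiteness. An alternative that avoids writing the mode formula is to argue at the level of fields --- since each $\mathsf J^a(z)$ preserves $V'$ coefficientwise and normal ordering only reshuffles modes, the coefficients of $T(z)$ act on any given vector as finite compositions of the $\mathsf J^a_m$ --- but I would present the mode computation, as it makes the crucial finiteness statement fully explicit.
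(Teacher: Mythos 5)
Your proof is correct and follows the same route as the paper: the paper's one-line argument is precisely that Assumption \ref{ass:field} makes $L_n v$ equal to $Tv$ for a second-order polynomial $T$ in finitely many modes $\mathsf J^a_m$, which is exactly the finiteness bookkeeping you carry out explicitly. (You also correctly read the statement as $L_n V' \subset V'$, which is clearly what is intended.)
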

\begin{proof}
Assumption \ref{ass:field} implies that for every $v \in V$, $L_n v=Tv$ for some operator $T$ which is a~second order polynomial in finitely many $\mathsf J^a_n$ (depending on $v$).    
\end{proof}

We are now ready to state another crucial property of representations we study:
\begin{enumerate}
 \setcounter{enumi}{2}
    \item \label{ass:diag} Operators $L_0$ and $\mathsf J_0^0$ are diagonalizable.
\end{enumerate}
Since $L_0$ commutes with $\mathsf J_0^0$, assumption \ref{ass:diag} implies
\begin{equation}
V= \bigoplus \limits_{\Delta, j \in \C} V_{\Delta, j}, \qquad V_{\Delta, j} = \{ v \in V  | \, L_0 v = \Delta v, \ {\mathsf J}_0^0 v = jv \}.
\label{eq:weight_decomp}
\end{equation}

Lemma \ref{lem:Virasoro_invariant_submodules} shows that if $V$ satisfies assumptions 1-3, then every submodule $V'$ and every quotient $V/V'$ also satisfies these assumptions, and we have
\begin{equation}
    V'_{\Delta,j} = V' \cap V_{\Delta,j}, \qquad (V/V')_{\Delta,j}= V_{\Delta,j} / V'_{\Delta,j}.
\end{equation}

A bilinear form $(\cdot, \cdot)$ on $V$ is said to be contragradient if for all $v,v' \in V$ we have
\begin{equation}
(v, \mathsf J^a_n v') = (\mathsf J^{-a}_{-n} v,v') \qquad \forall v,v' \in V.
\end{equation}
This implies that we also have $(v,L_n v')= (L_{-n} v,v')$. By specializing to $n=0$ in both equations, we see that $(v,v')=0$ if $v \in V_{\Delta,j}$, $v' \in V_{\Delta',j'}$ and $(\Delta,j) \neq (\Delta',j')$. 

The contragradient dual of $V$ is defined as
\begin{equation}
V^{\vee} = \bigoplus_{\Delta,j} V_{\Delta,j}^{\vee},
\end{equation}
where $V_{\Delta,j}^{\vee}$ is the dual space of $V_{\Delta,j}$. Action of $\asl$ on $V^{\vee}$ is defined by
\begin{equation}
({\mathsf J}^a_n \varphi)(v) = \varphi ({\mathsf J}^{-a}_{-n}v) \qquad \mathrm{for} \ \varphi \in V^{\vee}, \, v \in V.
\end{equation}
Assumptions 1-3 for $V$ imply that the same is true for $V^\vee$. A contragradient bilinear form on $V$ is the same as a module homomorphism $V \to V^\vee$.

\begin{remark}
If $k = -2$, it is not possible to define an operator $L_0$ using the construction \eqref{eq:Sugawara}. In some cases it is still possible to define a diagonalizable operator $\delta$ such that $[\delta,\mathsf J^a_n]= -n \mathsf J^a_n$, and hence to define a decomposition analogous to \eqref{eq:weight_decomp}. Asking for existence of such $\delta$ amounts to studying representations of the Lie algebra $\asl$ with $\delta$ suplemented as an independent generator. This enlarged algebra is preferred in much of the literature.
\end{remark}

\subsection{Highest weight modules}

We let $\aslp$ (resp. $\aslm$) be the subalgebra of $\asl$ generated by ${\mathsf J}_0^+$ and $\{ {\mathsf J}_n^a \}_{n \geq 1}$ (resp.~${\mathsf J}_0^-$ and $\{ {\mathsf J}_n^a \}_{n \leq -1}$). A nonzero element $v \in V$ is said to be a highest weight vector if it satisfies
\begin{equation}
 \mathsf K v = k v, \qquad  {\mathsf J}_0^0 v = j v, \qquad \aslp v =0.
 \label{eq:hw_conditions}
\end{equation}
The pair $(k,j) \in \C^2$ (or just $j$, if $k$ is clear from the context) is called the weight of~$v$. Conditions \eqref{eq:hw_conditions} imply:
\begin{equation}
L_0 v = \Delta_{k,j} v, \ \ \text{where} \ \ \Delta_{k,j}= \frac{j(j+1)}{k+2}, \qquad L_n v =0 \ \ \text{for} \ \ n > 0.
\end{equation}

Representation $V$ is said to be a highest weight module if there exists a highest weight vector $v \in V$ which is cyclic. Such $v$ is unique up to multiples, see Proposition~\ref{prop:hw_weights}. Its weight is called the highest weight of $V$. 

\begin{proposition}
\label{prop:hw_weights}
Let $k \neq -2$ and let $V$ be a highest weight module with highest weight $(k,j).$ Assumptions 1-3 in Subsection \ref{sec:currents} are satisfied. Subspaces $V_{\Delta,j}$ are finite dimensional. $V_{\Delta,j}=0$ for all $(\Delta,j)$ which are not of the form
\begin{equation}
    [n,m]= (\Delta_{k,j}+n,j+n-m), \qquad n,m \in \N.
    \label{eq:weights_grading}
\end{equation}
We have $\dim(V_{[0,0]})=1$. Nonzero elements of $V_{[0,0]}$ are the only cyclic highest weight vectors of~$V$. There exists a proper submodule $\rad(V) \subsetneq V$ containing all proper submodules of $V$. $V / \rad(V)$ is irreducible.
\end{proposition}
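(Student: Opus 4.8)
The plan is to verify the several claims of Proposition~\ref{prop:hw_weights} in a logical order, building on the grading structure induced by the highest weight vector. Let $v$ be a cyclic highest weight vector of weight $(k,j)$. First I would establish the weight grading \eqref{eq:weights_grading}. Since $V$ is generated by $v$ under the action of $\asl$, and since by the Poincar\'e--Birkhoff--Witt theorem $V$ is spanned by vectors of the form $\mathsf J^{a_1}_{-n_1} \cdots \mathsf J^{a_r}_{-n_r} v$ with the modes ordered, I would use the commutation relations \eqref{eq:sl2:algebra} together with $[L_0,\mathsf J^a_m]=-m\mathsf J^a_m$ and $[\mathsf J^0_0,\mathsf J^a_m]=a\,\mathsf J^a_m$ to compute how each generator shifts the eigenvalues of $L_0$ and $\mathsf J^0_0$. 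Each $\mathsf J^-_{-n}$ or $\mathsf J^0_{-n}$ or $\mathsf J^+_{-n}$ raises $L_0$ by $n\ge 0$ (by $0$ only for $\mathsf J^\pm_0,\mathsf J^0_0$ acting trivially or as scalars), and the $\mathsf J^0_0$-charge changes by $+1,0,-1$ respectively. Tracking this shows every weight appearing is of the form $(\Delta_{k,j}+n, j+n-m)$ with $n,m\in\N$, i.e.\ \eqref{eq:weights_grading}, where $n$ records the total $L_0$-grading and $m$ the deficit in $\mathsf J^0_0$-charge. Finite-dimensionality of each $V_{\Delta,j}$ follows because only finitely many ordered PBW monomials have a given total degree $n$.

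Next I would prove that $\dim V_{[0,0]}=1$ and identify the cyclic highest weight vectors. The space $V_{[0,0]}$ consists of vectors of minimal $L_0$-eigenvalue $\Delta_{k,j}$; the grading analysis shows that the only PBW monomials landing in $V_{[0,0]}$ are scalar multiples of $v$ itself (any application of a lowering mode $\mathsf J^a_{-n}$ with $n>0$ strictly raises $\Delta$, and $\mathsf J^-_0,\mathsf J^0_0$ either lower the charge or act as a scalar, pushing out of $[0,0]$). Hence $V_{[0,0]}=\C v$. For the identification of cyclic highest weight vectors, I would argue that any highest weight vector $w$ must be a simultaneous eigenvector of $L_0$ and $\mathsf J^0_0$, hence lies in some $V_{\Delta,j'}$; cyclicity forces it to generate all of $V$, and since $V$ has $v$ as a vector of minimal conformal weight, $w$ must itself sit at the bottom, i.e.\ in $V_{[0,0]}=\C v$. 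That assumptions 1--3 hold is essentially immediate: assumption~\ref{ass:field} holds since each vector is a finite PBW combination and raising modes eventually annihilate it; assumption~\ref{ass:level} is the hypothesis $k\neq-2$; and assumption~\ref{ass:diag} follows from the weight decomposition just established.

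The heart of the proposition, and the step I expect to be the main obstacle, is the existence of a unique maximal proper submodule $\rad(V)$ and the irreducibility of the quotient. The key observation is that a submodule $V'$ is proper \emph{if and only if} it does not contain $v$, equivalently $V'_{[0,0]}=0$: indeed if $v\in V'$ then cyclicity gives $V'=V$, and conversely if $v\notin V'$ then since $V'$ is graded (it is preserved by $L_0$ and $\mathsf J^0_0$, using Lemma~\ref{lem:Virasoro_invariant_submodules} for $L_0$) and $V_{[0,0]}=\C v$ is one-dimensional, we must have $V'_{[0,0]}=0$. I would then define $\rad(V)$ to be the sum of all proper submodules. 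The crucial point is that this sum is again proper: each proper submodule $V'$ satisfies $V'_{[0,0]}=0$, and since taking the $[0,0]$-graded piece commutes with sums of graded subspaces, $\rad(V)_{[0,0]}=\sum_{V'} V'_{[0,0]}=0$, so $\rad(V)$ does not contain $v$ and is therefore proper. By construction $\rad(V)$ contains every proper submodule, so it is the unique maximal one. Finally, irreducibility of $V/\rad(V)$ follows because any nonzero submodule of the quotient pulls back to a submodule of $V$ strictly larger than $\rad(V)$, which by maximality must equal $V$, forcing the submodule of the quotient to be everything. The delicate part throughout is ensuring all submodules are automatically graded with respect to the $(L_0,\mathsf J^0_0)$-decomposition; this relies on $L_0$ preserving submodules, which is exactly the content of Lemma~\ref{lem:Virasoro_invariant_submodules}, together with diagonalizability from assumption~\ref{ass:diag}.
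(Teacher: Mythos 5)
Your proposal is correct in structure and follows essentially the same route as the paper: the PBW grading argument for \eqref{eq:weights_grading} and $\dim V_{[0,0]}=1$, and the observation that a submodule is proper iff its $[0,0]$-component vanishes (via Lemma \ref{lem:Virasoro_invariant_submodules}), so that the sum of all proper submodules is still proper. One small imprecision: in identifying the cyclic highest weight vectors you argue that minimality of the conformal weight alone forces $w$ into $V_{[0,0]}$, but the bottom $L_0$-eigenspace is $\bigoplus_{m}V_{[0,m]}$, spanned by the vectors $(\mathsf J_0^-)^m v$, and is not one-dimensional in general; you must also run your cyclicity argument in the charge direction (as the paper does: if $w\in V_{[n,m]}$ is cyclic then $V_{[n',m']}=0$ whenever $n'<n$ \emph{or} $m'<m$, forcing $n=m=0$). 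The fix uses only the idea you already have, so this does not affect the overall validity of the approach.
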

\begin{proof}
Let $v$ be a highest weight vector of $V$. By Poincar\'e-Birkhoff-Witt theorem, $V$ is spanned by elements $v' = \prod_{i=1}^N \mathsf J^{a_i}_{-n_i} \cdot v$ with each $J^{a_i}_{-n_i} \in \aslm$. Such $v'$ satisfies $\mathsf K v' = k v'$ and is annihilated by $\mathsf J^a_n$ if $n > \sum_{i=1}^N n_i$. We have $v' \in V_{[n,m]}$, with
\begin{equation}
    n = \sum_{i=1}^N n_i , \qquad m = \sum_{i=1}^N (n_i-a_i). 
    \label{eq:nm_defi}
\end{equation}
$n,m$ are natural numbers because each $n_i$ and each $n_i-a_i$ is a natural number. There exists finitely many sequences of $n_i$ and $a_i$ such that \eqref{eq:weight_decomp} holds for a given $n,m$, so $\dim(V_{[n,m]}) < \infty$. If $n=m=0$, the only such sequence is the empty sequence, corresponding to $v'=v$. 

Now let $w$ be another highest weight vector of $V$. Then $w \in V_{[n,m]}$ for some $n,m \in \N$. Since $w$ is cyclic, $V_{[n',m']}=0$ if $n' < n$ or $m' < m$. Thus $n=m=0$, so $w $ is proportional to $v$.

Suppose that $V' \subsetneq V$ is a submodule. Then $v \not \in V'$, so $V'_{[0,0]}=0$ by Lemma~\ref{lem:Virasoro_invariant_submodules}. This shows that the sum of proper submodules of $V$ is a proper submodule. Let $\rad(V)$ be the union (or sum) of all proper submodules of $V$. $\rad(V)$ is a~submodule and $\rad(V)_{[0,0]}=0$, so $\rad(V) \neq V$.
\end{proof}

Verma module $\V^{k,j}$ is defined \cite{Verma} as the quotient of the universal enveloping algebra of $\asl$ by the left ideal generated by $\aslp \cup \{ \mathsf K -k , {\mathsf J}_0^0 - j \}$. We denote the image of $1$ in $\V^{k,j}$ by $v_{k,j}$. It is a~cyclic highest weight vector of weight $(k,j)$.

The following properties of $\V^{k,j}$ are well-known.
\begin{enumerate}
\item If $v \in V$ is a highest weight vector of weight $(k,j)$, there exists a~unique homomorphism $\V^{k,j} \to V$ taking $v_{k,j}$ to $v$. Clearly it is surjective if $v$ is cyclic. Therefore, every highest weight module is a quotient of a Verma module. In~particular, the irreducible highest weight modules are $\H^{k,j}= \V^{k,j} / \rad(\V^{k,j})$. 
\item Every nonzero element of $\U \aslm$ acts as an injective endomorphism of $\V^{k,j}$. In particular every nonzero homomorphism $\V^{k,j'} \to \V^{k,j}$ is injective. This propert follows from the Poincar\'e-Birkoff-Witt theorem and the fact that the product of two nonzero elements of a universal enveloping algebra is nonzero \cite[ch.~2.3]{Dixmier}.  
\item There exists a nonzero contragradient bilinear form $g$ on $\V^{k,j}$, unique up to constant multiples. Its normalization may be fixed by setting $g(v_{k,j},v_{k,j})=1$. $g$ is symmetric. Existence and uniqueness of $g$ may be proven as in \cite{Shapovalov} for Verma modules of semisimple Lie algebras.
\end{enumerate}

\begin{proposition} \label{prop:gker}
Let $k \neq -2$. We have $\rad(\V^{k,j}) = \Ker(g)$. Therefore, every highest weight module $V$ admits a unique (up to multiples) nonzero contragradient bilinear form $g$. $g$ is symmetric, $g(v,v) \neq 0$ for the cyclic highest weight vector $v$ of $V$, and the kernel of $g$ is $\rad(V)$.
\end{proposition}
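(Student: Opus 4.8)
The plan is to reduce everything to the Verma module $\V^{k,j}$, where the form $g$ already exists by property~3 above, and then to descend $g$ to an arbitrary highest weight quotient. I will use repeatedly that for a contragradient form the perpendicular of a submodule is again a submodule: if $W \subseteq \V^{k,j}$ is a submodule and $W^\perp = \{ y : g(x,y) = 0 \ \forall x \in W \}$, then for $y \in W^\perp$ and $x \in W$ one has $g(x, \mathsf J^a_n y) = g(\mathsf J^{-a}_{-n} x, y) = 0$ because $\mathsf J^{-a}_{-n} x \in W$; hence $\mathsf J^a_n y \in W^\perp$. Taking $W = \V^{k,j}$ this shows in particular that $\Ker(g) = (\V^{k,j})^\perp$ is a submodule.

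First I would prove $\Ker(g) = \rad(\V^{k,j})$. Since $g(v_{k,j}, v_{k,j}) = 1 \neq 0$, the cyclic highest weight vector is not in $\Ker(g)$, so $\Ker(g)$ is a proper submodule and therefore $\Ker(g) \subseteq \rad(\V^{k,j})$. For the reverse inclusion I consider $\rad(\V^{k,j})^\perp$, which is a submodule by the remark above, and I claim it contains $v_{k,j}$: for any $x \in \rad(\V^{k,j})$, decomposing $x$ into weight components and using that weight spaces with distinct $(\Delta,j)$ are $g$-orthogonal, only the $[0,0]$-component of $x$ could pair nontrivially with $v_{k,j} \in V_{[0,0]}$; but $\rad(\V^{k,j})_{[0,0]} = 0$ by Proposition~\ref{prop:hw_weights}, so $g(x, v_{k,j}) = 0$. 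Thus $v_{k,j} \in \rad(\V^{k,j})^\perp$, and since $v_{k,j}$ is cyclic this submodule must be all of $\V^{k,j}$. Hence $g(x,y) = 0$ for every $x \in \rad(\V^{k,j})$ and every $y$, i.e.\ $\rad(\V^{k,j}) \subseteq \Ker(g)$, giving equality. Symmetry of $g$ and $g(v_{k,j},v_{k,j}) \neq 0$ are already recorded in property~3.

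Next I would treat a general highest weight module $V$. By property~1 it is a quotient $V = \V^{k,j}/M$ with $M \subsetneq \V^{k,j}$ a proper submodule, hence $M \subseteq \rad(\V^{k,j}) = \Ker(g)$. Because $M \subseteq \Ker(g)$, the form $g$ descends to a well-defined bilinear form $g_V$ on $V$ by $g_V(\pi x, \pi y) = g(x,y)$, where $\pi \colon \V^{k,j} \to V$ is the projection; it is contragradient and symmetric because $g$ is, and nonzero since $g_V(v,v) = g(v_{k,j}, v_{k,j}) = 1$ for the cyclic vector $v = \pi(v_{k,j})$. Its kernel is $\pi(\Ker(g)) = \rad(\V^{k,j})/M$, which under the submodule correspondence for $\V^{k,j}/M$ is exactly the maximal proper submodule $\rad(V)$ (here one uses that $\rad(\V^{k,j})$ is the unique maximal proper submodule and already contains $M$). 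For uniqueness, any nonzero contragradient form on $V$ pulls back along $\pi$ to a nonzero contragradient form on $\V^{k,j}$, which by property~3 is a scalar multiple of $g$, so the original form is that same multiple of $g_V$.

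The main obstacle I anticipate is the reverse inclusion $\rad(\V^{k,j}) \subseteq \Ker(g)$ in the first step; the perpendicular-submodule trick together with the weight-orthogonality of $g$ and the fact $\rad(\V^{k,j})_{[0,0]} = 0$ is exactly what makes it go through, and it is worth isolating the orthogonality statement cleanly. A secondary point requiring care is the bookkeeping in the descent step: checking that $g$ really is constant on $M$-cosets (which needs $M \subseteq \Ker(g)$, not merely $M$ proper) and that the image $\rad(\V^{k,j})/M$ coincides with $\rad(V)$ under the lattice isomorphism between submodules of $V$ and submodules of $\V^{k,j}$ containing $M$.
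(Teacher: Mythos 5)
Your proof is correct and follows essentially the same route as the paper: both reductions rest on the weight-grading of submodules (Lemma \ref{lem:Virasoro_invariant_submodules}), the fact that $\rad(\V^{k,j})_{[0,0]}=0$, and the $g$-orthogonality of distinct weight spaces, followed by descending $g$ along the quotient map. Your phrasing via the submodule $\rad(\V^{k,j})^\perp$ containing the cyclic vector is just the dual of the paper's argument with the submodule generated by $v\notin\Ker(g)$, and your pullback argument for uniqueness is an equivalent variant of the paper's subtraction trick.
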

\begin{proof}
$\Ker(g) \subset \rad(\V^{k,j})$ is clear (for any $k$). For $\supset$, consider $v \not \in \Ker(g)$ and let $V$ be the $\asl$-module generated by $v$. There exists $T \in \U \asl$ such that $g(T v_{k,j}, v) \neq 0$, so there exists $T' \in \U \asl$ such that $g(v_{k,j}, T'v) \neq 0$. Then $T' v \in V$ has a nonzero component in $\V^{k,j}_{[0,0]}$. By Lemma \ref{lem:Virasoro_invariant_submodules}, the projection of $T'v$ onto $\V^{k,j}_{[0,0]}$ is also in $V$, so~$v_{k,j} \in V$ and hence $v \not \in \rad(\V^{k,j})$.

Now let $V$ be a highest weight module. There exists a surjection $\V^{k,j} \to V$ whose kernel is contained in the kernel of $g$ on $\V^{k,j}$, so there is an induced $g$ on $V$. Its claimed properties follow from the corresponding properties of $g$ on $\V^{k,j}$. Now let $g'$ be another nonzero contragradient bilinear form on $V$ and let $v$ be a cyclic highest weight vector of $V$. Put $h = g' - \frac{g'(v,v)}{g(v,v)} g$. $h$ is contragradient and $h(v,v)=0$, from which one easily deduces $h=0$.
\end{proof}

\begin{remark} \label{rmk:renormalized}
There exist operators $\delta$ and $(L_n)_{n=1}^\infty$ on $\V^{-2,j}$ uniquely determined by 
\begin{equation}
\delta v_{-2,j} = L_n v_{-2,j} =0, \quad [\delta,\mathsf J^a_m] = -m \mathsf J^a_m, \quad [L_n, \mathsf J^a_m] = -m \mathsf J^a_{n+m}. 
\end{equation}
Uniqueness: these rules clearly suffice to evaluate $\delta$ and $L_n$ on any element of $\V^{-2,j}$. \\ Existence: consider the derivation of $\asl$ defined by $\mathsf J^a_m \mapsto -m \mathsf J^a_{n+m}, \mathsf K \to 0$. If $n \geq 0$, its extension to the universal enveloping algebra preserves the left ideal generated by $\aslp$, $\mathsf K-k$ and $\mathsf J^0_0 -j$, so it descends to a linear map on the quotient space.

If $k\neq -2$, operators $L_n$ constructed above coincide with those obtained from~\eqref{eq:Sugawara}, and we have $\delta = L_0 - \Delta_{k,j}$. Thus $\delta$ may be thought of as $L_0$ renormalized by subtracting a divergent constant. To obtain finite limits of $L_{n}$ with $n<0$, a multiplicative renormalization is required\footnote{Here we take the limit of matrix elements relative to the canonical basis, see Subsection \ref{sec:KK_determinant}.}: 
\begin{equation}
L_n^{\mathrm{ren}}= \lim_{k \to -2} (k+2) L_n.
\end{equation}
This renormalization changes the commutation rules. In particular, $L_n^{\mathrm{ren}}$ commute with each other and with all $\mathsf J^a_m$. 

If $k=-2$, operators $\delta$ and $L_n$ ($n>0$) can not be expressed in terms of $\mathsf J^a_n$, so the proof of Lemma \ref{lem:Virasoro_invariant_submodules} is not applicable. Therefore, the argument showing that the union of all proper submodules of $\V^{k,j}$ is a proper submodule does not extend to $k=-2$. As for a counterexample, the vector $L_{-1}^{\mathrm{ren}} v_{-2,j}$ is annihilated by $\aslp$, so the $\asl$-module it generates does not contain $v_{-2,j}$. On the other hand, $L_1 L_{-1}^{\mathrm{ren}} v_{-2,j} = 2j(j+1) v_{-2,j}$. It is also easy to see that $v_{-2,j}$ is not in the $\asl$-module generated by $(1+L_{-1}^{\mathrm{ren}}) v_{-2,j}$, even though $v_{-2,j}$ is a homogeneous of component of $(1+L_{-1}^{\mathrm{ren}})v_{-2,j}$. Moreover, the two vectors $(1 \pm L_{-1}^{\mathrm{ren}})v_{-2,j}$ separately generate proper submodules, but the sum of these submodules is the whole $\V^{-2,j}$.

One may define $\rad(\V^{-2,j})$ to be the union of all $\delta$-invariant proper submodules. Then, by the proof of Proposition \ref{prop:gker}, $\rad(\V^{-2,j})=\Ker(g)$, and the quotient module $\V^{-2,j}/\rad(\V^{-2,j})$ has no nontrivial $\delta$-invariant submodules. Operators $L_n^{\mathrm{ren}}$ ($n<0$) take whole $\V^{-2,j}$ to $\rad(\V^{-2,j})$, but $\rad(\V^{-2,j})$ is not $L_n$-invariant ($n>0$). In \cite{Malikov} it was shown that if $2j+1 \not \in \Z \setminus \{ 0 \}$, the space of highest weight vectors in $\V^{-2,j}_{[n,n]}$ is spanned by vectors $L_{-i_1}^{\mathrm{ren}} \cdots L_{-i_p}^{\mathrm{ren}} v_{-2,j}$ with $\sum i_p =n$, and vectors $L_{-i}^{\mathrm{ren}}v_{-2,j}$ generate the module $\rad(\V^{-2,j})$.   
\end{remark}

\subsection{Kac-Kazhdan determinant} \label{sec:KK_determinant}

Consider monomials of the form
\begin{equation}
\label{eq:multiindex:J}
{\mathsf J}_{-I} = \left[ \prod_{l=1}^{\infty} \prod_{a = -1}^1 \left( {\mathsf J}_{-l}^a \right)^{i_{l,a}} \right] \left( {\mathsf J}_0^- \right)^{i_{0,-}} = \cdots ({\mathsf J}_{-1}^-)^{i_{1,-}} ({\mathsf J}_{-1}^0)^{i_{1,0}} ({\mathsf J}_{-1}^{+})^{i_{1,+}} ({\mathsf J}_0^-)^{i_{0,-}}, 
\end{equation}
where $I=(i_{0,-},i_{1,+},i_{1,0},i_{1,-},...)$ is a multi-index whose entries are natural numbers, all but finitely many of which are zero. We shall write $I \in \mathcal P_{n,m}$ if
\begin{equation}
n = \sum_{l,a} i_{l,a} l, \qquad m = \sum_{l,a} i_{l,a} (l-a).
\end{equation}
$\{ {\mathsf J}_{-I} v_{k,j} \}_{I \in \mathcal P_{n,m}}$ is a basis of $\V^{k,j}_{[n,m]}$, called the canonical basis. Let $d_{n,m} =\dim(\V^{k,j}_{[n,m]})$.

A family of vectors in $\V^{k,j}_{[n,m]}$ depending on parameters $k,j$ is said to be polynomial (resp.\ rational) if its coefficients relative to the canonical basis are polynomials (resp.~rational functions) in $k,j$. Similar conventions are adopted for bilinear forms, tensors etc.

Let $g_{n,m}$ be the restriction of $g$ to $\V^{k,j}_{[n,m]}$. Then $g_{n,m}$ is polynomial in $k,j$. Its determinant relative to the canonical basis was computed in \cite{KK} (in~fact for Verma modules of a~certain class of Lie algebras including $\asl$), up to a~constant which is easy to deduce by analyzing the proof therein. 

To state the result, it is convenient to introduce the following notations. Let $\mathcal S^+ = \N \times (\N \setminus \{ 0 \})$, $\mathcal S^- = (\N \setminus \{ 0 \})^2$. Then for $a \in \{ \pm 1 \}$ and $(r,l) \in \mathcal S^a$ we put:
\begin{equation}
    T_{r,l}^a(k,j) = r (k+2) -l+ a(2j+1).
\end{equation}

\begin{theorem} \label{thm:KDet}
Relative to the canonical basis we have
\begin{equation}
 \det(g_{n,m})  =  \prod_{I \in \mathcal P_{n,m}} I! \cdot \left( \frac{k+2}{2} \right)^{\sum \limits_{r,l=1}^{\infty} d_{n-rl,m-rl}} \cdot \prod_{\substack{a \in \pm 1 \\ (r,l) \in \mathcal S^a }} T_{r,l}^a(k,j)^{d_{n-rl,m-(r+a)l}}.
 \label{eq:Kac_det}
\end{equation}
\end{theorem}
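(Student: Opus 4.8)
The plan is to follow the strategy of Kac and Kazhdan \cite{KK}: view $\det(g_{n,m})$ as a polynomial in $k,j$, pin down its zero locus together with the order of vanishing along each component, and finally fix the overall constant from a single leading coefficient. Since $g_{n,m}$ is polynomial in $k,j$, so is its determinant, and it in fact depends on $j$ only through $2j+1$. The claimed right-hand side is the product of the $j$-independent factors $\prod_I I!$ and $\big(\tfrac{k+2}{2}\big)^{\sum d_{n-rl,m-rl}}$ with the linear factors $T_{r,l}^a$, each raised to an explicit power; the task is to match irreducible factors with multiplicities and then the constant.

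First I would treat the ``real-root'' factors $T_{r,l}^a$. The key remark is that if $T_{r,l}^a(k,j)=0$, then with $N=rl$, $M=(r+a)l$, $j'=j-al$ one checks $\Delta_{k,j'}=\Delta_{k,j}+N$, so the grading $[rl,(r+a)l]$ is exactly where a highest weight vector of charge $j'$ can sit; the existence of such a singular vector is guaranteed by the MFF construction \cite{MFF}. Using that nonzero elements of $\U\aslm$ act injectively on $\V^{k,j}$, this singular vector generates, for generic $(k,j)$ on the hyperplane, a freely embedded copy of a Verma module whose weight space in grading $[n,m]$ has dimension $d_{n-rl,m-(r+a)l}$. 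By Proposition \ref{prop:gker} this submodule lies in $\rad(\V^{k,j})=\Ker(g)$, so $g_{n,m}$ acquires a kernel of that dimension and $\det(g_{n,m})$ vanishes to order at least $d_{n-rl,m-(r+a)l}$ along $\{T_{r,l}^a=0\}$. For generic fixed $k$ the hyperplanes for distinct $(r,l,a)$ meet the $j$-line in distinct points, so these orders add, and as a polynomial in $j$ the determinant has at least $\sum_{a,\,(r,l)\in\mathcal S^a} d_{n-rl,m-(r+a)l}$ roots counted with multiplicity.

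Next comes the ``imaginary-root'' factor $\big(\tfrac{k+2}{2}\big)^{\sum d_{n-rl,m-rl}}$, which is not produced by the real-root singular vectors and must be obtained separately by analysing the degeneration at $k=-2$. There the Sugawara $L_0$ and the form $g$ break down, and, as recalled in Remark \ref{rmk:renormalized} following \cite{Malikov}, the radical at $k=-2$ is generated by the renormalized vectors $L_{-i}^{\mathrm{ren}}v_{-2,j}$, which preserve the $\mathsf J^0_0$-charge and raise the energy, hence sit at gradings $[i,i]$. Counting the degeneracy these produce in grading $[n,m]$ gives $\sum_{r,l\ge 1} d_{n-rl,m-rl}$, which I would identify with the order of vanishing of $\det(g_{n,m})$ at $k=-2$; the power of $\tfrac12$ is then the normalisation fixed by $h^{00}=\tfrac12$ and the $\tfrac12:\!\mathsf J^\pm\mathsf J^\mp\!:$ structure in \eqref{eq:Sugawara}.

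Finally I would close the argument by a degree count and a leading-coefficient computation. Estimating the matrix entries $g(\mathsf J_{-I}v,\mathsf J_{-I'}v)$ by moving raising operators through with the commutation relations \eqref{eq:sl2:algebra} bounds the $j$-degree (and the $k$-degree) of $\det(g_{n,m})$ from above by the total degree of the claimed product; combined with the lower bounds above this forces equality up to a constant and simultaneously shows all multiplicities are exact. The constant is read off from the diagonal of $g_{n,m}$: the dominant contribution of each basis vector $\mathsf J_{-I}v$ factorizes over modes and yields $I!$ together with the powers of $\tfrac12$, while no $T_{r,l}^a$ survives in the leading coefficient. I expect the main obstacle to be the exact matching of multiplicities, that is, tightening the degree bound so the lower bounds from the singular vectors are not exceeded, and, relatedly, the correct accounting of the imaginary-root power of $(k+2)$, which is the subtlest point already in \cite{KK}. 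A clean alternative for the multiplicity bookkeeping is the Jantzen filtration, whose sum formula directly equates the order of vanishing of $\det(g_{n,m})$ with $\sum_{i>0}\dim \V^i_{[n,m]}$.
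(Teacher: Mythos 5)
First, a point of comparison: the paper does not actually prove Theorem \ref{thm:KDet}. It is quoted from Kac--Kazhdan \cite{KK}, with only the remark that the overall constant ``is easy to deduce by analyzing the proof therein.'' So your proposal cannot be matched against an in-paper argument; it can only be measured against the standard proof. Your outline is indeed the classical Shapovalov/Kac--Kazhdan strategy (lower bounds on vanishing orders from singular vectors, upper bound and constant from a leading-term computation), and the real-root part of it is essentially sound: the verification $\Delta_{k,j-al}=\Delta_{k,j}+rl$ on $\{T^a_{r,l}=0\}$, the free $\U\aslm$-submodule in $\Ker(g)$, and the resulting divisibility of $\det(g_{n,m})$ by $(T^a_{r,l})^{d_{n-rl,m-(r+a)l}}$ all work. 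But note a circularity hazard \emph{relative to this paper's development}: you invoke the MFF construction for existence of the singular vectors, while the paper's proof that the MFF operators lie in $\U\aslm$ (Lemma \ref{lem:O_nonsingular}) itself uses existence and uniqueness of singular vectors drawn from Corollary \ref{cor:reducibility_dim1}, i.e.\ from the very determinant formula you are proving. To be non-circular you must either use an independent existence argument (explicit vectors $(\sJ_0^-)^l v$, $(\sJ_{-1}^+)^l v$ plus a Jantzen-type induction, as in \cite{KK}) or take MFF's original, self-contained construction.

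The genuine gap is the imaginary-root factor. You assert that counting the descendants of the vectors $L^{\mathrm{ren}}_{-i}v_{-2,j}$ in grading $[n,m]$ ``gives $\sum_{r,l\ge 1} d_{n-rl,m-rl}$,'' but this number equals $\sum_{N\ge 1}\tau(N)\,d_{n-N,m-N}$ with $\tau(N)$ the number of divisors of $N$, whereas the naive spanning set $\{\sJ_{-I}L^{\mathrm{ren}}_{-i_1}\cdots L^{\mathrm{ren}}_{-i_p}v\}$ is indexed by \emph{partitions} and would give $\sum_N p(N)\,d_{n-N,m-N}$ if independent; already at $[n,m]=[3,3]$ these disagree ($p(3)=3$ vs $\tau(3)=2$). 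So either there are relations among these vectors, or the order of vanishing at $k=-2$ strictly exceeds $\dim\Ker(g_{n,m})|_{k=-2}$ (which happens when higher Jantzen layers are nonzero); in either case the step ``order of vanishing $=$ this count'' is unjustified and cannot be patched by inspection. The clean way to get the lower bound here is precisely the Jantzen sum formula you relegate to an ``alternative,'' or Kac--Kazhdan's own character argument; this should be the main route, not a footnote. Finally, the leading-term/degree computation that is supposed to close the argument and cap all multiplicities is only asserted; it requires a genuine term order for which the diagonal dominates, and it is where the exponents are actually pinned down, so it cannot be left as a one-sentence expectation.
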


The following Corollary of \eqref{eq:Kac_det} was also noted in \cite{KK}.

\begin{corollary} \label{cor:reducibility_dim1}
$\mathcal V^{k,j} $ is a reducible module if and only if $k \neq -2$ or there exist $a \in \{ \pm \}$ and $(r,l) \in \mathcal S^{a}$ such that $T^a_{r,l}(k,j) = 0$. If $k \neq -2$ and there exists exactly one such triple $(a,r,l)$, then $\rad(\V^{k,j}) \cong \V^{k, j-al}$.
\end{corollary}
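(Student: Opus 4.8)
The plan is to read everything off the Kac--Kazhdan determinant \eqref{eq:Kac_det}. Writing $g_{n,m}$ for the restriction of the contragradient form to $\V^{k,j}_{[n,m]}$, and recalling from Proposition \ref{prop:gker} that $\rad(\V^{k,j})=\Ker(g)$ when $k\neq-2$, the module $\V^{k,j}$ is reducible precisely when some $g_{n,m}$ is degenerate, i.e.\ when $\det(g_{n,m})=0$ for some $(n,m)$. Since the factorials $I!$ are nonzero, \eqref{eq:Kac_det} shows that this happens iff the factor $\big(\tfrac{k+2}{2}\big)$ vanishes on a weight carrying a positive exponent (which forces $k=-2$, as $d_{0,0}=1$ already makes the exponent at $(n,m)=(1,1)$ positive) or some $T^a_{r,l}(k,j)=0$ (the exponent $d_{n-rl,\,m-(r+a)l}$ being positive, e.g.\ equal to $d_{0,0}=1$ at $(n,m)=(rl,(r+a)l)$). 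For $k=-2$ I would note separately that $\Ker(g)\subseteq\rad(\V^{-2,j})$ always holds, so degeneracy of $g_{1,1}$ still forces reducibility. This settles the first assertion.

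For the refined statement, assume $k\neq-2$ and that $(a,r,l)$ is the \emph{unique} triple with $T^a_{r,l}(k,j)=0$. The key observation is that then $T^a_{r,l}$ is the only factor of \eqref{eq:Kac_det} vanishing at our $(k,j)$, and it occurs in $\det(g_{n,m})$ to the power $d_{n-rl,\,m-(r+a)l}$. Viewing $\det(g_{n,m})$ as a polynomial in $j$ near our value (where $T^a_{r,l}$ has a simple zero and every other factor is nonzero), its order of vanishing is \emph{exactly} $d_{n-rl,\,m-(r+a)l}$. Using the elementary bound $\operatorname{corank}g_{n,m}\le\operatorname{ord}\det(g_{n,m})$ (via Smith normal form over the local ring at $j$, where each degenerate invariant factor contributes at least one to the order), I obtain
\[
\dim\rad(\V^{k,j})_{[n,m]}=\operatorname{corank}g_{n,m}\le d_{n-rl,\,m-(r+a)l}\qquad\text{for all }(n,m).
\]

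In particular this dimension is $0$ for all $(n,m)$ with $n<rl$, and for $n=rl,\ m<(r+a)l$, while it is at most $1$ at $(n,m)=(rl,(r+a)l)$; since $\det(g_{rl,(r+a)l})=0$ the corank there is exactly $1$. Hence there is a nonzero $w\in\rad(\V^{k,j})_{[rl,(r+a)l]}$, and because $\rad(\V^{k,j})$ vanishes at every weight reachable from $[rl,(r+a)l]$ by $\aslp$ (lower grade, or the same grade with smaller $m$), $w$ is a highest weight vector, of charge $j+rl-(r+a)l=j-al$. By Verma-module property~1 it is the image of $v_{k,j-al}$ under a homomorphism $\V^{k,j-al}\to\V^{k,j}$, which is injective by property~2, so the submodule $M=\U\asl\cdot w$ is isomorphic to $\V^{k,j-al}$. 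The grading shift then gives $\dim M_{[n,m]}=d_{n-rl,\,m-(r+a)l}$, and since $M\subseteq\rad(\V^{k,j})$ we have $\dim M_{[n,m]}\le\dim\rad(\V^{k,j})_{[n,m]}$. Comparing with the displayed upper bound forces equality in every (finite-dimensional) weight space, whence $\rad(\V^{k,j})=M\cong\V^{k,j-al}$.

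The main obstacle I anticipate is this final squeeze: one must pin down the order of vanishing of $\det(g_{n,m})$ \emph{exactly} rather than as a mere lower bound, which is where the uniqueness of the triple $(a,r,l)$ is essential, and one must combine the corank-versus-order inequality with the inclusion $M\subseteq\rad(\V^{k,j})$ in the correct direction. Everything else — the existence of the singular vector and the computation of its charge — then comes for free from the dimension count, with no need for an explicit construction such as MFF.
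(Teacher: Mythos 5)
Your proof is correct and follows essentially the same route as the paper: read off reducibility from the factorization \eqref{eq:Kac_det}, locate a one-dimensional $\Ker(g_{rl,(r+a)l})$ from the simple zero of the determinant, check the kernel vector is singular because the adjacent weight spaces have trivial kernel, and then squeeze $\dim\rad(\V^{k,j})_{[n,m]}$ between $\dim M_{[n,m]}=d_{n-rl,m-(r+a)l}$ and the order of vanishing of $\det(g_{n,m})$. You are in fact slightly more explicit than the paper about the corank-versus-order inequality and about why degeneracy still forces reducibility at $k=-2$, where Proposition \ref{prop:gker} does not apply.
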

\begin{proof}
   If $k = -2$, then $\Ker(g)$ is a nonzero proper submodule of $\V^{-2,j}$. Characterization of reducibility for $k \neq -2$ follows from Proposition \ref{prop:gker}. Now suppose that $k \neq -2$ and exactly one $T^a_{r,l}(k,j)$ vanishes. Then $\det(g_{rl,(r+a)l})$ has a~simple zero at $k,j$, so~$g_{rl,(r+a)l}$ has one-dimensional kernel. Let $u \in \Ker(g_{rl,(r+a)l})$ be nonzero. Then $\sJ^b_n u \in  \Ker(g_{rl-n,(r+a)l-n-b})$, which is the zero subspace for $n >0$ and for $(n,b)=(0,+)$. Hence $u$ is a highest weight vector, so the submodule $M$ generated by $u$ is isomorphic to $\V^{k,j-al}$. For every $n,m$ we have $M_{[n,m]} \subset \Ker(g_{n,m})$, and $\dim M_{[n,m]}$ equals the order of the zero of $\det(g_{n,m})$, so $M_{[n,m]} = \Ker(g_{n,m})$.
\end{proof}

\subsection{Singular vectors} \label{sec:sing_vec}

A non-cyclic highest weight vector in a highest weight module is called a singular vector. Every singular vector belongs to $\Ker(g)$. Singular vectors in Verma modules are in one-to-one correspondence with embeddings of Verma modules.

First we discuss an existence result for singular vectors (cf.~Remark \ref{rmk:renormalized} for a partial discussion of an additional construction for $k+2=0$). If $T^{a}_{r,l}(k,j)=0$, then there exists a singular vector in $\V^{k,j}_{[rl,(r+a)l]}$ (that is, an embedding $\V^{k,j-al} \hookrightarrow \V^{k,j}$). For generic $k,j$ this follows from Corollary~\ref{cor:reducibility_dim1}. The genericness assumption may be removed because the function
\begin{equation}
    (k,j) \mapsto \dim \{ v \in \V^{k,j} \, | \, \sJ_0^+ v = \sJ_1^- v =0 \}
\end{equation}
is upper semicontinuous. 

One may obtain new examples of singular vectors by composing the corresponding embeddings of Verma modules. As an example, the result in the previous paragraph shows existence of singular vectors in $\V^{0,0}_{[0,1]}$ and $\V^{0,-1}_{[3,0]}$, or equivalently two embeddings
\begin{equation}
    \V^{0,2} \hookrightarrow \V^{0,-1} \hookrightarrow \V^{0,0}.
\end{equation}
Their composition is an embedding $\V^{0,2} \hookrightarrow \V^{0,0}$ corresponding to a singular vector in $\V^{0,0}_{[3,1]}$. We note that $[3,1]$ is not equal $[rl,(r+a)l]$ for any labels $a,r,l$ corresponding to zeros of the Kac-Kazhdan determinant. 

Next we discuss to what extent the above existence results are exhaustive. Theorem $2$ in \cite{KK} shows that\footnote{In fact Theorem $2$ in \cite{KK} is stronger. It characterizes irreducible subquotients of Verma modules. Since every singular vector gives rise to an irreducible subquotient, it implies the statement discussed here.} if there exists a singular vector $v \in \V^{k,j}$ of weight $j'$ (embedding $\V^{k,j'} \hookrightarrow \V^{k,j}$), then there exist $s \geq 1$ and a sequence of triples $\{ (a_i,r_i,l_i) \}_{i=1}^s$ such that, defining $j_0=j$ and $j_i = j_{i-1} - a_i l_i$ for $i \in \{ 1 , \dots, s \}$, we have $j_s =j'$ and $T^{a_i}_{r_i,l_i}(k,j_{i-1})=0$ for each $i \in \{ 1 , \dots , s \}$. Thus there exists a~sequence of embeddings
\begin{equation}
    \V^{k,j'} = \V^{k,j_s} \hookrightarrow \V^{k,j_{s-1}} \hookrightarrow \dots \hookrightarrow \V^{k,j_{0}} = \V^{k,j}.
    \label{eq:embedding_resolved}
\end{equation}
In this sense the existence of an embedding $\V^{k,j'} \hookrightarrow \V^{k,j}$ could be concluded from Corollary~\ref{cor:reducibility_dim1}. We emphasize that this by itself is not saying that every embedding $\V^{k,j'} \hookrightarrow \V^{k,j}$ is equal to a~composition as in \eqref{eq:embedding_resolved}. Thus one is lead to the question what is the dimension of the space of singular vectors in $\V^{k,j}$ of a given weight.

For generic $k,j$ the space of singular vectors in $\V^{k,j}_{[rl,(r+a)l]}$ is at most one-dimensional by Corollary~\ref{cor:reducibility_dim1}. Another uniqueness result for singular vectors was obtained in \cite{RCW}. It asserts that\footnote{The result concerns a certain class of Lie algebras. Here we state it for $\asl$ in more explicit terms.} if $k \in \mathbb N$ and $2j,2j'$ are integers not congruent to $-1$ modulo $k+2$, then the space of weight $j'$ singular vectors in $\V^{k,j}$ is at most one-dimensional. In \cite{Hwang} the same conclusion was obtained under the weaker assumptions $k+2 \in \Z \setminus \{ 0 \}$ and $2 j \in \Z$, using a similar argument. To the best of our knowledge, the question of uniqueness of singular vectors for general $k,2j$ is not discussed in the literature.

It was asked in \cite{KK} how to explicitly write down formulas for singular vectors. A~solution to this problem was obtained by Malikov, Feigin and Fuchs in \cite{MFF}. Below we present its equivalent reformulation in a language slightly different than that used in \cite{MFF}.

Malikov, Feigin and Fuchs have introduced the following operators
\begin{align}
    \mathcal O^+_{r,l}(t) &= (\sJ_0^-)^{l+rt} (\sJ_{-1}^+)^{l+(r-1)t} (\sJ_0^-)^{l+(r-2)t}  \cdots (\sJ_0^-)^{l-rt}, \label{eq:MFF_operators} \\
    \mathcal O^-_{r,l}(t) & = (\sJ_{-1}^+)^{l+(r-1)t} (\sJ_0^-)^{l+(r-2)t} \cdots (\sJ_{-1}^+)^{l-(r-1)t}, \nonumber
\end{align}
where $l \geq 1$ and $r \geq 0$ for $\mathcal O^+_{r,l}(t)$, $r \geq 1$ for $\mathcal O^-_{r,l}(t)$. This definition makes sense literally only if $t$ is an integer such that $r|t| \leq l$ (for $\mathcal O^+_{r,l}(t)$) or $(r-1)|t| \leq l$ (for $\mathcal O^-_{r,l}(t)$). However, as~we explain below, there is a natural interpretation of \eqref{eq:MFF_operators} as an element of $\U \aslm$ depending polynomially on $t \in \C$. Taking $t=k+2$ one obtains a family
\begin{equation}
    \chi^a_{r,l} = \mathcal O^a_{r,l} (k+2) v_{k,j} \in \V^{k,j}_{[rl,(r+a)l]}
    \label{eq:def_chi}
\end{equation}
depending polynomially on $k,j$.  If $T^a_{r,l}(k,j)=0$, $\chi^a_{r,l}$ is a singular vector, see Proposition \ref{prop:O_properties} and the discussion below.

Note the special cases
\begin{align}
    \mathcal O^+_{0,l}  = (\sJ_0^-)^l, \qquad  \qquad   \mathcal O^-_{1,l}  = (\sJ_{-1}^+)^l, \label{eq:elementary_O_operators}
\end{align}
in which the operators are independent of $t$. An elementary calculation shows that $\mathcal O^+_{0,l}v_{k,j}$ is a~singular vector of weight $-j-1$ if $2j+1=l$, and $\mathcal O^-_{1,l} v_{k,j}$ is a singular vector of weight $-j+k+1$ if $k-2j+1 =l$.

There exist canonical embeddings of $\U \aslm  \subset \U \asl$ as subalgebras in associative algebras $A^- \subset A$ in which $\sJ_0^-$ and $\sJ_{-1}^+$ are invertible elements, see Appendix \ref{app:localization}. If $t \in \Z $, then all exponents in \eqref{eq:MFF_operators} are integers, so \eqref{eq:MFF_operators} makes sense in $A^-$. Manipulating these expressions for integer $t$ allows to deduce existence of an extension to all $t$ which is a polynomial. We note that it would not be enough to consider only $t$ such that all exponents in $\eqref{eq:MFF_operators}$ are non-negative integers: since this is a finite set, we would not have uniqueness of continuation. In \cite{MFF} a different (in~our opinion unwieldy) approach to calculate with $\mathcal O^a_{r,l}(t) $ was adopted. It was also remarked that calculations could be performed in some enlarged algebra which contains complex powers of $J_0^-$ and $J_{-1}^+$, but precise definition of this algebra was not provided.

\begin{lemma} \label{lem:O_nonsingular}
If $t \in \Z$, then $\mathcal O^a_{r,l}(t)$ is in $\U \aslm$.
\end{lemma}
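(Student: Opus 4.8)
The plan is to treat $\mathcal O^a_{r,l}(t)$, which for $t\in\Z$ is a priori only an element of the Ore localization $A^-$, and to show that all negative powers of $\sJ^-_0$ and $\sJ^+_{-1}$ that occur actually cancel, so that the result lies in the subalgebra $\U\aslm$. First I would record the two recursion relations obtained by matching the alternating exponent pattern and splitting off the outermost factor on each side,
\begin{equation*}
\mathcal O^+_{r,l}(t) = (\sJ^-_0)^{l+rt}\,\mathcal O^-_{r,l}(t)\,(\sJ^-_0)^{l-rt}, \qquad
\mathcal O^-_{r,l}(t) = (\sJ^+_{-1})^{l+(r-1)t}\,\mathcal O^+_{r-2,l}(t)\,(\sJ^+_{-1})^{l-(r-1)t},
\end{equation*}
together with the base cases $\mathcal O^+_{0,l}=(\sJ^-_0)^l$ and $\mathcal O^-_{1,l}=(\sJ^+_{-1})^l$, which lie in $\U\aslm$ because $l\geq 1$. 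By induction on $r$ this reduces the claim to controlling expressions $a^{p}\,X\,a^{q}$ with $a\in\{\sJ^-_0,\sJ^+_{-1}\}$, with $X\in\U\aslm$ already known to be a polynomial, with $p+q=2l\geq 0$, but with one of $p,q$ possibly negative.

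The computational engine is the local nilpotency of $\mathrm{ad}_{\sJ^-_0}$ and $\mathrm{ad}_{\sJ^+_{-1}}$ on $\U\aslm$: for each mode level the triple $\{\sJ^+_{-m},\sJ^0_{-m},\sJ^-_{-m}\}$ is a three-term string (for instance $\sJ^+_{-1}\mapsto -2\sJ^0_{-1}\mapsto -2\sJ^-_{-1}\mapsto 0$ under $\mathrm{ad}_{\sJ^-_0}$), so $(\mathrm{ad}_a)^3$ annihilates every generator and, being a derivation, $\mathrm{ad}_a$ is locally nilpotent on the whole algebra. Consequently the conjugation identity
\begin{equation*}
a^{Q}\,X\,a^{-Q}=\sum_{n\geq 0}\binom{Q}{n}(\mathrm{ad}_a)^n(X)\,a^{-n}
\end{equation*}
is a \emph{finite} sum, valid for every $Q\in\Z$. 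From it I would extract an absorption statement: if $(\mathrm{ad}_a)^N X=0$ and $m\geq N+Q-1$, then $a^{m}X a^{-Q}\in\U\aslm$, the surviving negative powers being killed by the available positive power $a^{m}$ (a short double induction on $N$ and $Q$). This already settles the base-level cases $\mathcal O^\pm_{1,l}$, and more generally any step in which the element being conjugated is a single power $a^{l}$, where $N=2l+1$ exactly matches the exponent $m=l+t$.

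The hard part is that this one-sided absorption is too crude for $r\geq 2$: the naive bound on the nilpotency order $N$ of $\mathcal O^{\mp}_{r,l}$ grows like the full opposite charge $rl$, whereas the exponent $m=l\pm rt$ can only compensate a bound of order $2l$. Indeed one checks that $(\mathrm{ad}_{\sJ^-_0})^{3}\mathcal O^-_{2,l}\neq 0$ already for $l=1$, so membership of the inner factor in $\U\aslm$ together with its order is genuinely insufficient; the cancellation must exploit the precise resonance of the exponents $l+(r-j)t$, not merely that subproducts lie in $\U\aslm$. I expect this to be the main obstacle, and I would resolve it inside the noncommutative localization calculus of Appendix~\ref{app:localization}. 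The two presentations $\mathcal O^+$ and $\mathcal O^-$ are the two reduced-word expressions attached to the same element of the affine Weyl group of type $A_1^{(1)}$, and the braid-type identity equating them forces the poles in $\sJ^-_0$ and $\sJ^+_{-1}$ to cancel simultaneously. Concretely I would clear denominators using the Ore condition, writing $\mathcal O^a_{r,l}(t)=s^{-1}u$ with $s$ a product of the two generators and $u\in\U\aslm$, and then verify that $s$ left-divides $u$ by means of the interleaved commutation relations; propagating this divisibility through the recursion, rather than through any single conjugation, is the delicate step of the argument.
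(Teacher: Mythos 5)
Your proposal correctly sets up the recursion, handles the base cases, and --- to your credit --- accurately diagnoses the real difficulty: the one-sided absorption bound coming from local ad-nilpotency cannot compensate the negative exponent $l-rt$ once $r\geq 2$, because the nilpotency order of $\mathcal O^{\mp}_{r,l}$ grows like $rl$ while the available positive power is only $l+rt$. (Indeed, the paper itself only uses the finite commutator expansion in Proposition \ref{prop:O_are_poly}, \emph{after} membership in $\U\aslm$ is already known, precisely because that expansion alone does not establish membership.) But the step you flag as ``the delicate step'' --- clearing denominators to write $\mathcal O^a_{r,l}(t)=s^{-1}u$ and then verifying that $s$ left-divides $u$ via ``braid-type identities'' and ``interleaved commutation relations'' --- is exactly the content of the lemma, and you assert it rather than prove it. As written, the proposal contains no mechanism that forces the cancellation; invoking the affine Weyl group of type $A_1^{(1)}$ is a heuristic for why one might expect the two presentations to agree, not an argument that the required divisibility holds in $\U\aslm$.

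The paper closes this gap by a completely different, representation-theoretic device. After dividing $l=\alpha t+\beta$, the desired identity $T\cdot(\sJ_0^-)^{-\beta+(r-\alpha)t}\cdots(\sJ_{-1}^+)^{-\beta+t}=(\sJ_0^-)^{\beta+(\alpha+r)t}\cdots(\sJ_0^-)^{\beta}$ is tested against the highest weight vector of the specific Verma module $\V^{t-2,\frac{\beta-1}{2}}$ at \emph{integer} level $t-2$, chosen so that both sides involve only nonnegative exponents. The right factor applied to $v_{t-2,\frac{\beta-1}{2}}$ produces a singular vector $x$ with $T^+_{r,l}(t-2,j)=0$; the existence theory of singular vectors (Kac--Kazhdan) supplies some $T\in\U\aslm$ with $Tx$ singular of weight $j-l$, which is precisely the left-divisibility you wanted; and the uniqueness of singular vectors at integer level (Rocha-Caridi--Wallach, Hwang) identifies $Tx$ with the explicit singular vector given by the right-hand side. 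Injectivity of $\U\aslm$ acting on Verma modules then upgrades the equality of vectors to the equality of operators. So the missing idea in your proposal is this translation of an algebraic divisibility statement in the Ore localization into an existence-and-uniqueness statement about singular vectors in a judiciously chosen Verma module; without it (or a genuine substitute), the proof is incomplete.
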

\begin{proof}
We may assume that $r |t| > l$, otherwise the statement is obvious. We assume also that $t >0$ and $a=+$. Proofs in the other three cases are analogous. We divide with remainder
\begin{equation}
    l = \alpha t + \beta, \qquad \alpha \in \{ 0 , \dots, r-1 \}, \, \beta \in \{ 0, \dots, t-1 \}.
\end{equation}
Then the product in \eqref{eq:MFF_operators} contains $r+1+\alpha$ factors with positive exponents. We assume that this number is odd (again for typographical reasons, the case of even $r+1+\alpha$ is handled analogously). The statement we want to prove is equivalent to existence of a solution (necessarily unique) $T \in \U \aslm$ to the equation
\begin{equation}
    T (\sJ_0^-)^{- \beta + (r- \alpha) t} \cdots (\sJ_{-1}^+)^{- \beta + t} = (\sJ_0^-)^{\beta + (\alpha + r)} \cdots (\sJ_0^-)^{\beta}.
    \label{eq:O_ratio}
\end{equation}
Now consider the vector $v_{t-2, \frac{\beta-1}{2}}$. It is easy to check that 
\begin{equation}
    x= (\sJ_0^-)^{- \beta + (r- \alpha) t} \cdots (\sJ_{-1}^+)^{- \beta + t} v_{t-2, \frac{\beta-1}{2}} 
\end{equation}
is a singular vector of weight $j = - \frac{1}{2}((r-\alpha)t-\beta+1)$ satisfying $T^+_{r,l}(t-2,j)=0$. By existence of singular vectors (applied to the module generated by $x$, which is isomorphic to $\V^{t-2,j}$), there exists $T \in \U \aslm$ such that $y=Tx$ is a singular vector of weight $j-l = - \frac12 ((r+\alpha)t+\beta+1)$. Module $\V^{t-2,\frac{\beta-1}{2}}$ contains also the singular vector 
\begin{equation}
    z = (\sJ_0^-)^{\beta + (\alpha+r)t} \cdots (\sJ_0^-)^{\beta} v_{t-2,\frac{\beta-1}{2}}
\end{equation}
of the same weight $j-l$. By uniqueness of singular vectors in $\V^{t-2,\frac{\beta-1}{2}}$, vectors $y$ and $z$ are proportional. Since elements of $\U \aslm$ acts as injective endomorphisms, this is (up to rescaling of~$T$) equivalent to the equation \eqref{eq:O_ratio}. 
\end{proof}

\begin{proposition}\label{prop:O_are_poly}
Fix $a,r,l$. The elements $\mathcal O^a_{r,l}(t) \in \U \aslm$ ($t \in \Z$) described in Lemma \ref{lem:O_nonsingular} coincide with the restriction to $\Z$ of a unique polynomial in $t$ with coefficients in $\U \aslm$. 
\end{proposition}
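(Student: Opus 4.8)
The plan is to prove existence of the interpolating polynomial; uniqueness is immediate, since two polynomials agreeing on the infinite set $\Z$ coincide. My first observation would be that the total bidegree of $\mathcal{O}^a_{r,l}(t)$ is independent of $t$: in each of the two types of factors the coefficients of $t$ in the exponents sum to zero (they are $r,r-2,\dots,-r$ for the $\sJ_0^-$ powers and $r-1,r-3,\dots,-(r-1)$ for the $\sJ_{-1}^+$ powers). Hence for every integer $t$ the element $\mathcal{O}^a_{r,l}(t)$, which lies in $\U\aslm$ by Lemma~\ref{lem:O_nonsingular}, in fact lies in the single finite-dimensional subspace of $\U\aslm$ spanned by $\{\sJ_{-I}\}_{I\in\mathcal P_{rl,(r+a)l}}$. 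Thus $t\mapsto\mathcal{O}^a_{r,l}(t)$ is a map from $\Z$ into a fixed finite-dimensional space, and it remains to show it is the restriction of a polynomial.

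The engine of the argument is the local nilpotency of $\mathrm{ad}_{\sJ_0^-}$ and $\mathrm{ad}_{\sJ_{-1}^+}$ on $\U\aslm$. I would first check that each of these derivations is nilpotent on the generators of $\aslm$ by a short computation from \eqref{eq:sl2:algebra} (for instance $\mathrm{ad}_{\sJ_0^-}$ sends $\sJ^+_{-n}\mapsto-2\sJ^0_{-n}\mapsto\sJ^-_{-n}\mapsto0$), hence it is locally nilpotent on the whole enveloping algebra. Writing $D_x=(\mathrm{ad}_x)\circ R_{x^{-1}}$ with $R_{x^{-1}}$ right multiplication by $x^{-1}$, one has $\mathrm{Ad}_x=\mathrm{id}+D_x$ inside $A^-$, and since $\mathrm{ad}_x(x^{-1})=0$ the operator $D_x$ is again locally nilpotent on $\U\aslm$, with $D_x^p(y)=(\mathrm{ad}_x^{\,p}y)x^{-p}$. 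The group law $\mathrm{Ad}_{x^n}=(\mathrm{id}+D_x)^n$ then yields, for every $y\in\U\aslm$ and every $n\in\Z$,
\begin{equation}
x^n y x^{-n}=\sum_{p\ge 0}\binom{n}{p}\,(\mathrm{ad}_x^{\,p}y)\,x^{-p},
\end{equation}
a finite sum whose coefficients are polynomial in $n$.

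Next I would exploit the nested structure of the MFF operators. Splitting off the outer powers gives
\begin{equation}
\mathcal{O}^+_{r,l}(t)=\mathrm{Ad}_{(\sJ_0^-)^{rt}}\!\big((\sJ_0^-)^l\,\mathcal{O}^-_{r,l}(t)\,(\sJ_0^-)^l\big),\qquad
\mathcal{O}^-_{r,l}(t)=\mathrm{Ad}_{(\sJ_{-1}^+)^{(r-1)t}}\!\big((\sJ_{-1}^+)^l\,\mathcal{O}^+_{r-2,l}(t)\,(\sJ_{-1}^+)^l\big),
\end{equation}
with base cases $\mathcal{O}^+_{0,l}=(\sJ_0^-)^l$ and $\mathcal{O}^-_{1,l}=(\sJ_{-1}^+)^l$. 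Arguing by induction on $r$ (jointly for the two families), suppose the inner operator is already a polynomial in $t$ with coefficients in $\U\aslm$; then the bracketed element is such a polynomial, say $\sum_q t^q u_q$ with $u_q\in\U\aslm$, and applying the conjugation formula with $n=rt$ (resp.\ $(r-1)t$) expresses $\mathcal{O}^a_{r,l}(t)$ as $\sum_q t^q\sum_p\binom{rt}{p}(\mathrm{ad}_x^{\,p}u_q)x^{-p}$, a finite sum manifestly polynomial in $t$. This shows that $\mathcal{O}^a_{r,l}(t)$ agrees, for all integer $t$, with a fixed polynomial $P(t)=\sum_s t^s w_s$ whose coefficients $w_s$ lie a priori only in the localization $A^-$.

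The final step is to descend the coefficients from $A^-$ back to $\U\aslm$, and I expect this passage to be the main obstacle: all the intermediate manipulations genuinely take place in $A^-$ (negative powers of $\sJ_0^-$ and $\sJ_{-1}^+$ appear as coefficients), and the content of the proposition is precisely that these negative powers must cancel. I would handle it by interpolation: evaluating $P$ at $t=0,1,\dots,\deg P$ and inverting the Vandermonde matrix expresses each $w_s$ as a $\C$-linear combination of the values $P(0),\dots,P(\deg P)$, all of which lie in $\U\aslm$ by Lemma~\ref{lem:O_nonsingular}; since $\U\aslm$ is a linear subspace of $A^-$, this forces $w_s\in\U\aslm$. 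Hence $P$ is a polynomial with $\U\aslm$ coefficients restricting to $\mathcal{O}^a_{r,l}(t)$ on $\Z$. Care is also needed to confirm that every binomial expansion remains finite at each inductive step, which is exactly what the local nilpotency of $\mathrm{ad}_{\sJ_0^-}$ and $\mathrm{ad}_{\sJ_{-1}^+}$ on $\U\aslm$ guarantees.
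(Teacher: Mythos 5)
Your proposal is correct and follows essentially the same route as the paper: induction on $r$ via the nested structure of the MFF operators, the binomial conjugation formula made finite by local ad-nilpotency of $\sJ_0^-$ and $\sJ_{-1}^+$, and a final descent of the coefficients from $A^-$ to $\U\aslm$ using the integer-point values (your Vandermonde argument just makes explicit the paper's one-line version of this step).
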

\begin{proof}
Uniqueness is clear because a polynomial is uniquely determined by its restriction to $\Z$. 
We prove existence by induction. Cases $(a,r)=(+,0)$ and $(a,r)=(-,1)$ are immediate from \eqref{eq:elementary_O_operators}. Now suppose that for some $r_0$ the statement is true for $(+,r_0-1)$ and $(-,r_0)$. We have
\begin{equation}
    \mathcal O_{r_0,l}^+(t) = (\sJ_0^-)^{l+rt} \mathcal O_{r_0,l}^-(t) (\sJ_0^-)^{l-rt}. 
\end{equation}
By the inductive hypothesis, there exists $N \in \N$ such that $[\sJ_0^-,\cdot]^{N+1}$ (the $(N+1)$-fold application of the commutator with $\sJ_0^-$) annihilates $\mathcal O_{r_0,l}^-(t)$, so
\begin{equation}
    \mathcal O_{r_0,l}^+(t) = \sum_{i=0}^N \binom{l+rt}{i} [\sJ_0^-,\cdot]^i (\mathcal O_{r_0,l}^-(t)) (\sJ_0^-)^{2l-i}.
    \label{eq:O_commutator_recursion}
\end{equation}
This expression is manifestly a polynomial in $t$ with coefficients in $A^-$. Since it is in $\U \aslm$ for all $t \in \Z$, it is a polynomial with coefficients in $\U \aslm$. $\mathcal O^-_{r_0+1,l}(t)$ can be treated analogously.
\end{proof}

We take polynomials provided by Proposition \ref{prop:O_are_poly} as the definition of $\mathcal O^a_{r,l}(t)$ for any $t \in \C$. We will now discuss basic properties of operators $\mathcal O^a_{r,l}(t)$ and singular vectors constructed with them. For that, it will be useful to have the operation of differentation with respect to generators on $\U \aslm$, described in the following Lemma.

\begin{lemma}
    There exist unique linear maps $\frac{\partial}{\partial \sJ^-_0}$, $\frac{\partial}{\partial \sJ^+_{-1}} : \U \aslm \to \U \aslm$ satisfying  Leibniz rule and:
    \begin{align}
         & \frac{\partial}{\partial \sJ^-_0}(\sJ_0^-) =1 , \qquad \frac{\partial}{\partial \sJ^-_0}(\sJ_{-1}^+) =0, \\
         & \frac{\partial}{\partial \sJ^+_{-1}}(\sJ_0^-) =0 , \qquad \frac{\partial}{\partial \sJ^+_{-1}}(\sJ_{-1}^+) =1. \nonumber
    \end{align}
$\frac{\partial}{\partial \sJ^-_0}$, $\frac{\partial}{\partial \sJ^+_{-1}}$ extend uniquely to derivations of $A^-$. We have $\frac{\partial}{\partial \sJ^-_0} \frac{\partial}{\partial \sJ^+_{-1}} = \frac{\partial}{\partial \sJ^+_{-1}} \frac{\partial}{\partial \sJ^-_0}$. 
\end{lemma}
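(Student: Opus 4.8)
The plan is to treat $\frac{\partial}{\partial \sJ^-_0}$ and $\frac{\partial}{\partial \sJ^+_{-1}}$ as derivations of the associative algebra $\U\aslm$ and to exploit the fact that $\aslm$ is generated, as a Lie algebra, by the two elements $\sJ_0^-$ and $\sJ_{-1}^+$. Repeated bracketing of these two produces all remaining basis vectors $\sJ_n^a$ with $n\le -1$: for instance $\sJ_{-1}^0 = \frac12[\sJ_{-1}^+,\sJ_0^-]$, $\sJ_{-1}^- = -[\sJ_{-1}^0,\sJ_0^-]$, $\sJ_{-2}^+ = [\sJ_{-1}^0,\sJ_{-1}^+]$, and so on by induction on $-n$. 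Hence $\U\aslm$ is generated as an associative algebra by $\sJ_0^-$ and $\sJ_{-1}^+$, and since any derivation is determined by its values on algebra generators through the Leibniz rule, this immediately yields uniqueness: the four prescribed values fix both maps completely. (This also explains why conditions on only two generators suffice.)

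For existence I would build each map on the tensor algebra $T(\aslm)$ and descend it to the quotient $\U\aslm = T(\aslm)/I$, where $I$ is the two-sided ideal generated by $x\otimes y - y\otimes x - [x,y]$ for $x,y\in\aslm$. Define $\widetilde D$ on $T(\aslm)$ as the unique Leibniz extension of the linear map sending $\sJ_0^-$ to the unit $1$ and every other basis vector of $\aslm$ to $0$ (and analogously $\widetilde D'$ for $\sJ_{-1}^+$). To show $\widetilde D$ descends it suffices to check that it maps each generator of $I$ into $I$. Since $\widetilde D(x)$ and $\widetilde D(y)$ are scalar multiples of the unit, hence central, the expression $\widetilde D(x\otimes y - y\otimes x)$ vanishes identically, and the whole check reduces to verifying $\widetilde D([x,y])=0$ for all pairs of basis elements.

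The main point — and the only place the structure of $\aslm$ really enters — is precisely this last verification. As $\widetilde D$ merely reads off the coefficient of $\sJ_0^-$ in a degree-one element, I must argue that $\sJ_0^-$ never occurs on the right-hand side of a commutator of two basis elements of $\aslm$. But $\sJ_0^-$ is the unique basis vector of mode index $0$, while in any bracket $[\sJ_n^a,\sJ_m^b]$ of distinct basis elements both indices are $\le 0$ and at most one can vanish, so the result has mode index $n+m\le -1<0$ and carries no $\sJ_0^-$ term; note also that the central term $\sK$ is never produced, because $n+m=0$ does not occur within $\aslm$. The same count, now tracking both the mode index and the $\mathfrak{sl}_2$-charge, shows $\sJ_{-1}^+$ is never produced either, handling $\widetilde D'$. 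Thus $\widetilde D,\widetilde D'$ descend to derivations of $\U\aslm$ with the required values.

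Finally, the extension to $A^-$ is the standard prolongation of a derivation to a localization in which $\sJ_0^-,\sJ_{-1}^+$ are inverted: the value on an inverse is forced by $0=D(x\,x^{-1})=D(x)x^{-1}+x\,D(x^{-1})$, i.e.\ $D(x^{-1})=-x^{-1}D(x)x^{-1}$, and one checks compatibility with the localization relations via the universal property recorded in Appendix \ref{app:localization}. For commutativity I would observe that the commutator $\frac{\partial}{\partial\sJ^-_0}\frac{\partial}{\partial\sJ^+_{-1}}-\frac{\partial}{\partial\sJ^+_{-1}}\frac{\partial}{\partial\sJ^-_0}$ of two derivations is again a derivation, then evaluate it on the generators: it sends $\sJ_0^-$ to $\frac{\partial}{\partial\sJ^-_0}(0)-\frac{\partial}{\partial\sJ^+_{-1}}(1)=0$ and $\sJ_{-1}^+$ to $\frac{\partial}{\partial\sJ^-_0}(1)-\frac{\partial}{\partial\sJ^+_{-1}}(0)=0$, using that a derivation kills the unit. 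A derivation vanishing on a generating set vanishes identically, so the two maps commute on $\U\aslm$, and therefore on $A^-$ as well.
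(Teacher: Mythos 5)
Your proof is correct and follows essentially the same route as the paper: extend the ``coefficient of $\sJ_0^-$'' (resp.\ $\sJ_{-1}^+$) functional by the Leibniz rule to the tensor algebra, check it preserves the defining ideal of $\U\aslm$ because neither generator ever occurs in a bracket of two elements of $\aslm$, prolong to the localization $A^-$, and establish commutativity by evaluating the commutator of the two derivations on the generators. Your explicit mode/charge count and the verification that $\sJ_0^-,\sJ_{-1}^+$ generate $\aslm$ (which underlies uniqueness) are somewhat more detailed than the paper's, which in turn treats the extension to $A^-$ more carefully via the dual-numbers homomorphism $T\mapsto T+\varepsilon\,\frac{\partial}{\partial \sJ^-_0}(T)$.
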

\begin{proof}
The first part is an example of the following general construction: if $\mathfrak l$ is a Lie algebra and $\delta : \mathfrak l \to \C$ is a linear functional annihilating $[\mathfrak l, \mathfrak l] $, then $\delta$ extends to a derivation $\U \mathfrak l \to \U \mathfrak l$. Indeed, we first extend $\delta$ to the tensor algebra of $\mathfrak l$ by
\begin{equation}
    \delta \left( x_1 \otimes \cdots \otimes x_n \right) = \sum_{i=1}^n \delta(x_i) x_1 \otimes \cdots \otimes x_{i-1} \otimes x_{i+1} \otimes \cdots \otimes x_n.
\end{equation}
Then $\delta$ is a derivation and preserves the ideal generated by elements $x \otimes y - y \otimes x - [x,y]$, so it descends to a derivation on $\U \mathfrak l$.

Next we extend (say, $\frac{\partial}{\partial \sJ^-_0}$) to $A^-$. We have a homomorphism of algebras $\U \aslm \to A^- [\varepsilon]$ ($\varepsilon$ being an additional commuting generator satisfying $\varepsilon^2=0$) given by
\begin{equation}
    T \mapsto T + \varepsilon \frac{\partial}{\partial \sJ^-_0}(T).
    \label{eq:derivation_homo}
\end{equation}
Every monomial in $\sJ_0^-$ and $\sJ_{-1}^+$ is mapped to a sum of an invertible element and a nilpotent element, thus an invertible element. Hence the map \eqref{eq:derivation_homo} extends uniquely to an algebra homomorphism $A^- \to A^-[\varepsilon]$. It is easy to check that it is of the form \eqref{eq:derivation_homo} with some derivation $\frac{\partial}{\partial \sJ^-_0}$ extended from $\U \aslm$. 

For the final claim, note that $\left [ \frac{\partial}{\partial \sJ^-_0} , \frac{\partial}{\partial \sJ^+_{-1}} \right]$ is a derivation annihilating both $\sJ_0^-$ and $\sJ_{-1}^+$.
\end{proof}

We remark that the following analog of Poincar\'e Lemma holds: if $f,g \in \U \asl$ are such that $\frac{\partial f}{\partial \sJ_0^-} = \frac{\partial g}{\partial \sJ_{-1}^+} $, there exists $h$ such that $f = \frac{\partial h}{\partial \sJ_{-1}^+}$, $g = \frac{\partial h}{\partial \sJ_0^-}$. As we do not need this fact, we omit the proof.

Below we use the multi-index notation as in Subsection \ref{sec:KK_determinant}. For a multi-index $I=(i_{0,-},i_{1,+},\dots)$ we define the length
\begin{equation}
    |I| = \sum i_{l,a} = i_{0,-} + \sum_{l=1}^\infty \sum_a i_{l,a}.
\end{equation}
Note that if $I \in \mathcal P_{n,m}$, then $|n-m |\leq |I| \leq n+m$. The maximal value $|I|=n+m$ is attained only for the multi-index $I$ corresponding to $\sJ_{-I} = (\sJ_{-1}^+)^n (\sJ_0^-)^m$.

\begin{proposition} \label{prop:O_properties}
$\mathcal O^a_{r,l}(t)$ takes the form
\begin{equation}
\mathcal O^a_{r,l}(t) = \sum_{I \in \mathcal P_{rl,(r+a)l}} \mathcal O^a_{r,l,I}(t)\sJ_{-I},\label{eq:O_basis_expansion}
\end{equation}
where $\mathcal O^a_{r,l,I}(t)$ is a scalar polynomial of degree bounded by $(2r+a)l -|I|$. We have $\mathcal O^a_{r,l,I}(t)=1$ for the multi-index $I=((r+a)l,rl,0,0,\dots)$ of maximal $|I|$. In particular $\mathcal O^a_{r,l}(t) \neq 0$ for every $t \in \C$ and the degree of $\mathcal O^a_{r,l}(t)$ is bounded by $(2r+a-1)l$.

We have commutation rules
    \begin{align}
        [\sJ_0^+,\mathcal O^{+}_{r,l}(t)] &= \frac{\partial \mathcal O^+_{r,l}(t)}{\partial \sJ^-_0} (rt - l + 2\sJ_0^0+1), \nonumber \\
        [\sJ_1^-,\mathcal O^{+}_{r,l}(t)] &= -\frac{\partial \mathcal O^+_{r,l}(t)}{\partial \sJ^+_{-1}} (rt - l + 2\sJ_0^0+1 + (t-\mathsf{K}-2)), \label{eq:JO_commutators}  \\
        [\sJ_0^+,\mathcal O_{r,l}^-(t)] & = - \frac{\partial \mathcal O_{r,l}^-(t)}{\partial \sJ_0^-} (rt-l-2\sJ_0^0-1), \nonumber \\
        [\sJ_1^-, \mathcal O_{r,l}^-(t)] & = \frac{\partial \mathcal O_{r,l}^-(t)}{\partial \sJ_{-1}^+} (rt-l-2\sJ_0^0 -1 - (t-\mathsf{K}-2)). \nonumber
    \end{align}
\end{proposition}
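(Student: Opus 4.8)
The statement has two essentially independent parts: the structural description of $\mathcal O^a_{r,l}(t)$ relative to the canonical basis (the expansion \eqref{eq:O_basis_expansion}, the degree bound, the value $1$ of the top coefficient, and the consequences for $\deg_t\mathcal O^a_{r,l}(t)$ and for nonvanishing) and the four commutation identities \eqref{eq:JO_commutators}. The plan is to prove the first part by an induction on $r$ built on the recursion \eqref{eq:O_commutator_recursion} and its mirror (producing $\mathcal O^-_{r+1,l}(t)$ from $\mathcal O^+_{r,l}(t)$), exactly as in the proof of Proposition \ref{prop:O_are_poly}, and to prove the commutation identities by a direct computation for integer $t$ in the localization, followed by polynomial continuation.

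For the structural part I work inside $\U\aslm[t]$ with the total-degree filtration in which $t$ and every generator $\sJ^b_{-p}$ carry weight $1$, i.e.\ $\deg(t^d\sJ_{-I})=d+|I|$. The support on $\mathcal P_{rl,(r+a)l}$ is the weight record already made in \eqref{eq:def_chi}: each factor of \eqref{eq:MFF_operators} is homogeneous for the bigrading $\deg\sJ^b_{-p}=(p,b)$, the exponent sums do not depend on $t$, so $\mathcal O^a_{r,l}(t)$ is homogeneous of bidegree $(rl,-al)$, which fixes $n=rl$ and $m=(r+a)l$ and gives \eqref{eq:O_basis_expansion}. The degree bound is then equivalent to the assertion that $\mathcal O^a_{r,l}(t)$ lies in filtration level $(2r+a)l$. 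With the base cases \eqref{eq:elementary_O_operators} in hand, the inductive step reads off \eqref{eq:O_commutator_recursion}: the coefficient $\binom{l+rt}{i}$ contributes degree $i$, the factor $(\sJ_0^-)^{2l-i}$ contributes degree $2l-i$, and the key point is that $[\sJ_0^-,\cdot]$ is a derivation sending each generator to a single generator — no central term ever occurs in $[\sJ_0^-,\sJ^b_m]$ — so it preserves the filtration and carries $t$-degree $0$; summing gives degree $\le i+(2r-1)l+(2l-i)=(2r+1)l$ term by term. Passing to the associated graded, the $\sJ_{-1}^0$-free top monomial $(\sJ_{-1}^+)^{rl}(\sJ_0^-)^{(r-1)l}(\sJ_0^-)^{2l}=(\sJ_{-1}^+)^{rl}(\sJ_0^-)^{(r+1)l}$ receives a contribution only from $i=0$ — every $i\ge1$ inserts a factor $\sJ_{-1}^0$ — and there its coefficient is $1$. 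Since $(\sJ_{-1}^+)^{rl}(\sJ_0^-)^{(r+a)l}$ is the unique element of $\mathcal P_{rl,(r+a)l}$ with $|I|=(2r+a)l$, this establishes the degree bound and the top coefficient at once; and because $|I|\ge|n-m|=l$ on $\mathcal P_{rl,(r+a)l}$, the facts $\mathcal O^a_{r,l}(t)\ne 0$ and $\deg_t\mathcal O^a_{r,l}(t)\le(2r+a-1)l$ follow.

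For the commutation identities the plan is to verify them for $t\in\Z$ inside the localization $A^-$, where $\mathcal O^a_{r,l}(t)$ is a genuine product of integer (possibly negative) powers of $\sJ_0^-$ and $\sJ_{-1}^+$, and then extend to all $t\in\C$: both sides are polynomial in $t$ (the left by the structural part, the right since $\frac{\partial}{\partial\sJ^-_0}\mathcal O^a_{r,l}(t)$ is polynomial and the scalar factor is affine in $t$), so agreement on $\Z$ is agreement of polynomials. The computational engine is the pair of single-factor commutators, valid for every $n\in\Z$ in $A^-$,
\begin{equation}
[\sJ_0^+,(\sJ_0^-)^n]=n(\sJ_0^-)^{n-1}(2\sJ_0^0-n+1),\qquad [\sJ_1^-,(\sJ_{-1}^+)^n]=n(\sJ_{-1}^+)^{n-1}(\mathsf K-2\sJ_0^0-n+1),
\end{equation}
together with $[\sJ_0^+,\sJ_{-1}^+]=0$ and $[\sJ_1^-,\sJ_0^-]=0$, so that only the $\sJ_0^-$ (resp.\ $\sJ_{-1}^+$) factors are active and the derivation $\frac{\partial}{\partial\sJ^-_0}$ (resp.\ $\frac{\partial}{\partial\sJ^+_{-1}}$) assembles itself by the Leibniz rule. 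Writing $\mathcal O^+_{r,l}(t)$ with exponents $e_s=l+(r-s)t$, commuting $\sJ_0^+$ rightward produces from the (even-position) factor $s$ the contribution $e_s\,[\mathcal O^+_{r,l}(t)\text{ with }e_s\mapsto e_s-1]\,(2\sJ_0^0+2w_{>s}-e_s+1)$, where $w_{>s}$ is the $\sJ_0^0$-weight of the trailing factors, picked up in transporting $\sJ_0^0$ to the right. The decisive point is that this weight collapses the defect to an $s$-independent scalar: the trailing block pairs into $r-s/2$ consecutive odd/even factors, each of weight $e_{\text{odd}}-e_{\text{even}}=t$, so $w_{>s}=(r-s/2)t$ and $2w_{>s}-e_s=rt-l$ for every $s$. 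Hence the defects combine into $\frac{\partial\mathcal O^+_{r,l}(t)}{\partial\sJ^-_0}(2\sJ_0^0+rt-l+1)$, the first line of \eqref{eq:JO_commutators}; the remaining three identities follow by the same telescoping, the only changes being the active generator, the central term $\mathsf K$ in the $\sJ_1^-$ brackets, and the parity of the trailing block (for odd $s$ one gets $2w_{>s}+e_s=(r+1)t-l$, which reproduces the $(t-\mathsf K-2)$ shift).

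I expect the telescoping to be the main obstacle: everything rests on the fact that the position-dependent shifts $w_{>s}$, generated by moving $\sJ_0^0$ (or $\mathsf K-2\sJ_0^0$) past the trailing part of the product, combine — precisely because of the affine exponent pattern $e_s=l+(r-s)t$ — into a single $s$-independent factor multiplying the derivative; this is exactly where the MFF pattern is used, and a less careful reordering would yield an unwieldy $s$-dependent operator rather than the clean derivative-times-scalar form. The secondary technical point is to legitimize treating $\mathcal O^a_{r,l}(t)$ as a product of negative powers for integer $t$ outside the naive range $r|t|\le l$ of Lemma \ref{lem:O_nonsingular}, and to confirm that the single-factor commutators above persist for negative exponents; both are handled by working in $A^-$, where $\sJ_0^-,\sJ_{-1}^+$ are invertible and the derivations $\frac{\partial}{\partial\sJ^-_0},\frac{\partial}{\partial\sJ^+_{-1}}$ extend.
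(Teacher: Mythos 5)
Your proposal is correct and follows essentially the same route as the paper: induction on $r$ through the $\mathcal O^+\leftrightarrow\mathcal O^-$ recursion of Proposition \ref{prop:O_are_poly} for the expansion, degree bound and top coefficient, and then Leibniz plus the single-factor commutators with the weight operator dragged to the right (your explicit computation of $w_{>s}$ is exactly the "simple calculation" the paper leaves implicit). One small indexing slip: the mirror recursion produces $\mathcal O^-_{r+1,l}(t)$ from $\mathcal O^+_{r-1,l}(t)$, not from $\mathcal O^+_{r,l}(t)$ (both have $2r+1$ factors, so the latter cannot be a proper subword of the former); with the corrected index your filtration count gives $(2r-1)l+2l=(2(r+1)-1)l$ as required, whereas taken literally your parenthetical would overshoot the bound.
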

\begin{proof}
It is sufficient to prove the claims for $t \in \Z$. To obtain \eqref{eq:O_basis_expansion}, we induct on $r$ as in the proof of Proposition \ref{prop:O_are_poly}. As a representative example, we consider $\mathcal O^+_{r,l}(t)$ for $l-rt \leq 0$. We have
\begin{equation}
    \mathcal O^+_{r,l}(t) (\sJ_0^-)^{rt-l} = (\sJ_0^-)^{l+rt} \mathcal O^-_{r,l}(t). \label{eq:Oplus_from_Ominus}
\end{equation}
The induction hypothesis for $\mathcal O^-_{r,l}(t)$ and the Poincar\'{e}-Birkhoff-Witt theorem imply that
\begin{equation}
    (\sJ_0^-)^{l+rt} \mathcal O^-_{r,l}(t) = \sum_{I \in \mathcal P_{rl,(r-1)l}} c^-_{r,l,I}(t) (\sJ_0^-)^{l+rt} \sJ_{-I} =\sum_{I \in \mathcal P_{rl,r(l+t)}} d_I(t) \sJ_{-I},
\end{equation}
and $d_I(t)=1$ for maximal $|I|$. By Poincar\'e-Birkoff-Witt and \eqref{eq:Oplus_from_Ominus}, we have $d_{I}(t) =0$ for multi-indices $I$ with $i_{0,-} \leq rt-l$ and $\deg(d_I(t)) \leq r(2l+t) - |I|$. Performing the division we get the claim for $\mathcal O^+_{r,l}(t)$.

Next we explain how to get the first line in \eqref{eq:JO_commutators}, the other being analogous. First check that for any $n \in \Z$ we have 
\begin{equation}
    [\sJ_0^+,(\sJ_0^-)^n] = n (\sJ_0^-)^{n-1} (2\sJ_0^0-n+1).
\end{equation}
Now we compute using the Leibniz rule
\begin{align}
   [\sJ_0^+, \mathcal O^+_{r,l}(t)] = (l+rt) (\sJ_0^-)^{l+rt-1} (2\sJ_0^0 - l-rt+1) (\sJ_{-1}^+)^{l+(r-1)t} \cdots (\sJ_0^{-})^{l-rt} + \dots, \label{eq:JO_commutator_step1}
\end{align}
where dots indicate terms where $[\sJ_0^+, \cdot]$ acts on subsequent factors in $\mathcal O^+_{r,l}(t)$ which are powers of $\sJ_0^-$. Now consider dragging the factor $(2\sJ_0^0 - l-rt+1)$ through all elements to its right. A simple calculation shows that it then becomes $rt-l+2\sJ_0^0+1$, and that exactly the same monomial is obtained from commuting the corresponding terms arising from the omitted terms in \eqref{eq:JO_commutator_step1}. 
\end{proof}

Recall the definition in \eqref{eq:def_chi}. By Proposition \ref{prop:O_properties}, $\chi^a_{r,l}$ is nonzero for every $k,j$ and we have
\begin{align}
    \sJ_0^+ \chi^a_{r,l} &= a \, T^a_{r,l}(k,j) \frac{\partial \mathcal O^a_{r,l}}{\partial \sJ_0^-} \chi^a_{r, l }, \\
    \sJ_{1}^- \chi^a_{r,l} & = - a \, T^a_{r,l}(k,j) \frac{\partial \mathcal O^a_{r,l}}{\partial \sJ_{-1}^+} \chi^a_{r,l}. \nonumber
\end{align}
In particular, if $T^a_{r,l}=0$, then $\chi^a_{r,l}$ is a singular vector. 

We remark that degree bounds in Proposition \ref{prop:O_properties} are sharp in the explicit examples of $\chi^a_{r,l}$, see \eqref{eq:singular_first_example} and below.

\subsection{Normalization factors of singular vectors} \label{sec:norm}

Thanks to the close relationship between $2-$dim CFT models with Virasoro and $\asl$ chiral symmetry algebras \cite{Ribault:2005wp,Hikida:2007tq}, many constructions and results from the former models have their counterparts in the latter ones. This concerns in particular norms of the so called Virasoro logarithmic primary fields 
in \cite{Zamolodchikov:2003yb,Yanagida:2010qm}. They are crucial ingredients of the Zamoldchikov's recurrence representation of the Virasoro conformal blocks
\cite{Zamolodchikov:ie,Zam0,Zam} and important elements of the AGT relation, where they appear as certain blow-up factors for the instanton moduli space 
\cite{Nekrasov:2002qd}. For the $\asl$ highest weight modules the corresponding normalization factor is defined in \eqref{eq:norms:definition}.

Polynomial $g(\chi^a_{r,l},\chi^a_{r,l})$ vanishes for $T^a_{r,l}=0$, so it is divisible by $T^a_{r,l}$. We set
\begin{equation}
    \label{eq:norms:definition}
    N^a_{r,l} = \left. \frac{g(\chi^a_{r,l},\chi^a_{r,l})}{T^a_{r,l}} \right|_{T^a_{r,l}=0}.
\end{equation}
We regard $N^a_{r,l}$ as a polynomial in $k$ (eliminating $j$ using $T^a_{r,l}=0$). The next Lemma will allow to conclude that $N^a_{r,l}$ is not identically zero.

\begin{lemma} \label{lem:inv_metric_lemma}
    Let $V$ be a finite-dimensional complex vector space and $B_{\epsilon}$ a family of bilinear forms on $V$ depending polynomially on $\epsilon$. Suppose that for some $t \in \N \setminus \{  0 \}$ we have
    \begin{itemize}
        \item relative to some (and hence any) basis of $V$, $\det (B_{\epsilon}) = c \epsilon^t + O(\epsilon^{t+1})$ for some $c \neq 0$,
        \item there exists a subspace $K \subset \Ker(B_0)$ of dimension $t$.
    \end{itemize}
Then $K = \Ker(B_0)$. $g(\cdot,\cdot) = \lim \limits_{\epsilon \to 0} \frac{1}{\epsilon} B_{\epsilon}(\cdot,\cdot)$ is a non-degenerate bilinear form on $K$. 
\end{lemma}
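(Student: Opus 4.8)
The plan is to reduce everything to a block analysis of the Gram matrix $M(\epsilon)=\big(B_\epsilon(e_i,e_j)\big)_{i,j}$ in a well-chosen basis, using crucially that the $B_\epsilon$ to which the lemma is applied are symmetric (they are restrictions of the symmetric contragredient form $g$). First I would choose a basis $e_1,\dots,e_d$ of $V$ whose first $s$ vectors span $\Ker(B_0)$, where $s=\dim\Ker(B_0)$; since $K\subseteq\Ker(B_0)$ we already have $t=\dim K\le s$. Symmetry of $B_0$ means $\Ker(B_0)$ is simultaneously the left and the right radical, so in this basis the first $s$ rows of $M(0)$ vanish. Hence each of the first $s$ rows of $M(\epsilon)$ is divisible by $\epsilon$, so $\epsilon^s\mid\det M(\epsilon)$ and the order of vanishing of $\det M(\epsilon)$ at $\epsilon=0$ is at least $s$. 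The hypothesis pins this order to exactly $t$, whence $s\le t$; combined with $t\le s$ this gives $s=t$ and therefore $K=\Ker(B_0)$, the first assertion. The determinant may be taken in an arbitrary basis, since a change of basis multiplies it by the nonzero constant $(\det X)^2$ and so does not affect the order of vanishing.

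Next I would verify that $g$ is well defined on $K$. For fixed $v,w\in K$ the function $\epsilon\mapsto B_\epsilon(v,w)$ is a polynomial vanishing at $\epsilon=0$ (because $v\in K=\Ker(B_0)$), so $\tfrac1\epsilon B_\epsilon(v,w)$ extends to $\epsilon=0$ with value $\partial_\epsilon B_\epsilon(v,w)\big|_{\epsilon=0}$; bilinearity in $v,w$ is immediate, so $g=\lim_{\epsilon\to0}\tfrac1\epsilon B_\epsilon$ is a genuine bilinear form on $K$. In the chosen basis, now with $e_1,\dots,e_t$ a basis of $K=\Ker(B_0)$, its Gram matrix is the matrix $\widetilde P(0)$ determined by $P(\epsilon)=\epsilon\,\widetilde P(\epsilon)$, where $P$ is the upper-left $t\times t$ block of $M$.

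Finally, non-degeneracy. Writing $M(\epsilon)=\left(\begin{smallmatrix}P&Q\\R&S\end{smallmatrix}\right)$ in the $t,(d-t)$ block decomposition, symmetry together with $K=\Ker(B_0)$ forces $P(0)=Q(0)=R(0)=0$, so that $M(0)=\mathrm{diag}(0,S(0))$ with $S$ the $(d-t)\times(d-t)$ block. Since $\mathrm{rank}\,M(0)=d-t$, the block $S(0)$ is invertible. Factoring $\epsilon$ out of each of the first $t$ rows of $M(\epsilon)$ (all divisible by $\epsilon$ because $P(0)=Q(0)=0$) gives $\det M(\epsilon)=\epsilon^t\det\!\left(\begin{smallmatrix}\widetilde P(\epsilon)&\widetilde Q(\epsilon)\\ \epsilon\,\widetilde R(\epsilon)&S(\epsilon)\end{smallmatrix}\right)$, and evaluating the remaining determinant at $\epsilon=0$ yields the leading coefficient $c=\det\widetilde P(0)\cdot\det S(0)$, the reduced matrix being block triangular there because its lower-left block $\epsilon\,\widetilde R$ vanishes. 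As $c\neq0$ and $\det S(0)\neq0$, we conclude $\det\widetilde P(0)\neq0$, i.e.\ $g$ is non-degenerate on $K$. I expect the only genuinely delicate point to be this last factorization: it is precisely where symmetry is indispensable, since it is symmetry that makes the first $t$ rows \emph{and} columns of $M(0)$ vanish and thereby renders the reduced matrix block triangular at $\epsilon=0$; for a non-symmetric $B_\epsilon$ the leading coefficient need not factor through $\det\widetilde P(0)$ and the conclusion can genuinely fail.
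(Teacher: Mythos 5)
Your proof is correct and follows essentially the same route as the paper's: an adapted basis gives $\epsilon^{\dim\Ker(B_0)}\mid\det B_\epsilon$ (hence $K=\Ker(B_0)$), and a block decomposition with the complement of $K$, on which $B_0$ is nondegenerate, yields the leading coefficient $c=\det\widetilde P(0)\det S(0)$ — the paper phrases this via the Schur complement rather than your row-extraction, but the content is identical. Your closing observation is also well taken: the lemma as stated omits a symmetry hypothesis that both your argument and the paper's implicitly use (e.g.\ $B_\epsilon=\bigl(\begin{smallmatrix}h&\epsilon\phi\\\epsilon\phi^{T}&\epsilon g\end{smallmatrix}\bigr)$ and the nondegeneracy of $h_0$), and without it the conclusion indeed fails, as the example $B_\epsilon=\bigl(\begin{smallmatrix}0&1\\\epsilon&0\end{smallmatrix}\bigr)$ shows; in the intended application $B_\epsilon=g_{n,m}$ is symmetric, so no harm is done.
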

\begin{proof}
Let $l = \dim(\Ker (B_0))$. Choose a basis $e_1, \dots, e_l$ of $\Ker(B_0)$ and complete to a basis $e_1, \dots, e_N$ of $V$. Relative to this basis, the first $l$ rows and columns of $B_{\epsilon}$ are divisible by $\epsilon$, so $\det(B_{\epsilon})$ is divisible by $\epsilon^l$. Hence $l \leq t$ and thus $\Ker(B_0) = K$.

Now let $C$ be a complement of $K$ in $V$ and let $h_{\epsilon}$ be the restriction of $B_{\epsilon}$ to $C$ and $g_{\epsilon}$ the restriction of $\frac{1}{\epsilon} B_{\epsilon}$ to $K$. Clearly $h_0$, and hence $h_{\epsilon}$ for all sufficiently small~$\epsilon$, is nondegenerate. We have a block form
\begin{equation}
    B_{\epsilon} = \begin{bmatrix}
    h_{\epsilon} & \epsilon \phi_{\epsilon} \\
  \epsilon \phi^T_{\epsilon} & \epsilon g(\epsilon)
    \end{bmatrix}.
\end{equation}
Thus 
\begin{equation}
    \det(B_{\epsilon}) = \det(h_{\epsilon}) \det(\epsilon g_{\epsilon} - \epsilon^2 \phi^T_{\epsilon} h_{\epsilon}^{-1} \phi_{\epsilon}) = \epsilon^t \det(h_0) \det(g_0) + O(\epsilon^{t+1}).
\end{equation}
We conclude that $g=g_0$ has a nonzero determinant.
\end{proof}

\begin{proposition} \label{prop:N_zeros_constraint}
$N^a_{r,l}$ is nonzero if $k \neq -2$ and $T^{a'}_{r',l'} \neq 0$ for all $(a',r',l') \neq (a,r,l)$ such that $r'l' \leq rl$ and $(r'+a')l' \leq (r+a)l$. In particular, $N^a_{r,l}$ is not identically zero and all its roots are rational.
\end{proposition}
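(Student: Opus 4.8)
The plan is to apply Lemma \ref{lem:inv_metric_lemma} with deformation parameter $\epsilon = T^a_{r,l}(k,j)$ to the Gram matrix of $g$ on the weight space $\V^{k,j}_{[rl,(r+a)l]}$, holding $k$ fixed and letting $j$ vary. First I would fix a value of $k \neq -2$ satisfying the stated genericity hypothesis and let $j_0$ be the unique solution of $T^a_{r,l}(k,j_0)=0$; since $T^a_{r,l}(k,j)$ is affine in $j$ with nonzero slope, $\epsilon = T^a_{r,l}(k,j)$ is a legitimate deformation parameter. The observation that makes the Lemma applicable is that, in the canonical basis $\{\sJ_{-I}v_{k,j}\}_{I \in \mathcal P_{rl,(r+a)l}}$, both objects I need are under control: $g_{rl,(r+a)l}$ becomes a matrix $B_\epsilon$ polynomial in $\epsilon$, while by \eqref{eq:O_basis_expansion} the vector $\chi^a_{r,l} = \sum_I \mathcal O^a_{r,l,I}(k+2)\sJ_{-I}v_{k,j}$ has coefficients $\mathcal O^a_{r,l,I}(k+2)$ independent of $j$, hence of $\epsilon$. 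Thus it defines a single fixed vector $\chi \in V$ in exactly the sense required by Lemma \ref{lem:inv_metric_lemma}.

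Second, I would read off the order of vanishing of $\det(B_\epsilon)$ at $\epsilon=0$ from the Kac--Kazhdan formula \eqref{eq:Kac_det}. There the factor $T^{a'}_{r',l'}$ occurs with exponent $d_{rl-r'l',\,(r+a)l-(r'+a')l'}$, which is nonzero precisely when $r'l' \leq rl$ and $(r'+a')l' \leq (r+a)l$. The genericity hypothesis says that among these factors only $T^a_{r,l}$ itself vanishes at $(k,j_0)$, and it appears to the power $d_{0,0}=1$; since $k \neq -2$ the power of $\frac{k+2}{2}$ is a nonzero constant and the factorials contribute a nonzero constant. Hence $\det(B_\epsilon)$ vanishes to order exactly $t=1$.

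Third, at $\epsilon=0$ the vector $\chi$ is the singular vector $\chi^a_{r,l}|_{j=j_0}$, which is nonzero by Proposition \ref{prop:O_properties} and lies in $\Ker(g)$, hence in $\Ker(B_0)$; so $K=\C\chi$ is a one-dimensional subspace of $\Ker(B_0)$, matching $t=1$. Lemma \ref{lem:inv_metric_lemma} then gives $\lim_{\epsilon\to 0}\frac{1}{\epsilon}B_\epsilon(\chi,\chi) \neq 0$, and this limit is precisely $N^a_{r,l}$ evaluated at the chosen $k$ (using that $g(\chi^a_{r,l},\chi^a_{r,l})$ is divisible by $T^a_{r,l}$). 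This establishes the first assertion.

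Finally, for the ``in particular'' claims I would show that the excluded set of $k$ is finite and rational. Eliminating $j$ through $T^a_{r,l}(k,j_0)=0$ and substituting into $T^{a'}_{r',l'}(k,j_0)=0$ yields $(r'-a'a r)(k+2) = l'-a'a l$; for each of the finitely many triples $(a',r',l')$ obeying the size constraints this has at most one solution $k$, necessarily rational, and the degenerate ``all $k$'' case $r'=a'ar,\ l'=a'al$ forces $(a',r',l')=(a,r,l)$ and so does not arise. Therefore $N^a_{r,l}(k)\neq 0$ for all but finitely many rational $k$, whence $N^a_{r,l}$ is not identically zero and all its roots are rational. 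The only genuinely delicate step is the second one: matching the size constraints in the hypothesis to the exponents in \eqref{eq:Kac_det} to certify that the order of vanishing is exactly one, combined with the remark that $\chi^a_{r,l}$ is $\epsilon$-independent in the canonical basis, which is what permits Lemma \ref{lem:inv_metric_lemma} to be invoked with a fixed kernel vector.
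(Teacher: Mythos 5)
Your proposal is correct and follows exactly the route the paper intends: its proof of Proposition \ref{prop:N_zeros_constraint} is the one-line ``Follows from Lemma \ref{lem:inv_metric_lemma} and \eqref{eq:Kac_det}'', and you have supplied precisely the missing details — taking $\epsilon=T^a_{r,l}(k,j)$ as the deformation parameter at fixed $k$, using the $j$-independence of the coefficients $\mathcal O^a_{r,l,I}(k+2)$ to get a fixed kernel vector, reading off the order-one vanishing of the determinant from the exponents $d_{rl-r'l',(r+a)l-(r'+a')l'}$, and handling the elimination of $j$ for the rationality claim. No gaps.
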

\begin{proof}
    Follows from Lemma \ref{lem:inv_metric_lemma} and \eqref{eq:Kac_det}.
\end{proof}

We have evaluated by explicit computations a few examples of $N^a_{r,l}$. The formulas appear most symmetric when expressed in terms of the variable $t := k+2$. 
\begin{itemize}
    \item $N_{0,l}^+=N_{1,l}^- =  l ((l-1)!)^2 $, 
    \item $N_{2,1}^+ = \left(t-1\right)t^2\left(t+1\right)
    \left(2t-1\right)\left(2t\right)^2\left(2t+1\right)$,
    \item $N_{1,2}^+ =  2\left(t-2\right)\left(t-1\right)^2 t^2 \left(t+1\right)^2\left(t+2\right)$. 
\end{itemize}
We remark that these examples show that constraints on zeros of $N^a_{r,l}$ in Proposition \ref{prop:N_zeros_constraint} are not sharp. For example Proposition \ref{prop:N_zeros_constraint} does not exclude $N_{2,1}^+$ vanishing at $k+2 = \frac{2}{3}$.

We conjecture that, up to a sign, $N_{r,l}^+$ and $N_{r+1,l}^-$ are given by
\begin{equation}
\label{eq:singular:vector:norms:result}
 \prod_{\substack{m \in \{ 1-l, \dots, l \} \\ q \in \{ -r, \dots , r \} \\ (m,q) \neq (0,0) }} (q(k+2)+m). 
\end{equation}
This conjecture is based on the explicit examples above and the form of the norm of Virasoro logarithmic primaries, see \cite{Yanagida:2010qm}, Theorem 1.2.

\subsection{Free fermion construction} \label{sec:fermi}

Let $\f$ be the Lie superalgebra spanned by $\{ \psi_n^i, \psibar_n^i \}_{n \in \Z + \frac12}^{i =1,2} \cup \{ I \}$, with all $\psi, \psibar$ odd, $I$ even and only nonzero superbrackets
\begin{equation}
[\psi_n^i, \psibar_m^j] = \delta_{n+m,0} \delta^{i,j} I.
\end{equation}
Fermionic Fock space $\F$ is defined as the quotient of $\U \f$ by the left ideal generated by $\{ \psi_n^i, \psibar_n^i \}_{n >0}^{i=1,2} \cup \{ I -1 \}$. We let $f_0$ be the image of $1$ in $\F$ and let 
\begin{equation}
f_{\frac12} = \psibar_{-\frac12}^1 f_0.
\end{equation}

$\F$ is acted upon by local fields
\begin{equation}
    \psi^i(z) = \sum_{n \in \Z + \frac12} \frac{\psi_n^i}{z^{n+ \frac12}}, \qquad \psibar^i(z) = \sum_{n \in \Z + \frac12} \frac{\psibar_n^i}{z^{n+ \frac12}}
\end{equation}
satisfying the OPE
\begin{equation}
    \psi^i(z) \psibar^j(w) \sim \frac{1}{z-w} \delta^{i,j}.
\end{equation}
We define currents (denoted $K^a$ rather than $\sJ^a$ to avoid a clash of notation later)
\begin{equation}
    K^+(z) = \psibar^1(z)\psi^2(z),
    \hskip 10mm
    K^-(z) = \psibar^2(z)\psi^1(z),
\end{equation}
and
\begin{equation}
    K^0(z)= \frac12\left(:\hskip -2pt\psibar^1(z)\psi^1(z)\hskip -3pt:-:\hskip -2pt\psibar^2(z)\psi^2(z)\hskip -3pt:\right).
\end{equation}
By Wick's formula, they satisfy the OPE \eqref{eq:current_OPE} with $k=1$, so $\F$ is an $\asl$-module at level $1$. Vectors $f_0, f_{\frac12}$ are both annihilated by $\aslp$ and $K_0^0 f_{\epsilon} = \epsilon f_{\epsilon}$ for $\epsilon \in \{ 0 , \frac12 \}$.

Let $\langle \cdot | \cdot \rangle$ be the symmetric bilinear form on $\F$ uniquely determined by conditions
\begin{equation}
    \langle f_0 | f_0 \rangle = 1 , \qquad \forall f, f' \in \F \ \langle f | \psi^i_n f' \rangle = \langle \psibar^i_{-n} f | f' \rangle.
\end{equation}
Then we have also
\begin{equation}
    \langle f | K^a_n f' \rangle = \langle K^{-a}_{-n} f | f' \rangle.
    \label{eq:fJ_conj}
\end{equation}

\begin{proposition} \label{prop:Fermi_Rep}
The $\asl$-submodule of $\F$ generated by $f_{\epsilon}$ is isomorphic to $\H^{1, \epsilon}$.
\end{proposition}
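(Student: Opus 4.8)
The plan is to show that $f_\epsilon$ is a cyclic highest weight vector for the submodule it generates, and then identify that submodule with the irreducible module $\H^{1,\epsilon}$. Since Proposition~\ref{prop:hw_weights} guarantees $\H^{k,j}=\V^{k,j}/\rad(\V^{k,j})$ is the unique irreducible highest weight module of a given weight, it suffices to prove that the submodule $M_\epsilon := \U\asl \cdot f_\epsilon$ is \emph{irreducible} with highest weight $(1,\epsilon)$. The preceding discussion already records that each $f_\epsilon$ is annihilated by $\aslp$ and satisfies $K^0_0 f_\epsilon = \epsilon f_\epsilon$, so $f_\epsilon$ is a highest weight vector of weight $(1,\epsilon)$; it is cyclic in $M_\epsilon$ by definition. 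Thus $M_\epsilon$ is a highest weight module, and by Proposition~\ref{prop:hw_weights} it is a quotient of $\V^{1,\epsilon}$, with a unique (up to scale) contragradient form $g$ whose kernel is $\rad(M_\epsilon)$.

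First I would use the contragradient form. The form $\langle\cdot|\cdot\rangle$ on $\F$ restricts to $M_\epsilon$, and by \eqref{eq:fJ_conj} it is contragradient for the $\asl$-action; by uniqueness in Proposition~\ref{prop:gker} it must agree (up to scale) with the canonical form $g$ on $M_\epsilon$. Hence, by Proposition~\ref{prop:gker}, $M_\epsilon$ is irreducible if and only if this fermionic form is nondegenerate on $M_\epsilon$. So the task reduces to showing that $\langle\cdot|\cdot\rangle$, restricted to $M_\epsilon$, has trivial radical. The cleanest route is to observe that the fermionic Fock space $\F$ carries a manifestly nondegenerate form (the $\psi/\psibar$ pairing is perfect on each finite-dimensional weight space, as the $\psi,\psibar$ modes generate a Clifford algebra acting irreducibly on $\F$), and then argue that $M_\epsilon$ sits inside $\F$ compatibly with this nondegeneracy.

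The key computational input is to pin down $M_\epsilon$ explicitly. I expect $\F$ decomposes as a direct sum of the two submodules $M_0$ and $M_{1/2}$ (corresponding to the charge parity of the fermion number), and that each $M_\epsilon$ is exactly the span of a ``charged'' sector. Concretely, I would compute the characters: the graded dimension of $\F$ is governed by the free-fermion partition function, while the character of $\H^{1,\epsilon}$ is the level-one affine $\asl$ character (a classical theta-function-over-eta expression). Matching these — typically via the Jacobi triple product or the standard boson-fermion/affine character identity at level one — shows that $\dim (M_\epsilon)_{\Delta,j} = \dim \H^{1,\epsilon}_{\Delta,j}$ in every weight space, forcing the surjection $\V^{1,\epsilon}\to M_\epsilon$ followed by $M_\epsilon \hookrightarrow \H^{1,\epsilon}$ to be an isomorphism. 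Alternatively, and perhaps more in the spirit of the surrounding text, I would verify directly that the canonical form $g$ is nondegenerate on $M_\epsilon$ by exhibiting, for each nonzero weight vector, a dual partner under the $\psi/\psibar$ pairing inside $M_\epsilon$.

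The main obstacle will be controlling the radical of the form on $M_\epsilon$ rigorously, i.e.\ ruling out that $M_\epsilon$ is a \emph{proper} quotient of $\V^{1,\epsilon}$ strictly larger than $\H^{1,\epsilon}$. The Fock space form is nondegenerate on all of $\F$, but a priori a vector of $M_\epsilon$ could pair nontrivially with something \emph{outside} $M_\epsilon$ while pairing to zero within $M_\epsilon$; so I must confirm that $M_\epsilon$ is nondegenerately paired with itself, not merely that $\F$ is nondegenerate. This is where the decomposition $\F = M_0 \oplus M_{1/2}$ into \emph{mutually orthogonal} submodules is essential: since distinct weight spaces are orthogonal under a contragradient form (as noted after the definition of contragradient forms), and the two summands live in disjoint $j$-sectors (half-integer vs.\ integer charge), the restriction of the nondegenerate form to each $M_\epsilon$ remains nondegenerate. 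Establishing this orthogonal splitting — equivalently, that $\F$ contains no vectors of weight $(1,\epsilon)$ type other than those in $M_\epsilon$ — is the crux, and I would secure it via the Kac--Kazhdan reducibility criterion (Corollary~\ref{cor:reducibility_dim1}): at level $k=1$ one checks that $T^a_{r,l}(1,\epsilon)\neq 0$ for the relevant $(a,r,l)$ when $\epsilon\in\{0,\tfrac12\}$ excludes degenerations, so $\V^{1,\epsilon}$ is already irreducible and every nonzero highest weight module of that weight is automatically $\H^{1,\epsilon}$, collapsing the whole argument.
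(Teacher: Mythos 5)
Your reduction of the problem is sound and matches the paper's strategy: $f_\epsilon$ is a cyclic highest weight vector of weight $(1,\epsilon)$, the fermionic form restricts to a contragradient form on $M_\epsilon=\U\asl\cdot f_\epsilon$ which by uniqueness (Proposition \ref{prop:gker}) is the canonical $g$, so everything hinges on showing that $\langle\cdot|\cdot\rangle$ is nondegenerate on $M_\epsilon$. However, the two concrete mechanisms you offer for that last step both fail. First, your closing claim that the Kac--Kazhdan criterion shows $\V^{1,\epsilon}$ is already irreducible is false: with $T^a_{r,l}(k,j)=r(k+2)-l+a(2j+1)$ one has $T^+_{0,1}(1,0)=0$ and $T^-_{1,1}(1,\tfrac12)=0$ (also $T^-_{1,2}(1,0)=0$ and $T^+_{0,2}(1,\tfrac12)=0$), so $\sJ_0^-v_{1,0}$ and $\sJ_{-1}^+v_{1,\frac12}$ are singular vectors and both Verma modules are reducible. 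Indeed $\H^{1,\epsilon}$ is an integrable level-one module and is very far from a Verma module (e.g.\ $K_0^-f_0=0$ in $\F$, whereas $(\sJ_0^-)^nv_{1,0}$ are independent in $\V^{1,0}$); the argument does not ``collapse.''

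Second, the decomposition $\F=M_0\oplus M_{1/2}$ on which your orthogonality and character-matching arguments rest is also false --- the paper warns against exactly this in Remark \ref{remark:dual:fermion:currents}: there is a second commuting copy of $\asl$ generated by the $\overline K^a_n$, and each fixed-charge sector of $\F$ is strictly larger than $M_\epsilon$ (already at $L_0$-grade $1$ the charge-zero subspace of $\F$ is $4$-dimensional while $\dim\H^{1,0}_{[1,1]}=1$). Consequently, knowing that the form is nondegenerate on $\F$ and that integer and half-integer charge sectors are orthogonal does not yield nondegeneracy on $M_\epsilon$: a vector of $M_\epsilon$ could in principle pair only with vectors outside $M_\epsilon$ but of the same charge parity, which is precisely the obstacle you identified and then did not overcome. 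What is actually needed (and what the paper's ``easy to check'' step amounts to) is a direct argument that the radical of $\langle\cdot|\cdot\rangle$ restricted to the image of $h:\V^{1,\epsilon}\to\F$ is trivial --- for instance via the positive definiteness of the associated Hermitian Fock-space inner product together with the reality of the structure constants, which forces any isotropic submodule to vanish; from this $\Ker(h)=\rad(\V^{1,\epsilon})$ and $M_\epsilon\cong\H^{1,\epsilon}$ follow as in your first paragraph.
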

\begin{proof}
We have a homomorphism $h : \V^{1, \epsilon} \to \F$ taking $v_{1 , \epsilon}$ to $f_{\epsilon}$. Formula \eqref{eq:fJ_conj}, uniqueness of $g$ on $\V^{1 , \epsilon}$ and $\langle f_\epsilon | f_{\epsilon} \rangle =1$ imply that
\begin{equation}
\forall v,v' \in \V^{1, \epsilon} \    g(v,v') = \langle h(v) | h(v') \rangle.
\end{equation}
It is easy to check that $\langle \cdot | \cdot \rangle$ restricted to the image of $h$ has trivial kernel, so $\rad(\V^{1, \epsilon}) \subset \Ker(h)$ by Proposition \ref{prop:gker}. Thus $\Ker(h) = \rad(\V^{k,j})$ because $h \neq 0$.
\end{proof}

\begin{remark}
\label{remark:dual:fermion:currents}
For the purpose of later use, we have embedded $\H^{1, 0} \oplus \H^{1, \frac12}$ in $\F$. We emphasize that this is not a complete decomposition of $\F$ into irreducible representations of $\asl$. To find the latter, is is useful to consider additional fields
\begin{equation}
    {\overline K}^0(z)=\frac12\sum_i :\psibar^i(z) \psi^i(z):, \qquad {\overline K}^+(z)= \psibar^2(z) \psibar^1(z), \qquad {\overline K}^-(z) =\psi^1(z) \psi^2(z),
\end{equation}
whose coefficients span a second $\asl$ algebra commuting with all $K^a_n$.
\end{remark}


\section{Wakimoto representation} \label{sec:Wakimoto}

\subsection{Fock modules}

Fix $\kappa \neq 0$ and define $k$ by $\kappa^2 = k+2$. We consider the Lie algebra $\w$ spanned by elements $a_n, \beta_n, \gamma_n, I$, $n \in \mathbb Z$, with commutators:
\begin{equation}
    [\gamma_m, \beta_n] = \delta_{m+n} I, \qquad [ a_m,  a_n] = \frac{m}{2} \delta_{m+n} I.
\end{equation}

We define a $\w$-module $\W^{\kappa,j}$, called (Wakimoto) Fock module, as the quotient of $\U \w$ by the left ideal generated by $\{ a_0 - \kappa^{-1} j, I-1 \} \cup \{ \beta_n, \gamma_{n+1}, a_{n+1} \}_{n = 0}^{\infty}$. $\W^{\kappa,j}$ is an irreducible representation of $\w$. Let $w_{\kappa,j}$ be the image of $1$ in $\W^{\kappa,j}$.

We introduce the following local fields on $\W^{\kappa,j}$:
\begin{equation}
\label{free:fields:definition}
\partial \phi(z) = \sum_{n \in \Z} \frac{a_n}{z^{n+1}}, \qquad \beta(z) = \sum_{n \in \Z} \frac{\beta_n}{z^{n+1}}, \qquad \gamma(z) = \sum_{n \in \Z} \frac{\gamma_n}{z^n}.
\end{equation}
Here are their OPEs:
\begin{equation}
  \gamma(z) \beta(w) \sim \frac{1}{z-w}, \qquad \partial \phi(z) \partial \phi(w) \sim \frac12 \frac{1}{(z-w)^2}.
  \label{eq:free_OPE}
\end{equation}
Next, we define currents:
\begin{align}
    J^+(z) & = \beta(z), \nonumber \\
    J^0(z) & = :\gamma(z) \beta(z): + \kappa \partial \phi(z), \label{eq:free_currents} \\
    J^-(z) & = - :\gamma(z)^2 \beta(z): - 2 \kappa \gamma(z) \partial \phi(z) - k \partial \gamma(z). \nonumber
\end{align}
\begin{remark}
To avoid a possible confusion, we use the sans serif font $\sJ^a(z)$ to denote the ``abstract'' $\asl$ level $k$ currents, and italic font $J^a(z),\widetilde{J}^a(z),$ to denote free field realizations of these currents in, respectively, the Fock module and the dual Fock module (see Subsection \ref{ssect:dual:Fock:module}). On the other hand, since we embedded 
$\H^{1,\epsilon}$ in the free fermion space $\F$ we shall denote both the abstract $\asl$ level one currents and their realization in $\F$ using  the italic font $K^a(z).$
\end{remark}
Using \eqref{eq:free_OPE} and Wick's formula for OPE one can check that \eqref{eq:current_OPE} holds. Hence $\W^{\kappa,j}$ is an $\asl$-module, with operators $J^a_n$ defined as coefficients of the expansion \eqref{eq:current_modes}. One checks that $J_0^0 w_{\kappa,j} = j w_{\kappa,j}$ and $\aslp w_{\kappa,j} =0$. We note the commutation rules:
\begin{equation}
    [J^0_0,\beta_n] = \beta_n, \qquad [J_0^0,\gamma_n] = - \gamma_n, \qquad [J^0_0,a_n] =0.
    \label{eq:J00_com}
\end{equation}

\begin{proposition}
Sugawara's construction with fields \eqref{eq:free_currents} gives
\begin{equation}
    T(z) = - :\beta(z) \partial \gamma(z) : + :\partial \phi(z) \partial \phi(z): - \kappa^{-1} \partial^2 \phi(z).
    \label{eq:free_T}
\end{equation}
\end{proposition}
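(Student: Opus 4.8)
The plan is to substitute the free-field currents \eqref{eq:free_currents} directly into the Sugawara formula \eqref{eq:Sugawara}, using $\kappa^{-2}=(k+2)^{-1}$, and to evaluate the three normal-ordered products term by term with Wick's theorem and the elementary contractions read off from \eqref{eq:free_OPE}: $\gamma(z)\beta(w)\sim(z-w)^{-1}$, hence $\beta(z)\gamma(w)\sim-(z-w)^{-1}$, and $\partial\phi(z)\partial\phi(w)\sim\tfrac12(z-w)^{-2}$, together with their derivatives. Throughout, $:A(w)B(w):$ is extracted as the coefficient of $(z-w)^0$ in the product $A(z)B(w)$, so every field and every simple pole must be Taylor-expanded around $z=w$ before the finite part is read off.

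First I would compute $:J^0J^0:$. Writing $J^0={:}\gamma\beta{:}+\kappa\,\partial\phi$, the cross terms contribute $2\kappa:\gamma\beta\,\partial\phi:$ (there is no contraction between the $\beta\gamma$ and $\partial\phi$ sectors), the $\partial\phi\,\partial\phi$ piece contributes $\kappa^2:\partial\phi\,\partial\phi:$, while the product of the two normal-ordered bilinears ${:}\gamma\beta{:}(z)\,{:}\gamma\beta{:}(w)$ requires the generalized Wick theorem: besides the quartic $:\gamma^2\beta^2:$ one gets single contractions whose Taylor expansion yields $:\partial\beta\,\gamma:-:\partial\gamma\,\beta:$, plus a double contraction proportional to $(z-w)^{-2}$ which is purely singular and does not enter $T$.

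Next I would compute $:J^+J^-:$ and $:J^-J^+:$ separately, stressing that they are \emph{not} equal. In $\beta(z)\,J^-(w)$ the contractions of $\beta(z)$ with the $\gamma$'s of $J^-$ reproduce the expected singular part $\tfrac{2J^0(w)}{z-w}+\tfrac{k}{(z-w)^2}$ with $J^0$ already evaluated at $w$, so the finite part is just $-:\gamma^2\beta^2:-2\kappa:\gamma\beta\,\partial\phi:-k:\beta\,\partial\gamma:$. In $J^-(z)\,\beta(w)$, however, the simple pole comes multiplied by $J^0(z)$, and expanding $\tfrac{J^0(z)}{z-w}$ around $z=w$ produces an extra finite term $-2\,\partial J^0$. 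Averaging, the symmetric combination $\tfrac12(:J^+J^-:+:J^-J^+:)$ equals the same quartic and cubic terms minus $\partial J^0={:}\partial\gamma\,\beta{:}+{:}\gamma\,\partial\beta{:}+\kappa\,\partial^2\phi$.

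Finally I would add the two contributions and multiply by $\kappa^{-2}=(k+2)^{-1}$. The quartic $:\gamma^2\beta^2:$ and cubic $:\gamma\beta\,\partial\phi:$ terms cancel, the $:\partial\beta\,\gamma:$ bilinears cancel, and the $:\partial\gamma\,\beta:$ bilinears combine with total coefficient $-(k+2)$, which after division becomes $-:\beta\,\partial\gamma:$; the surviving $\kappa^2:\partial\phi\,\partial\phi:$ and $-\kappa\,\partial^2\phi$ give $:\partial\phi\,\partial\phi:-\kappa^{-1}\partial^2\phi$, producing exactly \eqref{eq:free_T}. The main obstacle is the bookkeeping of composite normal ordering: fixing the sign of the $\beta$--$\gamma$ contractions and, above all, tracking the finite pieces generated by Taylor-expanding the single poles. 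Indeed, the background-charge term $-\kappa^{-1}\partial^2\phi$ arises entirely from the asymmetry between $:J^+J^-:$ and $:J^-J^+:$, and would be missed by any computation that symmetrizes naively or ignores whether the field accompanying the simple pole sits at $z$ or at $w$.
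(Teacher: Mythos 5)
Your proposal is correct and is exactly the paper's approach: the paper's proof is the one-line "Checked with Wick's formula," and your computation fills in those details accurately, including the key subtlety that the background-charge term $-\kappa^{-1}\partial^2\phi$ arises from the $-\partial J^0$ produced by Taylor-expanding the simple pole in $J^-(z)J^+(w)$, so that the asymmetry between $:\!J^+J^-\!:$ and $:\!J^-J^+\!:$ is not lost. The final bookkeeping (cancellation of the quartic and cubic terms, coefficient $-(k+2)$ for $:\!\beta\,\partial\gamma\!:$) checks out.
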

\begin{proof}
Checked with Wick's formula.
\end{proof}

The following commutation rules follow immediately from \eqref{eq:free_T}:
\begin{equation}
[L_0,\beta_n] = -n \beta_n, \qquad [L_0,\gamma_n] = -n \gamma_n, \qquad [L_0,a_n] =-n a_n.
\label{eq:L0_com}
\end{equation}

\begin{proposition} \label{prop:dim_comp}
$L_0$ and $J_0^0$ are diagonalizable in $\W^{\kappa,j}$ and $\mathrm{dim}(\mathcal W^{\kappa,j}_{\Delta',j'}) = \mathrm{dim}(\mathcal V^{k,j}_{\Delta',j'})$ for every $(\Delta',j')$.
\end{proposition}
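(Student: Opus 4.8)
The plan is to exhibit an explicit basis of $\W^{\kappa,j}$ consisting of joint eigenvectors of $L_0$ and $J_0^0$, and then to match its bigraded dimensions term by term against those of the canonical basis of $\V^{k,j}$. First I would read off from the defining left ideal of $\W^{\kappa,j}$ that the annihilation operators are $\beta_n$ ($n\geq 0$), $\gamma_n$ ($n\geq 1$) and $a_n$ ($n\geq 1$), while $a_0$ acts as the scalar $\kappa^{-1}j$ and $I$ as $1$. The remaining generators $\{\beta_{-n}\}_{n\geq 1}$, $\{a_{-n}\}_{n\geq 1}$, $\{\gamma_{-n}\}_{n\geq 0}$ are creation operators, and by the Poincar\'e--Birkhoff--Witt theorem applied to $\w$ the ordered monomials in these creation operators applied to $w_{\kappa,j}$ form a basis of $\W^{\kappa,j}$ (this is the usual Fock-space basis; recall $\W^{\kappa,j}$ is irreducible over $\w$).

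Next, the adjoint-action relations \eqref{eq:J00_com} and \eqref{eq:L0_com} show that each creation operator is a simultaneous eigenvector for $[L_0,\cdot\,]$ and $[J_0^0,\cdot\,]$: the operator $\beta_{-n}$ shifts the $(L_0,J_0^0)$-eigenvalues by $(n,+1)$, the operator $a_{-n}$ by $(n,0)$, and $\gamma_{-n}$ by $(n,-1)$ (the last being valid also for $\gamma_0$, with bidegree $(0,-1)$). Since $w_{\kappa,j}$ satisfies $J_0^0 w_{\kappa,j}=j\,w_{\kappa,j}$ and, being a highest weight vector of weight $(k,j)$ by \eqref{eq:hw_conditions}, also $L_0 w_{\kappa,j}=\Delta_{k,j}\,w_{\kappa,j}$, every basis monomial is a joint eigenvector of $L_0$ and $J_0^0$. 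This already establishes that $L_0$ and $J_0^0$ are diagonalizable.

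It then remains to compare graded dimensions. The canonical basis of $\V^{k,j}$ from Subsection \ref{sec:KK_determinant} is freely generated by $\sJ_0^-$, of bidegree $(0,-1)$, together with $\sJ_{-n}^{+},\sJ_{-n}^{0},\sJ_{-n}^{-}$ ($n\geq 1$) of bidegrees $(n,+1),(n,0),(n,-1)$. The two collections of creation operators thus carry the very same multiset of $(L_0,J_0^0)$-bidegrees, and the two cyclic highest weight vectors $w_{\kappa,j}$ and $v_{k,j}$ share the bidegree $(\Delta_{k,j},j)$. Matching $\beta_{-n}\leftrightarrow\sJ_{-n}^{+}$, $a_{-n}\leftrightarrow\sJ_{-n}^{0}$, $\gamma_{-n}\leftrightarrow\sJ_{-n}^{-}$ (and $\gamma_0\leftrightarrow\sJ_0^-$) therefore yields a bidegree-preserving bijection between the two sets of PBW monomials, whence $\dim(\W^{\kappa,j}_{\Delta',j'})=\dim(\V^{k,j}_{\Delta',j'})$ for every $(\Delta',j')$. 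Equivalently, one may encode everything in the generating function
\[
\mathrm{ch}\,\W^{\kappa,j} = q^{\Delta_{k,j}}\, x^{j}\,\frac{1}{1-x^{-1}}\prod_{n=1}^{\infty}\frac{1}{(1-x q^{n})(1-q^{n})(1-x^{-1}q^{n})},
\]
and observe that the identical expression is produced by the canonical basis of $\V^{k,j}$, making the equality of graded dimensions manifest.

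I do not expect a genuine obstacle here: the argument is essentially bookkeeping of bidegrees. The only non-formal inputs are the identification of the Fock basis, which is immediate from the defining ideal, and the value $L_0 w_{\kappa,j}=\Delta_{k,j}\,w_{\kappa,j}$, which follows from $w_{\kappa,j}$ being a highest weight vector and could alternatively be checked directly from the Sugawara expression \eqref{eq:free_T}. The point that requires a moment's care is simply confirming that the shift $\gamma_0$ of bidegree $(0,-1)$ corresponds precisely to $\sJ_0^-$, so that the ``zero-mode'' directions are matched and the two characters agree exactly rather than only up to the positive-level part.
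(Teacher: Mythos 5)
Your argument is correct and is exactly the paper's proof, which likewise invokes PBW together with the eigenvalue equation for $w_{\kappa,j}$ and the commutation rules \eqref{eq:J00_com}, \eqref{eq:L0_com}, and points to the bases $\{\Gamma_{-I}w_{\kappa,j}\}_I$ of Subsection \ref{sec:bases} indexed by the same multi-indices as the canonical basis $\{\sJ_{-I}v_{k,j}\}_I$. Your bidegree-preserving matching $\beta_{-n}\leftrightarrow\sJ^+_{-n}$, $a_{-n}\leftrightarrow\sJ^0_{-n}$, $\gamma_{-n}\leftrightarrow\sJ^-_{-n}$, $\gamma_0\leftrightarrow\sJ^-_0$ is precisely that correspondence, just written out in full.
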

\begin{proof}
Follows from Poincar\'{e}-Birkhoff-Witt theorem combined the eigenvalue equation satisfied by $w_{\kappa,j}$ and commutation rules (\ref{eq:J00_com}, \ref{eq:L0_com}). See also Subsection \ref{sec:bases} for a discussion of explicit bases indexed by the same sets.
\end{proof}

Let $s : \V^{k,j} \to \W^{\kappa,j}$ be the $\asl$-module map determined by $s (v_{k,j})= w_{\kappa,j}$.

The presence of a third order pole in the OPE
\[
T(z)\partial\phi(w) \sim \frac{\kappa^{-1}}{(z-w)^3} + \frac{\partial}{\partial w} \frac{\partial \phi(w)}{z-w}
\]
shows that $\partial\phi$ is not a primary field and transforms under conformal transformations generated by $T(z)$ (in particular under the inversion $z \to z^{-1}$) in an anomalous way. A manifestation of this anomalous behavior is that looking for a definition of the adjoint of $a_n$ leaving the form of the stress tensor (\ref{eq:free_T}) invariant, or computing the adjoint of the commutator
\[
[L_m,a_n]  = -na_{m+n} +\frac{\kappa^{-1}}{2}m(m+1)\delta_{m,-n}, 
\]
we are led to postulate that $a_n^* = -\kappa^{-1}\delta_{n,0}-a_n$. Such conjugation rule is not compatible with a scalar product on $\W^{\kappa,j}$ because $a_0$ acts on $\W^{\kappa,j}$ as a scalar. It is satisfied for a pairing with the dual module, which we will now describe.

\subsection{Dual Fock modules}
\label{ssect:dual:Fock:module}

$\w$-module $\Wt^{\kappa,j}$ is defined as the quotient of $\U \w$ by the left ideal generated by 
\begin{equation}
 \{ a_0 + \kappa^{-1} (j+1), I-1 \} \cup \{ \beta_{n+1}, \gamma_n, a_{n+1} \}_{n= 0}^\infty.   
\end{equation}
Let $\widetilde w_{\kappa,j}$ be the image of $1$ in $\Wt^{\kappa,j}$.

\begin{proposition}
\label{prop:pairing:W:tildeW}
There exists a unique bilinear form $\langle \cdot | \cdot \rangle  : \Wt^{\kappa,j} \times \W^{\kappa,j} \to \C$ such that $\langle \widetilde w_{\kappa,j}, w_{\kappa,j} \rangle = 1$ and for every $\varphi \in \Wt^{\kappa,j}$ and $w \in \W^{\kappa,j}$ one has:
\begin{gather}
    \langle \varphi | \beta_n w \rangle =  - \langle \beta_{-n} \varphi | w \rangle, \qquad \langle \varphi | \gamma_n w \rangle = \langle \gamma_{-n} \varphi | w \rangle,  \\
    \langle \varphi | a_n w \rangle = \langle (- \kappa^{-1} \delta_{n,0} - a_{-n}) \varphi | w \rangle. \nonumber
\end{gather}
$\langle \cdot | \cdot \rangle$ is non-degenerate.
\end{proposition}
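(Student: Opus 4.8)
The plan is to recognize the sought object as the \emph{contravariant pairing} attached to an anti-involution of $\w$, and then to reduce existence, uniqueness and non-degeneracy to properties of the Fock modules already recorded.

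\textbf{Anti-involution and reformulation.} The three adjunction rules single out the linear map $* : \w \to \w$ given by $\beta_n^* = -\beta_{-n}$, $\gamma_n^* = \gamma_{-n}$, $a_n^* = -\kappa^{-1}\delta_{n,0} - a_{-n}$ and $I^* = I$ (the mode-resolved form of the rule $a_n^* = -\kappa^{-1}\delta_{n,0}-a_n$ anticipated in the previous subsection). A direct check on the two nonzero brackets gives $[X,Y]^* = [Y^*,X^*]$ and $X^{**}=X$, so $*$ is an anti-involution of $\w$ and extends to an anti-automorphism of $\U\w$, again written $*$. Equip the restricted (bigraded) dual $(\Wt^{\kappa,j})^\vee = \bigoplus_{\Delta,j'}(\Wt^{\kappa,j}_{\Delta,j'})^*$ with the action $(X\phi)(\varphi)=\phi(X^*\varphi)$; since $*$ is an anti-involution this is a genuine left $\w$-action. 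A bilinear form obeying the three adjunction rules is then exactly the datum of a $\w$-module homomorphism $\Phi : \W^{\kappa,j}\to(\Wt^{\kappa,j})^\vee$ via $\Phi(w)(\varphi)=\langle\varphi|w\rangle$, and $\langle\widetilde w_{\kappa,j}|w_{\kappa,j}\rangle=1$ becomes $\Phi(w_{\kappa,j})(\widetilde w_{\kappa,j})=1$. It thus suffices to produce a unique such $\Phi$.

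\textbf{Existence and uniqueness.} Writing $\W^{\kappa,j}=\U\w/L$ for its defining left ideal $L$, a homomorphism $\Phi$ is the same as a vector $\phi_0=\Phi(w_{\kappa,j})$ annihilated by $L$, i.e.\ satisfying the same highest-weight relations as $w_{\kappa,j}$: $\beta_n\phi_0=0$ $(n\geq0)$, $\gamma_n\phi_0=0$ $(n\geq1)$, $a_n\phi_0=0$ $(n\geq1)$, $a_0\phi_0=\kappa^{-1}j\,\phi_0$, $I\phi_0=\phi_0$. I take $\phi_0$ to be the functional dual to $\widetilde w_{\kappa,j}$. The bigrading by $(L_0,J^0_0)$-eigenvalues has finite-dimensional pieces, and the commutation rules \eqref{eq:J00_com}, \eqref{eq:L0_com} show that $\widetilde w_{\kappa,j}$ is the unique vector of its bidegree, so $\phi_0$ is well defined and homogeneous. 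Each required relation then follows by bidegree bookkeeping: under $(X\phi_0)(\varphi)=\phi_0(X^*\varphi)$ the operator $X^*$ moves $\varphi$ to a strictly different bidegree (no vector of $\Wt^{\kappa,j}$ lies below the $L_0$-level of $\widetilde w_{\kappa,j}$, and $\beta_0^*=-\beta_0$ strictly raises the $J^0_0$-charge), while $a_0^*$ equals the scalar $\kappa^{-1}j$ on $\Wt^{\kappa,j}$ — which is precisely why the boundary conditions defining $\Wt^{\kappa,j}$ were chosen as they are. This gives $\Phi$, hence existence, with $\Phi(w_{\kappa,j})(\widetilde w_{\kappa,j})=\phi_0(\widetilde w_{\kappa,j})=1$. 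Uniqueness is then immediate: the adjunction rules plus cyclicity reduce any value $\langle Y\widetilde w_{\kappa,j}|Xw_{\kappa,j}\rangle$ to $\langle\widetilde w_{\kappa,j}|Y^*Xw_{\kappa,j}\rangle$, and a Poincar\'e--Birkhoff--Witt reduction (annihilation operators pushed right onto $w_{\kappa,j}$, surviving creation operators pushed left, where their $*$-images annihilate $\widetilde w_{\kappa,j}$) expresses everything through $\langle\widetilde w_{\kappa,j}|w_{\kappa,j}\rangle=1$.

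\textbf{Non-degeneracy.} Read on the other side, the same construction yields a homomorphism $\Psi:\Wt^{\kappa,j}\to(\W^{\kappa,j})^\vee$, $\Psi(\varphi)(w)=\langle\varphi|w\rangle$, the adjunction rules being symmetric under $X\leftrightarrow X^*$. Both $\Phi$ and $\Psi$ are nonzero, taking the value $1$ on $(w_{\kappa,j},\widetilde w_{\kappa,j})$. Since $\W^{\kappa,j}$ and $\Wt^{\kappa,j}$ are irreducible $\w$-modules (the claim for $\Wt^{\kappa,j}$ by the argument already used for $\W^{\kappa,j}$), any nonzero kernel would be a proper submodule; hence $\Ker\Phi=\Ker\Psi=0$. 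Injectivity of $\Phi$ says every nonzero $w$ is detected by some $\varphi$, injectivity of $\Psi$ gives the converse, and together these are exactly the non-degeneracy of $\langle\cdot|\cdot\rangle$.

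I expect the one genuinely delicate step to be the existence argument: checking that the \emph{asymmetric} annihilation conditions defining $\Wt^{\kappa,j}$ make $\phi_0$ satisfy \emph{all} of the highest-weight relations of $\W^{\kappa,j}$, in particular the mode-zero operators $\beta_0,\gamma_0$ and the shift $a_0\mapsto\kappa^{-1}j$. This is the only place where the precise mismatch between the two sets of boundary conditions is used, and getting the charge/degree bookkeeping right is the crux; everything else is formal.
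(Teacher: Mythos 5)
Your proof is correct and complete. Note that the paper states Proposition \ref{prop:pairing:W:tildeW} without any proof, treating it as a standard fact about Fock modules, so there is no argument of the authors' to compare against line by line. Your route --- packaging the three adjunction rules into an anti-involution $*$ of $\w$ (with $\beta_n^*=-\beta_{-n}$, $\gamma_n^*=\gamma_{-n}$, $a_n^*=-\kappa^{-1}\delta_{n,0}I-a_{-n}$), realizing the form as a $\w$-module map $\W^{\kappa,j}\to(\Wt^{\kappa,j})^\vee$, checking that the functional dual to $\widetilde w_{\kappa,j}$ satisfies the defining relations of $w_{\kappa,j}$, and deducing non-degeneracy from irreducibility of the two Fock modules --- is the natural one, and you correctly isolate the two genuinely asymmetric points: $\beta_0$ creates in $\Wt^{\kappa,j}$ but annihilates in $\W^{\kappa,j}$ (handled by the charge grading), and the shifted eigenvalue $a_0\widetilde w_{\kappa,j}=-\kappa^{-1}(j+1)\widetilde w_{\kappa,j}$ is precisely what makes $a_0^*=-\kappa^{-1}I-a_0$ act by $\kappa^{-1}j$. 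The alternative the paper implicitly leans on later (Subsection \ref{sec:bases}) is the explicit computation of the Gram matrix $\langle\widetilde\Gamma_{-I}\widetilde w_{\kappa,j}\,|\,\Gamma_{-J}w_{\kappa,j}\rangle$ in the PBW bases, which is block-diagonal in the bigrading with constant invertible blocks; your irreducibility argument replaces that computation. One cosmetic remark: to identify ``bilinear form satisfying the adjunction rules'' with ``module map into the \emph{restricted} dual,'' one should observe that such a form automatically pairs only matching bidegrees (your uniqueness computation already shows this), or else simply run the same argument with the full algebraic dual, on which the contragredient action $(X\phi)(\varphi)=\phi(X^*\varphi)$ is equally well defined.
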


We won't define currents acting in $\Wt^{\kappa,j}$ as in \eqref{eq:free_currents}, because then $\widetilde w_{\kappa,j}$ would not annihilated by $J_0^+$. Instead, we modify them by applying the automorphism $J_n^a \to - J^{-a}_n$ (corresponding to $\begin{bmatrix}
0 & 1 \\ -1 & 0
\end{bmatrix}$ in $\mathrm{SL}(2,\C)$ generated by elements $J_0^a$):
\begin{align}
    \widetilde{J}^+(z) & =  :\gamma(z)^2 \beta(z): + 2 \kappa \gamma(z) \partial \phi(z) + k \partial \gamma(z), \nonumber \\
        \widetilde{J}^0(z) & = -:\gamma(z) \beta(z): - \kappa \partial \phi(z), \label{eq:free_currents_dual} \\
    \widetilde{J}^-(z) & = -\beta(z). \nonumber
\end{align}
This redefinition does not affect the formula \eqref{eq:free_T} for the Sugawara field. Moreover,
\begin{equation}
    \langle \varphi | J^a_n v \rangle = \langle  \widetilde{J}_{-n}^{-a} \varphi | v \rangle.
\end{equation}
Pairing $\langle \cdot | \cdot \rangle$ exhibits $\Wt^{\kappa,j}$ as the contragradient dual of $\W^{\kappa,j}$. In particular, Proposition \ref{prop:dim_comp} remains true if $\W^{\kappa,j}$ is replaced by $\widetilde \W^{\kappa,j}$. A homomorphism $\widetilde s : \mathcal V^{k,j} \to \Wt^{\kappa,j}$ is uniquely determined by $\widetilde s(v_{k,j}) = \widetilde w_{\kappa,j}$. 

\begin{proposition} \label{prop:metrics_comparison}
Bilinear form $g$ on $\mathcal V^{k,j}$ may be written in the form
\begin{equation}
g(v,v') = \langle \widetilde s(v) | s(v') \rangle.
\label{eq:g_factorization}
\end{equation}
\end{proposition}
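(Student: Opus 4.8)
The plan is to reduce the identity to the uniqueness, up to scalars, of the nonzero contragradient bilinear form on $\V^{k,j}$ (Proposition~\ref{prop:gker}). I introduce the auxiliary bilinear form
\begin{equation}
h(v,v') := \langle \widetilde s(v) \,|\, s(v') \rangle
\end{equation}
on $\V^{k,j}$ and aim to show that $h$ is contragradient and satisfies $h(v_{k,j},v_{k,j})=1$. Since $g$ carries the same normalization, uniqueness then forces $h=g$, which is exactly the claim.

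The normalization is immediate: because $s(v_{k,j})=w_{\kappa,j}$ and $\widetilde s(v_{k,j})=\widetilde w_{\kappa,j}$ by definition of the two module maps, we get $h(v_{k,j},v_{k,j})=\langle \widetilde w_{\kappa,j}\,|\,w_{\kappa,j}\rangle=1$ straight from Proposition~\ref{prop:pairing:W:tildeW}; in particular $h$ is nonzero. The substantive step is the contragradient identity, for which I combine three facts recorded just above the proposition: that $s$ and $\widetilde s$ intertwine the abstract generators with their respective realizations, so $s(\mathsf J^a_n v')=J^a_n\, s(v')$ and $\widetilde s(\mathsf J^a_n v)=\widetilde J^a_n\,\widetilde s(v)$, together with the pairing relation $\langle \varphi\,|\,J^a_n w\rangle=\langle \widetilde J^{-a}_{-n}\varphi\,|\,w\rangle$. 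Chaining them gives
\begin{equation}
h(v,\mathsf J^a_n v') = \langle \widetilde s(v)\,|\,J^a_n s(v')\rangle = \langle \widetilde J^{-a}_{-n}\widetilde s(v)\,|\,s(v')\rangle = \langle \widetilde s(\mathsf J^{-a}_{-n}v)\,|\,s(v')\rangle = h(\mathsf J^{-a}_{-n}v,v').
\end{equation}

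Having verified that $h$ is a nonzero contragradient form with $h(v_{k,j},v_{k,j})=1$, uniqueness yields $h=cg$ for a scalar $c$, and evaluating at $(v_{k,j},v_{k,j})$ gives $c=1$, hence $h=g$. I expect no genuine obstacle here; the one point requiring care is bookkeeping, namely that $s$ and $\widetilde s$ intertwine $\mathsf J^a_n$ with the two \emph{different} free-field realizations $J^a_n$ and $\widetilde J^a_n$, and that the pairing relation converts $J$ into $\widetilde J$ with precisely the index reversals $a\mapsto -a$, $n\mapsto -n$ that the contragradient condition demands.
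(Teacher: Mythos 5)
Your proof is correct and is exactly the argument the paper intends: the paper's one-line proof ("the right hand side defines a bilinear form which satisfies conditions known to uniquely determine $g$") is precisely your uniqueness argument, with the contragradient property and normalization checks left implicit. Your expanded verification of the intertwining/adjointness chain is accurate and fills in the details the paper omits.
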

\begin{proof}
The right hand side defines a bilinear form which satisfies conditions known to uniquely determine $g$.
\end{proof}

\subsection{Determinant calculations} \label{sec:bases}

We fix a basis of $\V^{k,j}$ as in Section \ref{sec:KK_determinant}. In $\W^{\kappa,j}$ we choose $\{ \Gamma_{-I} w_{\kappa,j} \}_I$, where
\begin{equation}
\Gamma_{-I} = \left[ \prod_{l=1}^{\infty} a_{-l}^{i_{l,0}} \beta_{-l}^{i_{l,+}} \gamma_{-l}^{i_{l,-}} \right] \gamma_0^{i_{0,-}}.
\end{equation}
In the module $ \Wt^{\kappa,j}$ we will use basis $\{ \widetilde \Gamma_{-I} \widetilde w_{\kappa,j} \}_{I}$, where
\begin{equation}
\widetilde \Gamma_{-I} = \left[ \prod_{l=1}^{\infty} (-a_{-l})^{i_{l,0}} \gamma_{-l}^{i_{l,+}} (-\beta_{-l})^{i_{l,-}} \right] (-\beta_0)^{i_{0,-}}.
\end{equation}
Polynomial dependence of vectors, operators etc.\ on $\W^{\kappa,j}$ or $\Wt^{\kappa,j}$ always refers to expressions in these bases.

Matrices $S_{n,m}, \widetilde S_{n,m}$ are defined by
\begin{align}
s(\sJ_{-I} v_{k,j}) &= \sum_{J \in \mathcal P_{n,m}} \Gamma_{-J} w_{\kappa,j} S_{n,m}^{J,I},  \\
\widetilde s(\sJ_{-I} v_{k,j}) &= \sum_{J \in \mathcal P_{n,m}}  \widetilde \Gamma_{-J} \widetilde w_{\kappa,j} \widetilde S_{n,m}^{J,I}, \nonumber
\end{align}
for $I \in \mathcal P_{n,m}$. Their entries are polynomials in $\kappa, j$. On the other hand, matrix elements:
\begin{equation}
    \langle \widetilde \Gamma_{-I} \widetilde w_{\kappa,j} | \Gamma_{-J} w_{\kappa,j} \rangle, \qquad I,J \in \mathcal P_{n,m},
\end{equation}
do not depend on $\kappa,j$. Hence \eqref{eq:g_factorization} implies that there exist constants $c_{n,m}$ such that
\begin{equation}
    \det(G_{n,m}) = c_{n,m} \det(\widetilde S_{n,m}) \det(S_{n,m}).
    \label{eq:Gdet_factorized}
\end{equation}

\begin{lemma} \label{lem:det_asymptotic}
Consider grading on the algebra of complex polynomials in $\kappa$ and $j$ given by $\mathrm{deg}(\kappa)= 1$, $\mathrm{deg}(j)=2$. For every $I \in \mathcal P_{n,m}$, the degree of $S_{n,m}^{I,J}$ and $\widetilde S_{n,m}^{I,J}$ is~maximized for $J=I$. Terms of maximal degree take the form:
\begin{align}
S_{n,m}^{I,I} & \approx \kappa^{\sum \limits_{l=1}^{\infty} i_{l,0}} \prod_{l=0}^{\infty} (-2j-l \kappa^2 )^{i_{l,-}}, \label{eq:asymptotic_S_det} \\
\widetilde S_{n,m}^{I,I} & \approx \kappa^{\sum \limits_{l=1}^{\infty} i_{l,0}} \prod_{l=1}^{\infty} (-2j+l \kappa^2 )^{i_{l,+}}.
\end{align}
\end{lemma}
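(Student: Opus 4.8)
The plan is to exhibit $s$ (and likewise $\widetilde s$) as a filtered map whose associated leading symbol is diagonal. On $\W^{\kappa,j}$ I would introduce an auxiliary $\Z$-grading $\mathcal D$ refining the $(\kappa,j)$-grading of the lemma ($\deg\kappa=1$, $\deg j=2$) by assigning weights to the oscillators, $\mathcal D(\beta_{-l})=1$, $\mathcal D(a_{-l})=0$, $\mathcal D(\gamma_{-l})=-1$ for every $l$, so that a term $\Gamma_{-I}w_{\kappa,j}\,p(\kappa,j)$ has $\mathcal D$-weight $\deg p + N_\beta(I)-N_\gamma(I)$, where $N_\beta,N_a,N_\gamma$ count the $\beta$-, $a$-, $\gamma$-oscillators of $\Gamma_{-I}$. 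The reason for this particular assignment is isolated in the next step.

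First I would compute, mode by mode, the action of the three currents and extract the term of maximal $\mathcal D$-weight. Using that $a_0=\kappa^{-1}j$ is a scalar on $\W^{\kappa,j}$ and Wick's theorem, each current mode has, on an arbitrary state, a distinguished ``creation'' term
\[
J^+_{-l}\rightsquigarrow \beta_{-l},\qquad J^0_{-l}\rightsquigarrow \kappa\, a_{-l},\qquad J^-_{-l}\rightsquigarrow (-2j-l\kappa^2)\,\gamma_{-l},
\]
each raising $\mathcal D$ by exactly $1$; for $J^-$ the coefficient combines the $a_0$-contraction of $-2\kappa\gamma\partial\phi$ (giving $-2j$) with the degree-$2$ part of $-k\partial\gamma$ (giving $-l\kappa^2$). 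The essential claim is that every other term of every current raises $\mathcal D$ by strictly less than $1$: the $:\!\gamma\beta\!:$ modes of $J^0$ and all of the $:\!\gamma^2\beta\!:$, the $\gamma\,\partial\phi$ pieces with $a_m$ ($m\neq0$), and the constant part of $-k\partial\gamma$ in $J^-$ change $\mathcal D$ by $0$, $-1$, or less. The weight assignment is essentially forced: the three creation terms have equal weight exactly when $\mathcal D(\beta_{-l})-\mathcal D(\gamma_{-l})=2$, and the overall normalization must lie in the open range keeping the domination strict — at the two endpoints the $:\!\gamma\beta\!:$-creation (resp.\ the $J^0$ rearrangement) term ties with the diagonal one.

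Granting strict domination, the term of maximal $\mathcal D$-weight in $s(\sJ_{-I}v_{k,j})$ comes uniquely from applying the creation term of each factor. Since $\beta_{-l},a_{-l},\gamma_{-l}$ (and $\gamma_0$) pairwise commute on these modes, this contribution is exactly $\Gamma_{-I}w_{\kappa,j}$ with coefficient $\prod_l\kappa^{i_{l,0}}\prod_{l\geq0}(-2j-l\kappa^2)^{i_{l,-}}$ and has $\mathcal D$-weight $|I|$; this is the asserted $S^{I,I}_{n,m}$. To convert $\mathcal D$ into the $(\kappa,j)$-degree one uses that $N_\beta(I)-N_\gamma(I)=n-m$ is constant on $\mathcal P_{n,m}$ (indeed $m=n-(N_\beta-N_\gamma)$). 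Hence every nonzero entry of $s(\sJ_{-I}v_{k,j})$ has degree $\le \mathcal D\text{-weight}-(n-m)\le|I|-(n-m)=N_a(I)+2N_\gamma(I)$, with equality precisely on the diagonal, so for fixed input index the diagonal entry strictly dominates the degree of every other entry and $\det(S_{n,m})$ has leading term $\prod_I S^{I,I}_{n,m}$. The computation for $\widetilde S_{n,m}$ is structurally identical: the dual highest weight has $a_0=-\kappa^{-1}(j+1)$ and the basis $\widetilde\Gamma$ carries the $\beta\leftrightarrow\gamma$ swap with signs, so the analogue of the $J^-$ coefficient is $-2(j+1)+l\kappa^2$, whose leading part is again $-2j+l\kappa^2$.

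The main obstacle is the strict-domination verification, and within it the cubic current $J^-$: one must enumerate all its normal-ordered modes together with their possible contractions against an excited state and check that none of them — in particular the $:\!\gamma^2\beta\!:$-creation term, which naively looks dangerous — reaches $\mathcal D$-weight $1$. The second delicate point is the state-independence of the three creation terms, needed so that the leading symbol is genuinely diagonal and not merely diagonal on $w_{\kappa,j}$; this rests on $a_0$ being central and on the $-k\partial\gamma$ contribution being a bare scalar multiple of $\gamma_{-l}$.
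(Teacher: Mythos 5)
Your argument is correct and is essentially the paper's proof carried out in full: the paper disposes of this lemma with the single line ``By inspection of the form of $J^a_n$ and $\widetilde J^a_n$,'' and your auxiliary filtration with $\mathcal D(\beta)=1$, $\mathcal D(a)=0$, $\mathcal D(\gamma)=-1$, the identification of the leading ``creation'' terms $\beta_{-l}$, $\kappa a_{-l}$, $(-2j-l\kappa^2)\gamma_{-l}$, and the observation that $N_\beta-N_\gamma=n-m$ is constant on $\mathcal P_{n,m}$ is precisely the inspection being invoked. The only cosmetic slips (attributing the dual cubic coefficient to $J^-$ rather than $\widetilde J^+$, and dropping the subleading $-2l$ from $kl=l\kappa^2-2l$) do not affect the leading terms.
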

\begin{proof}
By inspection of the form of $J_n^a$ and $\widetilde{J}_n^a$.
\end{proof}

\begin{proposition} \label{prop:S_det}
\begin{align}
\det(S_{n,m}) &= \kappa^{\sum \limits_{r,l=1}^{\infty} d_{n-rl,m-rl}} \prod_{(r,l) \in \mathcal S^+} \left( -T^+_{r,l} \right)^{d_{n-rl,m-(r+1)l}}, \label{eq:S_det} \\
\det(\widetilde S_{n,m}) &= \kappa^{\sum \limits_{r,l=1}^{\infty} d_{n-rl,m-rl}} \prod_{(r,l) \in \mathcal S^-} \left( T^-_{r,l} \right)^{d_{n-rl,m-(r-1)l}}. \nonumber
\end{align}
\end{proposition}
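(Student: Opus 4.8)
The plan is to pin down each of $\det(S_{n,m})$ and $\det(\widetilde S_{n,m})$ by combining three ingredients already at our disposal: the product identity coming from \eqref{eq:Gdet_factorized} together with the Kac--Kazhdan formula \eqref{eq:Kac_det}, unique factorization in $\C[\kappa,j]$, and the leading-term computation of Lemma \ref{lem:det_asymptotic}.

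First I would record the product formula. By \eqref{eq:Gdet_factorized} and Theorem \ref{thm:KDet}, using $k+2=\kappa^2$, the product $\det(\widetilde S_{n,m})\det(S_{n,m})$ equals, up to a nonzero constant,
\begin{equation}
\kappa^{2\sum_{r,l\geq1}d_{n-rl,m-rl}}\prod_{(r,l)\in\mathcal S^+}\bigl(T^+_{r,l}\bigr)^{d_{n-rl,m-(r+1)l}}\prod_{(r,l)\in\mathcal S^-}\bigl(T^-_{r,l}\bigr)^{d_{n-rl,m-(r-1)l}}.
\end{equation}
The polynomials $\kappa$, $T^+_{r,l}$, $T^-_{r,l}$ are pairwise non-associate irreducibles of $\C[\kappa,j]$ (each $T^a_{r,l}$ is linear and primitive in $j$, and the sign of its $j$-coefficient distinguishes $a=+$ from $a=-$). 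Since $\C[\kappa,j]$ is a unique factorization domain and $\det(S_{n,m})$, $\det(\widetilde S_{n,m})$ are genuine polynomials whose product is the displayed expression, each of them must be a scalar times a monomial in these irreducibles. It then remains only to decide how the factors are distributed and to fix the exponents and the overall constant.

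The distribution is dictated by Lemma \ref{lem:det_asymptotic}. Because the weighted degree ($\deg\kappa=1$, $\deg j=2$) of $S^{I,J}_{n,m}$ is maximized at $J=I$, the identity permutation is the unique one attaining the maximal total degree, so the top-degree part of $\det(S_{n,m})$ is the product over $I\in\mathcal P_{n,m}$ of the top-degree parts of the diagonal entries given in \eqref{eq:asymptotic_S_det}. This product involves only the irreducible factors $\kappa$ and $2j+l\kappa^2$ with $l\in\N$; but $2j+r\kappa^2$ is precisely the top-degree part of $T^+_{r,l}$ for every $l$, whereas the top-degree part of $T^-_{r,l}$ is $-2j+r\kappa^2$, a different irreducible. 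By uniqueness of factorization, the monomial $\det(S_{n,m})$ therefore contains no $T^-_{r,l}$ factor, and symmetrically $\det(\widetilde S_{n,m})$ contains no $T^+_{r,l}$ factor. Comparison with the product formula then forces all $T^+$-factors, with their full exponents $d_{n-rl,m-(r+1)l}$, into $\det(S_{n,m})$ and all $T^-$-factors into $\det(\widetilde S_{n,m})$. The $\kappa$-exponent of each determinant is read off from its leading term as $\sum_{I}\sum_{l\geq1}i_{l,0}$, which is the same for both; since the $\kappa$-adic valuation of the product is $2\sum_{r,l\geq1}d_{n-rl,m-rl}$, each determinant carries $\kappa^{\sum_{r,l\geq1}d_{n-rl,m-rl}}$ (and, as a byproduct, one obtains the combinatorial identity $\sum_{I}\sum_{l\geq1}i_{l,0}=\sum_{r,l\geq1}d_{n-rl,m-rl}$). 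Finally the signs and the constant are fixed by matching leading coefficients: writing $2j+l\kappa^2=-(-2j-l\kappa^2)$ identifies the diagonal factors in \eqref{eq:asymptotic_S_det} with $-T^+_{r,l}$, yielding constant $1$ and the signs displayed in \eqref{eq:S_det}.

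I expect the main obstacle to be the separation-of-factors step, specifically the justification that the top-degree part of the determinant really is the diagonal product. This requires the maximizing permutation in the Leibniz expansion to be unique, i.e.\ a strict form of Lemma \ref{lem:det_asymptotic}; should several permutations tie in degree, one would additionally have to rule out cancellations among their leading contributions. Everything else is bookkeeping, since once the clean split of $T^+$ against $T^-$ is established, the exponents are handed to us directly by the Kac--Kazhdan formula.
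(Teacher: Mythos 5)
Your proposal is correct and follows essentially the same route as the paper: the product formula \eqref{eq:Gdet_factorized} combined with the Kac--Kazhdan determinant \eqref{eq:Kac_det}, unique factorization in $\C[\kappa,j]$, and the leading-term Lemma \ref{lem:det_asymptotic} to decide which irreducible factors land in which of the two determinants. The paper's proof is a terser statement of exactly these three steps (and, like yours, it leaves implicit the check that the diagonal permutation strictly dominates the Leibniz expansion).
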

\begin{proof}
Formula \eqref{eq:Gdet_factorized} implies that every factor of $\det(S_{n,m}), \det(\widetilde S_{n,m})$ is present in \eqref{eq:Kac_det}. Conversely, every irreducible factor of the latter divides either $\det(S_{n,m})$ or $\det(\widetilde S_{n,m})$ (with suitable multiplicities). Lemma \ref{lem:det_asymptotic} allows to decide which factors of \eqref{eq:Kac_det} can be found in which of $\det(S_{n,m}), \det(\widetilde S_{n,m})$.
\end{proof}

\begin{lemma} \label{lem:inverse_singularity}
Let $S_x$ be an $N \times N$ matrix depending polynomially on $x$. Suppose that the determinant of $S_{x}$ vanishes at $0$ to order $\dim \Ker(S_{x_0})$ (and not higher). Then the limits:
\begin{equation}
    A = \lim_{x \to 0} \left. \frac{S_x}{x} \right|_{\Ker(S_0)}, \qquad B = \lim_{x \to 0} x S_x^{-1},
\end{equation}
exist. We have that $\Ran(A)$ is a complement to $\Ran(S_0)$, $\Ker(B)=\Ran(S_0)$, and that $\Ran(B)= \Ker(S_0)$. Moreover, $BA= 1_{\Ker(S_0)}$, and $AB$ is a projection with image $\Ran(A)$ and kernel $\Ran(S_0)$.
\end{lemma}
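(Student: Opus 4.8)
The plan is to reduce everything to one adapted block decomposition together with a single use of the Schur complement. Existence of $A$ is essentially free: since $S_0$ annihilates $K := \Ker(S_0)$, the polynomial matrix $S_x|_K$ vanishes at $x=0$, hence is divisible by $x$, and $\tfrac1x S_x|_K$ is again polynomial in $x$; its value at $x=0$ is $A$. The real content concerns $B$ and, crucially, the way the hypothesis on the order of vanishing of $\det(S_x)$ enters. I would fix a complement $K'$ of $K$ in the domain $\C^N$ and a complement $R'$ of $R := \Ran(S_0)$ in the codomain $\C^N$, and write $S_x$ as a block matrix relative to $\C^N = K' \oplus K$ (domain) and $\C^N = R \oplus R'$ (codomain),
\[
S_x = \begin{pmatrix} P_x & Q_x \\ U_x & V_x \end{pmatrix}, \qquad P_x : K' \to R, \quad Q_x : K \to R, \quad U_x : K' \to R', \quad V_x : K \to R'.
\]
At $x=0$ one has $P_0$ invertible (it is $S_0$ restricted to the isomorphism $K' \xrightarrow{\sim} R$) while $Q_0=U_0=V_0=0$, so by polynomiality $Q_x = x\tilde Q_x$, $U_x = x\tilde U_x$, $V_x = x\tilde V_x$.

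The crux is the next step. Since $P_x$ stays invertible near $0$, the Schur complement $\Sigma_x = V_x - U_x P_x^{-1} Q_x$ is defined for small $x$ and satisfies $\det(S_x) = \det(P_x)\det(\Sigma_x)$. Direct substitution gives $\Sigma_x = x\bigl(\tilde V_x - x\,\tilde U_x P_x^{-1}\tilde Q_x\bigr) =: x\,W_x$, with $W_x$ regular at $0$ and $W_0 = \tilde V_0$. As $\Sigma_x$ is $d \times d$ with $d = \dim K$, we obtain $\det(S_x) = \det(P_x)\,x^{d}\det(W_x)$, and since $\det(P_0) \neq 0$ the order of vanishing of $\det(S_x)$ at $0$ equals $d$ plus the order of vanishing of $\det(W_x)$. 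The hypothesis pins this order to exactly $d$, forcing $\det(W_0) \neq 0$, i.e.\ $W_0 : K \to R'$ invertible. I expect this identification — that the determinant hypothesis is \emph{precisely} what renders the ``divided Schur complement'' $W_0$ nondegenerate — to be the main obstacle; once it is in place the rest is bookkeeping.

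With $W_0$ invertible I would substitute into the block-inverse formula for $S_x^{-1}$, multiply by $x$, and let $x \to 0$. Using $x\Sigma_x^{-1} = W_x^{-1} \to W_0^{-1}$ together with $Q_x, U_x = O(x)$, every block of $xS_x^{-1}$ tends to $0$ except the $R' \to K$ block, which tends to $W_0^{-1}$. Hence $B = \lim_{x\to 0} xS_x^{-1}$ exists and, in the chosen decompositions, $B|_R = 0$ and $B|_{R'} = W_0^{-1}$. From this one reads off $\Ker(B) = R = \Ran(S_0)$ and $\Ran(B) = W_0^{-1}(R') = K = \Ker(S_0)$.

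Finally the relations between $A$ and $B$ follow formally from the block pictures. Here $A = \begin{pmatrix}\tilde Q_0 \\ W_0\end{pmatrix} : K \to R \oplus R'$ is injective because its $R'$-component $W_0$ is, and $\Ran(A) \cap R = \{0\}$ (any element of $\Ran(A) \cap R$ has vanishing $R'$-component, forcing the argument to lie in $\Ker W_0 = 0$); since $\dim\Ran(A) = d = N - \dim R$, this makes $\Ran(A)$ a complement of $\Ran(S_0)$. Because the $R$-component of $A$ lies in $\Ker(B)$, we get $BA = W_0^{-1}W_0 = 1_{K} = 1_{\Ker(S_0)}$. Then $AB$ is idempotent, as $(AB)^2 = A(BA)B = AB$; injectivity of $A$ yields $\Ker(AB) = \Ker(B) = \Ran(S_0)$, while $(AB)A = A$ gives $\Ran(AB) = \Ran(A)$, completing the claim.
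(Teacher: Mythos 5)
Your proof is correct, but it follows a genuinely different route from the paper's. The paper disposes of this lemma in one line by invoking the Smith normal form of $S_x$ over the local ring at $x=0$: writing $S_x = E_x D_x F_x$ with $E_x, F_x$ invertible at $0$ and $D_x = \mathrm{diag}(x^{a_1},\dots,x^{a_N})$, the hypothesis that $\det(S_x)$ vanishes to order exactly $\dim\Ker(S_0) = \#\{i : a_i \geq 1\}$ forces every $a_i$ to be $0$ or $1$, after which all the claims are read off from the diagonal model. Your argument replaces this structure theorem by an adapted block decomposition plus the Schur complement identity $\det(S_x)=\det(P_x)\det(\Sigma_x)$, and correctly identifies the determinant hypothesis as being precisely the nondegeneracy of the divided Schur complement $W_0 = \tilde V_0$; the subsequent computation of $xS_x^{-1}$ block by block, and the verification of $BA = 1_{\Ker(S_0)}$, the complementarity of $\Ran(A)$ and $\Ran(S_0)$, and the projection property of $AB$, are all sound (in particular you correctly note that $\det(W_x)$ is regular at $0$ even though $W_x$ involves $P_x^{-1}$, and that $\dim K = \dim R'$ so $W_0$ is square). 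What each approach buys: the Smith normal form is shorter and conceptually cleaner once one accepts the structure theorem over a discrete valuation ring, whereas your proof is entirely elementary linear algebra over $\C$, self-contained, and arguably more transparent about where the sharp order-of-vanishing hypothesis is consumed.
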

\begin{proof}
Established by considering the Smith's normal form of $S_x$.
\end{proof}

\begin{proposition}
    We have:
    \begin{equation}
        \left. s(\chi^+_{r,l}) \right|_{T^+_{r,l}=0}=0, \qquad \left. \widetilde s(\chi^-_{r,l}) \right|_{T^-_{r,l}=0}=0.
    \end{equation}
The polynomial families
\begin{equation}
    \left. s(\chi^-_{r,l}) \right|_{T^-_{r,l}=0}, \qquad \left. \widetilde s(\chi^+_{r,l}) \right|_{T^+_{r,l}=0},
    \label{eq:WHW}
\end{equation}
and
\begin{equation}
    \left. \frac{s(\chi^+_{r,l})}{T^+_{r,l}} \right|_{T^+_{r,l}=0}, \qquad \left. \frac{\widetilde s(\chi^-_{r,l})}{T^-_{r,l}} \right|_{T^-_{r,l}=0}
    \label{eq:WNR}
\end{equation}
are not identically zero. Vectors in \eqref{eq:WHW} are highest weight vectors.
\end{proposition}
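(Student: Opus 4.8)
The plan is to leverage two structural facts proved earlier: the factorization of the Kac--Kazhdan determinant in Proposition \ref{prop:S_det}, by which the zeros of $\det(S_{n,m})$ are controlled exclusively by the factors $T^+_{r,l}$ while those of $\det(\widetilde S_{n,m})$ are controlled exclusively by the factors $T^-_{r,l}$, together with the factorization $g(v,v') = \langle \widetilde s(v) | s(v') \rangle$ of Proposition \ref{prop:metrics_comparison} and the non-degeneracy of $\langle \cdot | \cdot \rangle$ from Proposition \ref{prop:pairing:W:tildeW}. This $+/-$ dichotomy is exactly what separates the families that vanish from those that survive.

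\textbf{Vanishing.} To prove $s(\chi^+_{r,l})|_{T^+_{r,l}=0}=0$, I would fix a generic point of the locus $\{T^+_{r,l}=0\}$ and work in the grading $[rl,(r+1)l]$ of $\chi^+_{r,l}$. On this locus one has $2j+1 = l - r(k+2)$, so every factor $T^-_{r',l'} = (r'+r)(k+2) - (l'+l)$ is nonzero for generic $k$; hence $\det(\widetilde S_{rl,(r+1)l}) \neq 0$, so $\widetilde S$ is invertible and $\widetilde s$ is surjective onto $\Wt^{\kappa,j}_{[rl,(r+1)l]}$. Since $T^+_{r,l}=0$ makes $\chi^+_{r,l}$ a singular vector by the commutator formulas following Proposition \ref{prop:O_properties}, it lies in $\Ker(g)$, whence $\langle \widetilde s(v) | s(\chi^+_{r,l}) \rangle = g(v,\chi^+_{r,l})=0$ for all $v$. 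Surjectivity of $\widetilde s$ and non-degeneracy of the pairing then force $s(\chi^+_{r,l})=0$ at the generic point, and polynomiality in the remaining parameter extends this to the whole locus. The claim $\widetilde s(\chi^-_{r,l})|_{T^-_{r,l}=0}=0$ follows by the mirror argument, using instead invertibility of $S_{n,m}$ on $\{T^-_{r,l}=0\}$ and surjectivity of $s$.

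\textbf{Survivors and the highest weight property.} For the families in \eqref{eq:WHW}, I would again pass to a generic point of the relevant locus, but now apply the \emph{other} map. On $\{T^-_{r,l}=0\}$ the determinant $\det(S_{rl,(r-1)l})$ involves only $T^+$-factors and is generically nonzero, so $S$ is invertible and $s$ is injective; since $\chi^-_{r,l}\neq 0$ for all parameters by Proposition \ref{prop:O_properties}, we get $s(\chi^-_{r,l})\neq 0$, and symmetrically $\widetilde s(\chi^+_{r,l})\neq 0$. That these survivors are highest weight vectors is immediate: $s$ (resp.\ $\widetilde s$) is an $\asl$-module map, and on the respective locus the commutator formulas give $\sJ^+_0 \chi^{-}_{r,l}=\sJ^-_1\chi^{-}_{r,l}=0$ (resp.\ for $\chi^+_{r,l}$), so the images are annihilated by $J^+_0, J^-_1$ (resp.\ $\widetilde J^+_0, \widetilde J^-_1$) and, these generating $\aslp$ together with the Cartan, by all of $\aslp$; they are eigenvectors of $J^0_0$ of weight $j+l$ (resp.\ $j-l$).

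\textbf{Main obstacle: the renormalized families \eqref{eq:WNR}.} This is the delicate part. Work in the grading $[rl,(r+1)l]$ of $\chi^+_{r,l}$. By Proposition \ref{prop:S_det} the factor $T^+_{r,l}$ occurs in $\det(S_{rl,(r+1)l})$ with exponent $d_{0,0}=1$, and at a generic point of $\{T^+_{r,l}=0\}$ no other factor of the determinant vanishes, so $\det(S_{rl,(r+1)l})$ has a simple zero and $\Ker(S_0)$ is one-dimensional. Let $c(x)$ be the coefficient vector of $\chi^+_{r,l}$ in the canonical basis, with $x=T^+_{r,l}$ a transversal coordinate, so that the coefficient vector of $s(\chi^+_{r,l})$ is $S_x c(x)$. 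The normalization of Proposition \ref{prop:O_properties} (leading coefficient $1$) gives $c(0)\neq 0$, and since $S_0 c(0)=0$ by the vanishing already shown, $c(0)$ spans $\Ker(S_0)$. Expanding, the coefficient vector of $s(\chi^+_{r,l})/T^+_{r,l}$ at $x=0$ equals $S_0 c'(0)+S_0' c(0)$. Applying Lemma \ref{lem:inverse_singularity}, $S_0' c(0)=A\,c(0)$ lies in $\Ran(A)$, a nonzero complement of $\Ran(S_0)$, whereas $S_0 c'(0)\in\Ran(S_0)$; hence the sum is nonzero, proving that \eqref{eq:WNR} is not identically zero, with the $\widetilde s$-case symmetric. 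The crux, and the step I expect to require the most care, is verifying that the order of vanishing of the determinant matches $\dim\Ker(S_0)$ exactly, which is what legitimizes the use of Lemma \ref{lem:inverse_singularity}.
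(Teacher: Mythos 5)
Your proof is correct and follows essentially the same route as the paper: both arguments hinge on Proposition \ref{prop:S_det} to decide which of $s,\widetilde s$ degenerates on which locus (giving the non-vanishing of \eqref{eq:WHW} by generic injectivity), and on the simple zero of $\det(S_{rl,(r+1)l})$ along $\{T^+_{r,l}=0\}$ together with Lemma \ref{lem:inverse_singularity} to rule out divisibility by $(T^+_{r,l})^2$ for \eqref{eq:WNR}. The only real variation is the vanishing step, where you argue via the factorization $g(v,v')=\langle\widetilde s(v)|s(v')\rangle$, surjectivity of $\widetilde s$ and non-degeneracy of the pairing, whereas the paper identifies $\Ker(s)$ with $\rad(\V^{k,j})$ as the unique nonzero proper submodule generated by $\chi^+_{r,l}$ --- both are valid and rest on the same determinant factorization.
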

\begin{proof}
We consider the family $\chi^+_{r,l}$, the other case being analogous. $\widetilde s$ is generically injective for $T^+_{r,l}=0$, so $\left. \widetilde s(\chi^+_{r,l}) \right|_{T^+_{r,l}=0}$ is not identically zero. For generic $k$ and $T^+_{r,l}=0$ we have that $\chi^+_{r,l}$ generates the unique nonzero proper submodule of $\V^{k,j}$. By Proposition \ref{prop:S_det}, $\Ker(s)$ is then also a nonzero proper submodule and hence is generated by $\chi^+_{r,l}$. Therefore, $s(\chi^+_{r,l})$ is divisible by $T^+_{r,l}$. It is not divisible by $(T^+_{r,l})^2$ because then $\det(S_{rl,(r+1)l})$ would be divisible by $(T_{r,l}^+)^2$ (see~also Lemma \ref{lem:inverse_singularity}). 
\end{proof}

To illustrate the statement above, we provide a few simple examples.
\begin{itemize}
    \item We have $\chi^+_{0,l} =  (\sJ_0^-)^l v_{k,j}$, $T^+_{0,l}= -l +2j+1$ and
    \begin{equation}
    \label{eq:singular_first_example}
        \left. \frac{s(\chi^+_{0,l})}{T_{0,l}^+} \right|_{T_{0,l}^+=0} = (-1)^l (l-1)! \gamma_0^l w_{\kappa, \frac{l-1}{2}}, \quad
    \left. \widetilde s(\chi^+_{0,l}) \right|_{T_{0,l}^+=0} = (- \beta_0)^{l} \widetilde w_{\kappa, \frac{l-1}{2}}.
    \end{equation}
    \item $ \chi^-_{1,l}  = (\sJ_{-1}^+)^l v_{k,j}$, $T_{1,l}^- = k+2 -l - 2j -1$ and
    \begin{equation}
        \left. s(\chi_{1,l}^-) \right|_{T_{1,l}^-=0} = \beta_{-1}^l w_{\kappa, \frac{k+1-l}{2}}, \qquad \left. \frac{\widetilde s(\chi_{1,l}^-)}{T_{1,l}^-} \right|_{T_{1,l}^- =0} = (l-1)! \gamma_{-1}^l \widetilde w_{\kappa, \frac{k+1-l}{2}}.
    \end{equation}
    \item $\chi^+_{1,1}  =  \left(\sJ^+_{-1}(\sJ^-_0)^2 -2(k+3)\sJ^0_{-1}\sJ^-_0 - (k+2)(k+3)\sJ^-_{-1}\right)v_{k,j} $, $T_{1,1}^+ = k+2j+2$,
    \begin{align}
         \left. \frac{s(\chi^+_{1,1})}{T_{1,1}^+} \right|_{_{T_{1,1}^+=0}} &= 
         \left(k(k+3)\gamma_{-1} + 2(k+3)\kappa a_{-1}\gamma_0+ \beta_{-1}\gamma_0^2\right)w_{\kappa,-\frac{k+2}{2}},\\   
         \left. \widetilde{s}(\chi^+_{1,1})\right|_{T_{1,1}^+=0} &= (k+1)\kappa\left(\kappa\beta_{-1} -2 a_{-1}\beta_0\right)\widetilde w_{\kappa, -\frac{k+2}{2}}. \nonumber
    \end{align}
    \item $\chi^-_{2,1}  =  \left((\sJ^+_{-1})^2\sJ^-_0 + 2(k+1)\sJ^+_{-1}\sJ^0_{-1} -(k+1)(k+2)\sJ^+_{-2}\right) v_{k,j}$, $T_{2,1}^- = 2k-2j+2$,
    \begin{align}
       \left. s(\chi^-_{2,1}) \right|_{T_{2,1}^-=0} & =(k+1) \kappa(2 a_{-1}\beta_{-1}-\kappa\beta_{-2})w_{\kappa,k+1}, \\
       \left. \frac{\widetilde{s}(\chi^-_{2,1})}{T_{2,1}^-} \right|_{T_{2,1}^-=0} 
       &= -\left(2(k+3)\kappa a_{-1}\gamma_{-1} + k(k+3)\gamma_{-2} + \gamma_{-1}^2\beta_0\right)\widetilde w_{\kappa,k+1}. \nonumber
    \end{align}
\end{itemize}

The remaining part of this Subsection is devoted to understanding the singularities of $s^{-1}, \widetilde s^{-1}$.

\begin{proposition} \label{prop:sinv_sing_characterization}
    Suppose that $k \neq -2$ and all $T_{r,l}^-$ are nonzero. Then 
    \begin{equation}
        \Ker(s) = \rad(\V^{k,j}), \qquad \Ran(s) = \widetilde s(\rad(\V^{k,j}))^\perp.
    \end{equation}
     Similarly, if $k \neq -2$ and all $T_{r,l}^+$ are nonzero, then
     \begin{equation}
         \Ker(\widetilde s) = \rad(\V^{k,j}), \qquad \Ran( \widetilde s) = s(\rad(\V^{k,j}))^\perp.
     \end{equation} 
\end{proposition}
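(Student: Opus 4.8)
The plan is to reduce everything to the finite-dimensional graded pieces $\V^{k,j}_{[n,m]}$, $\W^{\kappa,j}_{[n,m]}$, $\Wt^{\kappa,j}_{[n,m]}$ and exploit the factorization \eqref{eq:g_factorization}. First I would fix a weight $[n,m]$ and work in the bases fixed in this Subsection. Writing $G^{(0)}_{n,m}$ for the (constant, $\kappa,j$-independent) matrix of the pairing $\langle \cdot | \cdot \rangle$ between the chosen bases of $\Wt^{\kappa,j}_{[n,m]}$ and $\W^{\kappa,j}_{[n,m]}$, formula \eqref{eq:g_factorization} translates directly into the matrix identity $G_{n,m} = \widetilde S_{n,m}^{\mathsf T} \, G^{(0)}_{n,m} \, S_{n,m}$. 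Since $\langle \cdot | \cdot \rangle$ is non-degenerate (Proposition \ref{prop:pairing:W:tildeW}) and, being contragredient, pairs nontrivially only between equal weight spaces, its restriction to the $[n,m]$ piece is non-degenerate, so $G^{(0)}_{n,m}$ is invertible.

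For the kernel, I would use the hypothesis that all $T^-_{r,l}$ are nonzero: together with $\kappa \neq 0$, Proposition \ref{prop:S_det} gives $\det(\widetilde S_{n,m}) \neq 0$, i.e.\ $\widetilde s$ is injective. Then $\widetilde S_{n,m}^{\mathsf T} G^{(0)}_{n,m}$ is invertible, and the identity above yields $\Ker(S_{n,m}) = \Ker(G_{n,m})$ in every graded piece. Summing over weights gives $\Ker(s) = \Ker(g)$, which equals $\rad(\V^{k,j})$ by Proposition \ref{prop:gker} (this is where $k \neq -2$ enters). This settles the first equality.

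For the range I would first record the containment $\Ran(s) \subseteq \widetilde s(\rad(\V^{k,j}))^\perp$: for $v \in \rad(\V^{k,j})$ and any $v'$, \eqref{eq:g_factorization} gives $\langle \widetilde s(v) | s(v') \rangle = g(v,v') = 0$ because $\rad(\V^{k,j}) = \Ker(g)$. Equality then follows from a dimension count in each graded piece. Writing $d_{n,m} = \dim \W^{\kappa,j}_{[n,m]} = \dim \Wt^{\kappa,j}_{[n,m]}$ (Proposition \ref{prop:dim_comp}), rank--nullity for $s$ and the first part give $\dim \Ran(s)_{[n,m]} = d_{n,m} - \dim \rad(\V^{k,j})_{[n,m]}$; on the other side, injectivity of $\widetilde s$ gives $\dim \widetilde s(\rad(\V^{k,j}))_{[n,m]} = \dim \rad(\V^{k,j})_{[n,m]}$, and non-degeneracy of the pairing on the $[n,m]$ piece gives $\dim \widetilde s(\rad(\V^{k,j}))^\perp_{[n,m]} = d_{n,m} - \dim \widetilde s(\rad(\V^{k,j}))_{[n,m]}$. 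The two dimensions agree, so the containment is an equality in each piece. The statement with $+$ is obtained by the same argument with the roles of $s$ and $\widetilde s$ (equivalently, of $T^+$ and $T^-$) interchanged.

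I do not expect a serious obstacle here; the argument is linear algebra once the factorization $G_{n,m} = \widetilde S_{n,m}^{\mathsf T} G^{(0)}_{n,m} S_{n,m}$ is in place. The only point requiring care is the bookkeeping for the orthogonal complement: one must verify that $\widetilde s(\rad(\V^{k,j}))^\perp$ decomposes as the direct sum of the complements $(\widetilde s(\rad(\V^{k,j}))_{[n,m]})^\perp$ taken inside the individual pieces $\W^{\kappa,j}_{[n,m]}$. This holds precisely because the contragredient pairing is weight-diagonal, which is what lets the global statement be assembled from the finite-dimensional computations.
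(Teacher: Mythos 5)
Your argument is correct and follows the same route as the paper: use Proposition \ref{prop:S_det} (with $\kappa\neq 0$, equivalent to $k\neq -2$) to see that $\widetilde s$ is an isomorphism, combine the factorization \eqref{eq:g_factorization} with non-degeneracy of the pairing to get $\Ker(s)=\Ker(g)=\rad(\V^{k,j})$, and then obtain the range statement from the easy containment plus a weight-by-weight dimension count. The paper's proof is just a terser version of exactly this; your extra bookkeeping (the matrix identity $G_{n,m}=\widetilde S_{n,m}^{\mathsf T}G^{(0)}_{n,m}S_{n,m}$ and the weight-diagonality of the orthogonal complement) fills in steps the paper leaves implicit.
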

\begin{proof}
We prove the statement about the range of $s$, the other being analogous. Since $\widetilde s$ is an isomorphism and the pairing between $\widetilde \W^{\kappa,j}$ and $\W^{\kappa,j}$ is non-degenerate, $\Ker(s) = \rad(\V^{k,j})$. It is easy to check that the range of $s$ is contained in the orthogonal complement of $\widetilde s(\rad(\V^{k,j}))$. By a dimension count, these two modules are equal. 
\end{proof}

\begin{lemma} \label{lem:overlap_criterion}
Suppose that $u_{\kappa,j}$ is a family of vectors in $\W^{\kappa,j}$ depending polynomially on $\kappa,j$. Let $I$ be the set of pairs $(r,l)$ such that $\langle \widetilde s(T \chi^+_{r,l}), u_{\kappa,j} \rangle$ is not divisible by $T^+_{r,l}$ for every $T \in U \asl$. There exists $n \in \N$ such that
    \begin{equation}
        u'_{\kappa,j} := \kappa^n \prod_{(r,l) \in I} T_{r,l}^+ \cdot  s^{-1}(u_{\kappa,j})
        \label{eq:sinv_singularities}
    \end{equation}
    depends polynomially on $\kappa,j$. 

    An analogous statement is true for polynomial families in $\widetilde \W^{\kappa,j}$, with the roles of $s,\widetilde s$ interchanged and $T^+_{r,l}$, $\chi^+_{r,l}$ replaced by $T^-_{r,l}$, $\chi^-_{r,l}$.
\end{lemma}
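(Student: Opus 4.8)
The plan is to regard $s^{-1}(u_{\kappa,j})$ as a vector of rational functions of $\kappa,j$ and to bound its order of pole along each irreducible hypersurface, using that $\C[\kappa,j]$ is a UFD, so that a rational family is polynomial iff it has nonnegative order along every prime divisor. We may assume $u_{\kappa,j}$ is supported in a single graded piece $\W^{\kappa,j}_{[n,m]}$, a general family being a finite sum of such. There $s$ is represented by $S_{n,m}$, so by Cramer's rule the denominator of $s^{-1}(u_{\kappa,j})$ divides $\det(S_{n,m})$. By Proposition \ref{prop:S_det} the irreducible factors of $\det(S_{n,m})$ are $\kappa$ and the polynomials $T^+_{r,l}=r\kappa^2-l+2j+1$, which are pairwise non-associate and irreducible (each has degree one in $j$). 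Hence $s^{-1}(u_{\kappa,j})$ can have poles only along $\{\kappa=0\}$ and along the hypersurfaces $\{T^+_{r,l}=0\}$. The pole along $\{\kappa=0\}$ is cleared by a large power $\kappa^n$, so it remains to show that along each $\{T^+_{r,l}=0\}$ the order of pole is at most $1$, and is $0$ when $(r,l)\notin I$.

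Since the order of pole along an irreducible hypersurface is detected at its generic point, I would fix $(r,l)$ and localize at a generic point of $\{T^+_{r,l}=0\}$, where $\kappa\neq 0$ and all other $T^{a'}_{r',l'}\neq 0$, using $x:=T^+_{r,l}$ as a transverse coordinate (for fixed generic $\kappa$, an affine function of $j$). By Proposition \ref{prop:S_det}, $\det(S_{n,m})$ vanishes along $\{x=0\}$ to order $d_{n-rl,m-(r+1)l}$. On the other hand, at the generic point $k\neq -2$ and all $T^-_{r',l'}\neq 0$, so Proposition \ref{prop:sinv_sing_characterization} gives $\Ker(s)=\rad(\V^{k,j})$, while Corollary \ref{cor:reducibility_dim1} identifies $\rad(\V^{k,j})\cong\V^{k,j-l}$ with highest weight vector $\chi^+_{r,l}\in\V^{k,j}_{[rl,(r+1)l]}$; a dimension count then yields $\dim\Ker(S_{n,m})=d_{n-rl,m-(r+1)l}$. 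Thus the order of vanishing of $\det(S_{n,m})$ equals $\dim\Ker(S_{n,m})$, so the hypotheses of Lemma \ref{lem:inverse_singularity} are met. The lemma shows that $S_{n,m}^{-1}$ has at most a simple pole at $x=0$ with residue $B=\lim_{x\to 0} x\,S_{n,m}^{-1}$ satisfying $\Ker(B)=\Ran(S_0)$; consequently $s^{-1}(u_{\kappa,j})$ has at most a simple pole along $\{T^+_{r,l}=0\}$, with residue $B u_0$, and this pole is present exactly when $u_0\notin\Ran(S_0)$ (here $u_0$ denotes the restriction of $u_{\kappa,j}$ to the hypersurface).

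It then remains to match $u_0\notin\Ran(S_0)$ with $(r,l)\in I$. By Proposition \ref{prop:sinv_sing_characterization}, $\Ran(S_0)=\widetilde s(\rad(\V^{k,j}))^\perp$, and since $\rad(\V^{k,j})$ is generated by $\chi^+_{r,l}$, the subspace $\widetilde s(\rad(\V^{k,j}))$ is the linear span of the vectors $\widetilde s(T\chi^+_{r,l})$, $T\in\U\asl$. Hence $u_0\in\Ran(S_0)$ holds iff $\langle\widetilde s(T\chi^+_{r,l}),u_0\rangle=0$ for all $T$, i.e.\ iff each polynomial $\langle\widetilde s(T\chi^+_{r,l}),u_{\kappa,j}\rangle$ vanishes on $\{T^+_{r,l}=0\}$, i.e.\ is divisible by $T^+_{r,l}$; by the definition of $I$ this says precisely $(r,l)\notin I$. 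Therefore the pole order of $s^{-1}(u_{\kappa,j})$ along $\{T^+_{r,l}=0\}$ is at most $1$ and is $0$ for $(r,l)\notin I$ (note also that $\langle\widetilde s(T\chi^+_{r,l}),u_{\kappa,j}\rangle$ vanishes identically once $rl$ exceeds the grading of $u$, so $I$ is finite). Multiplying by $\prod_{(r,l)\in I}T^+_{r,l}$ and by $\kappa^n$ yields a polynomial $u'_{\kappa,j}$. The statement for families in $\Wt^{\kappa,j}$ follows verbatim with $s,\widetilde s$ and the superscripts $+,-$ interchanged, using the second halves of Propositions \ref{prop:S_det} and \ref{prop:sinv_sing_characterization}.

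The step I expect to be the main obstacle is the passage through Lemma \ref{lem:inverse_singularity}: one must verify that at a generic point of $\{T^+_{r,l}=0\}$ the order of vanishing of $\det(S_{n,m})$ is \emph{exactly} $\dim\Ker(S_{n,m})$, so that the pole of $S_{n,m}^{-1}$ is genuinely simple and its residue is the operator $B$ with $\Ker(B)=\Ran(S_0)$. This is precisely what allows the purely algebraic divisibility condition defining $I$ to be translated, via the range description $\Ran(S_0)=\widetilde s(\rad(\V^{k,j}))^\perp$, into the sharp pole bound; the remaining manipulations (reduction to prime divisors, the dimension count, and the clearing of the $\kappa$-pole) are routine.
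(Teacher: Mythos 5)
Your proposal is correct and follows essentially the same route as the paper: the denominators are controlled by $\det(S_{n,m})$ and Proposition \ref{prop:S_det}, Lemma \ref{lem:inverse_singularity} (applied at fixed generic $\kappa$, i.e.\ at a generic point of $\{T^+_{r,l}=0\}$) bounds the pole order along each $T^+_{r,l}$ by one, and Proposition \ref{prop:sinv_sing_characterization} translates the presence of the pole into the divisibility condition defining $I$. Your write-up is somewhat more detailed than the paper's — in particular you explicitly verify the hypothesis of Lemma \ref{lem:inverse_singularity} by matching the order of vanishing of $\det(S_{n,m})$ with $\dim\Ker(S_{n,m})=d_{n-rl,m-(r+1)l}$, which the paper leaves implicit.
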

\begin{proof}
    Coefficients of $s^{-1}(u_{\kappa,j})$ relative to the canonical basis are rational functions, with denominators containing only powers of $T^+_{r,l}$ and $\kappa$. Restricting to fixed generic $\kappa$ and using Lemma \ref{lem:inverse_singularity} (for $s$ restricted to some $\V^{k,j}_{[n,m]}$) we see that (if ratios are written as irreducible fractions) $T^+_{r,l}$ may occur only linearly in the denominator of $s^{-1}(u_{\kappa,j})$. By Proposition \ref{prop:sinv_sing_characterization}, it~occurs in the denominator of some coefficient if and only if $(r,l) \in I$.
\end{proof}

Next we determine $n$ in \eqref{eq:sinv_singularities}. For this it is sufficient to consider basis vectors.

\begin{lemma} \label{lem:kappa_singularity}
 For every multi-index $I$, vectors
    \begin{equation}
        \kappa^{\sum_{l=1}^\infty i_{l,0}} s^{-1} (\Gamma_{-I} w_{\kappa,j}), \qquad \kappa^{\sum_{l=1}^\infty i_{l,0}} \widetilde s^{-1} (\widetilde \Gamma_{-I} \widetilde w_{\kappa,j})
    \end{equation}
are nonsingular at $\kappa=0$.
\end{lemma}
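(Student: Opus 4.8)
The plan is to control the $\kappa \to 0$ behaviour of the matrices $S_{n,m}$ and $\widetilde S_{n,m}$ that represent $s$ and $\widetilde s$ in the canonical bases, since $s^{-1}(\Gamma_{-I} w_{\kappa,j})$ is simply read off from the $I$-th column of $S_{n,m}^{-1}$ (and likewise for $\widetilde s$). Throughout write $p_J = \sum_{l=1}^\infty i_{l,0}$ for the number of oscillators $a_{-l}$ ($l \geq 1$) occurring in $\Gamma_{-J}$; this is exactly the exponent in the statement. The heart of the matter is the divisibility claim that every entry $S_{n,m}^{J,I}$ is divisible by $\kappa^{p_J}$, and similarly for $\widetilde S_{n,m}^{J,I}$.

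To prove the claim I would expand $s(\sJ_{-I} v_{k,j}) = \bigl( \prod_i J^{a_i}_{-l_i} \bigr) w_{\kappa,j}$ by normal ordering. Since the Heisenberg oscillators $a_n$ are decoupled from $\beta,\gamma$, the $a$-sector factorizes and only contracts with itself. Inspecting \eqref{eq:free_currents}, every mode $a_n$ with $n \neq 0$ enters solely through the terms $\kappa\,\partial\phi$ in $J^0$ and $-2\kappa\,\gamma\,\partial\phi$ in $J^-$, hence carries exactly one factor of $\kappa$, while the zero mode contributes only through $\kappa a_0 = j$, producing neither an oscillator nor a net power of $\kappa$. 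Because $w_{\kappa,j}$ is annihilated by $a_n$ for $n \geq 1$, in any nonvanishing term every annihilation oscillator must be Wick-contracted against a creation oscillator, and each contraction $[a_m,a_{-m}] = \tfrac{m}{2}$ retains both accompanying factors of $\kappa$. Thus a term contributing to $\Gamma_{-J}$, with its $p_J$ surviving oscillators and using $a'$ annihilation modes, contains $p_J + a'$ creation and $a'$ annihilation oscillators, for a total factor $\kappa^{p_J + 2a'}$; the $\beta,\gamma$ sector and the remaining $\kappa$-dependence (e.g.\ $-k\,\partial\gamma = (2-\kappa^2)\partial\gamma$) only add nonnegative powers. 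This yields divisibility by $\kappa^{p_J}$, and the identical argument applies to $\widetilde S_{n,m}$ via \eqref{eq:free_currents_dual}, using that $\widetilde w_{\kappa,j}$ is annihilated by $a_n$, $n \geq 1$.

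With the claim in hand I would factor $S_{n,m} = D_{n,m}\, S'_{n,m}$, where $D_{n,m} = \mathrm{diag}(\kappa^{p_J})_{J \in \mathcal P_{n,m}}$ and $S'_{n,m}$ is polynomial in $\kappa,j$, and then show that $S'_{n,m}$ stays invertible at $\kappa = 0$, i.e.\ that $\det S'_{n,m} = \kappa^{-\sum_J p_J} \det S_{n,m}$ is not divisible by $\kappa$. For this I would combine Proposition~\ref{prop:S_det}, which gives the $\kappa$-exponent of $\det S_{n,m}$ as $\sum_{r,l \geq 1} d_{n-rl,m-rl}$, with the elementary counting identity obtained by writing each oscillator count as a sum of indicators and deleting $r$ copies of $a_{-l}$ (a bijection onto the basis of weight $[n-rl,m-rl]$):
\[
\sum_{J \in \mathcal P_{n,m}} p_J = \sum_{l \geq 1} \sum_{r \geq 1} \#\{ J \in \mathcal P_{n,m} : \text{$\Gamma_{-J}$ has at least $r$ factors $a_{-l}$} \} = \sum_{r,l \geq 1} d_{n-rl,m-rl}.
\]
Hence the two $\kappa$-exponents cancel exactly, leaving $\det S'_{n,m} = \prod_{(r,l) \in \mathcal S^+} (-T^+_{r,l})^{d_{n-rl,m-(r+1)l}}$, whose restriction to $\kappa = 0$ is the nonzero polynomial $\prod (2j+1-l)^{d_{\cdots}}$ in $j$. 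Therefore $(S'_{n,m})^{-1}$ is regular at $\kappa = 0$, and since
\[
\kappa^{p_I}\, s^{-1}(\Gamma_{-I} w_{\kappa,j}) = \sum_{J} \sJ_{-J} v_{k,j}\, \bigl( (S'_{n,m})^{-1} \bigr)^{J,I},
\]
the left-hand side is nonsingular at $\kappa = 0$; the case of $\widetilde s$ is treated identically using the corresponding factorization of $\det \widetilde S_{n,m}$.

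The main obstacle is the divisibility claim of the second paragraph: one must be confident that the oscillator bookkeeping is exact, in particular that the scalar zero mode $a_0 = \kappa^{-1} j$ never secretly lowers the $\kappa$-power and that contractions consume no factors of $\kappa$. Once this is secured, the remainder is purely the assembly of Proposition~\ref{prop:S_det} with the combinatorial identity above.
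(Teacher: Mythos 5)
Your proposal is correct and follows essentially the same route as the paper: the paper's proof rescales the basis $\Gamma_{-I}w_{\kappa,j}$ by $\kappa^{\sum_l i_{l,0}}$ (your diagonal factorization $S_{n,m}=D_{n,m}S'_{n,m}$) and observes, by comparing \eqref{eq:asymptotic_S_det} with \eqref{eq:S_det}, that the determinant in the rescaled basis is not divisible by $\kappa$. You merely make explicit two steps the paper leaves implicit -- the entry-wise divisibility of $S_{n,m}^{J,I}$ by $\kappa^{p_J}$ (via the Wick bookkeeping) and the counting identity $\sum_J p_J=\sum_{r,l\ge 1}d_{n-rl,m-rl}$ -- both of which check out.
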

\begin{proof}
We change basis from vectors $\Gamma_{-I} w_{\kappa,j}$ to $ \kappa^{\sum_{l=1}^\infty i_{l,0}} \Gamma_{-I} w_{\kappa,j}$. Then the determinant of $s$ in the new basis is not divisible by $\kappa$, cf.\ \eqref{eq:asymptotic_S_det} and \eqref{eq:S_det}. The same argument applies also for the dual module.
\end{proof}

\subsection{Screening charges and singular vectors} \label{sec:screenieng_charges}

There exist operators $\mathrm{e}^{2 t q_{\phi}} : \W^{\kappa,j} \to \W^{\kappa,j+ \kappa t}, \, \Wt^{\kappa,j} \to \Wt^{\kappa,j-\kappa t}$ with the following properties:
\begin{itemize}
    \item $\mathrm{e}^{2 t q_\phi} \mathrm{e}^{2t' q_{\phi}} = \mathrm{e}^{2(t+t')q_{\phi}}$ (justifying the exponential notation),
    \item $\mathrm{e}^{-2 t q_\phi} a_0 \mathrm{e}^{2 t q_{\phi}} = a_0 +t$ and $\mathrm{e}^{2t q_{\phi}}$ commutes with $\gamma_n,\beta_n$ and with $a_n$, $n \neq 0$ (exponentiated canonical commutation rules),
    \item $\mathrm{e}^{2t q_{\phi}} w_{\kappa,j} = w_{\kappa, j+\kappa t}$, $\mathrm{e}^{2 t q_{\phi}} \widetilde w_{\kappa,j}=\widetilde w_{\kappa,j-\kappa t}$.
\end{itemize}
It is clear that these rules determine $\mathrm{e}^{2 t q_{\phi}}$ uniquely.
\begin{proof}[Proof of existence]
The maps:
\begin{equation}
 \gamma_m \mapsto \gamma_{m}, \ \beta_m \mapsto \beta_{m}, \ a_m \mapsto a_m + \delta_{m,0} t, \ I \mapsto I, \qquad t \in \C
\end{equation}
are automorphisms of $\w$ taking the annihilator of $w_{\kappa ,j}$ in $\U \w$ to the annihilator $\w^{\kappa,j-\kappa t}$, and similarly for $\widetilde w_{\kappa,j}$.
\end{proof}

 

We can now introduce the (multi-valued) normal ordered exponential fields
\begin{eqnarray}
\label{eq:ordered:exponential:1}
{\mathsf E}^\alpha(z) &  \equiv &  :\!{\mathrm e}^{2\alpha\phi(z)}\!: \;\; = \;   {\mathrm e}^{2\alpha\sum\limits_{m=1}^\infty \frac{a_{-m}}{m}z^m}{\mathrm e}^{2\alpha q_\phi} z^{2\alpha a_0}
{\mathrm e}^{-2\alpha\sum\limits_{m=1}^\infty \frac{a_{m}}{m}z^{-m}}.
\end{eqnarray}
They satisfy\footnote{$(z_1-z_2)^{2 \alpha_1 \alpha_2}$ is understood as $z_1^{2 \alpha_1 \alpha_2} (1-z_1^{-1} z_2)^{2\alpha_1\alpha_2}$ with $(1-z_1^{-1} z_2)^{2\alpha_1\alpha_2}$ given by the series expansion convergent for $|z_1| > |z_2|$.}
\begin{equation}
{\mathsf E}^{\alpha_1}(z_1){\mathsf E}^{\alpha_2}(z_2)
=
(z_1-z_2)^{2\alpha_1\alpha_2}:\!{\mathsf E}^{\alpha_1}(z_1){\mathsf E}^{\alpha_2}(z_2)\!:
\end{equation}
and the OPE
\begin{equation}
\partial\phi(w){\mathsf E}^\alpha(z) \sim \frac{\alpha}{w-z}{\mathsf E}^\alpha(z).
\end{equation}

Consider now the field 
\(
\beta(w){\mathsf E}^{-\kappa^{-1}}(w)
\)
and (for $-\kappa^2\in \N \setminus \{ 0 \}$) $\beta^{-\kappa^2}(w){\mathsf E}^{\kappa}(w)$. We have:
\begin{equation}
\label{eq:OPE:Jplus:zero:screening:charges}
J^{\sharp}(z) \beta(w){\mathsf E}^{-\kappa^{-1}}(w) \sim \mathrm{reg},
\hskip 1cm
J^{\sharp}(z) \beta^{-\kappa^2}(w){\mathsf E}^{\kappa}(w) \sim \mathrm{reg},
\hskip 1cm \sharp =+,0,
\end{equation}
and
\begin{eqnarray}
\label{eq:OPE:Jminus:screening:charges}
\nonumber
J^-(z)\beta(w){\mathsf E}^{-\kappa^{-1}}(w) & \sim & \kappa^2\partial_w\left(\frac{1}{z-w}{\mathsf E}^{-\kappa^{-1}}(w)\right),
\\[-7pt]
\\[-7pt]
\nonumber
J^-(z)\beta^{-\kappa^2}(w){\mathsf E}^{\kappa}(w) & \sim & \kappa^2\partial_w\left(\frac{1}{z-w} \beta^{-\kappa^2-1}(w){\mathsf E}^{\kappa}(w)\right).
\end{eqnarray}
The \emph{screening charges} are defined by: 
\begin{equation}
\label{eq:screening_charges}
{\mathsf Q}_- = \int \beta(w){\mathsf E}^{-\kappa^{-1}}(w)\, dw, \hskip 1cm {\mathsf Q}_+ = \int \beta^{-\kappa^2}(w){\mathsf E}^{\kappa}(w)\,dw,
\end{equation}
with integrals performed over suitable contours. It is automatic that ${\mathsf Q}_-, {\mathsf Q}_+$ commute with all the $\asl$ generators $J^{\sharp}_n.$ Since right hand sides in \eqref{eq:OPE:Jminus:screening:charges} are full derivatives, equalities
\begin{equation*}
    [J^-_n,{\mathsf Q}_-] = [J^-_n,{\mathsf Q}_+] = 0
\end{equation*}
hold if there are no contribution from the endpoints of the integration contour (e.g.~the contour is closed). We will use symbols $\mathsf Q_\pm^l$ for $l-$fold iterated integrals, e.g.
\begin{equation}
    \mathsf Q_-^2 = \int \left( \int \beta(w_1) \mathsf E^{- \kappa^{-1}}(w_1) \beta(w_2) \mathsf E^{- \kappa^{-1}}(w_2) dw_2 \right) dw_1 .
\end{equation}
Since the contour of integration for $w_2$ typically depends on $w_1$ it is not accurate to think of $\mathsf Q_\pm^l$ as the $l-$th power of $\mathsf Q_\pm$.

\begin{proposition}
\label{prop:singular:vectors:through:screening:charges}
Let $j^a_{r,l}$ be the unique solution of $T^a_{r,l}(k,j^a_{r,l})=0$. Define
\begin{equation}
{\widetilde h}_{\kappa}^{(r,l)}= {\mathsf Q}_-^l \widetilde w_{\kappa,j^+_{r,l}-l} \in \Wt^{\kappa,j^+_{r,l}}, \qquad 
h_{\kappa}^{(r,l)}= {\mathsf Q}_-^lw_{\kappa,j^-_{r,l}+l} \in \W^{\kappa,j^-_{r,l}}.
\end{equation}
Integration contours in $\mathsf Q^l_-$ may be chosen so that these expressions are not identically zero, and are proportional to  $\widetilde s(\chi^+_{r,l})$ and $s(\chi^-_{r,l}),$ respectively.
\end{proposition}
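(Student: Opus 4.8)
The plan is to exploit that the screening current $\beta(w)\mathsf{E}^{-\kappa^{-1}}(w)$ has only a simple pole against $J^+$ and $J^0$ and a \emph{total-derivative} singularity against $J^-$, see \eqref{eq:OPE:Jplus:zero:screening:charges}--\eqref{eq:OPE:Jminus:screening:charges}. First I would fix the nested cycles defining the iterated integral $\mathsf{Q}_-^l$ to be closed (Pochhammer-type) contours, chosen so that the boundary terms produced by \eqref{eq:OPE:Jminus:screening:charges} vanish for the (generically irrational) monodromies at hand. With such cycles one has $[\mathsf{Q}_-^l,J^a_n]=0$ for every $a,n$; since the Sugawara modes $L_n$ are polynomials in the $J^a_n$, also $[\mathsf{Q}_-^l,L_n]=0$. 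Because the dual currents obey $\widetilde J^a_n=-J^{-a}_n$ as operators on the Fock space, the very same cycles give $[\mathsf{Q}_-^l,\widetilde J^a_n]=0$ as well, so $\mathsf{Q}_-^l$ commutes with the full symmetry algebra on both $\W^{\kappa,j}$ and $\Wt^{\kappa,j}$.

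Next I would track weights. Since $\mathsf{E}^{-\kappa^{-1}}$ shifts the $a_0$-momentum by $-\kappa^{-1}$, it maps $\W^{\kappa,j}\to\W^{\kappa,j-1}$ and $\Wt^{\kappa,j}\to\Wt^{\kappa,j+1}$, placing $h_\kappa^{(r,l)}\in\W^{\kappa,j^-_{r,l}}$ and $\widetilde h_\kappa^{(r,l)}\in\Wt^{\kappa,j^+_{r,l}}$ as stated. By the commutation above these are highest weight vectors. The screening current carries $J^0_0$-charge $0$ and is $L_0$-invariant, so $\mathsf{Q}_-^l$ preserves both the charge and $L_0$; using $T^-_{r,l}=0$ one checks $\Delta_{k,j^-_{r,l}+l}-\Delta_{k,j^-_{r,l}}=rl$ (and, from $T^+_{r,l}=0$, $\Delta_{k,j^+_{r,l}-l}-\Delta_{k,j^+_{r,l}}=rl$), so that $h_\kappa^{(r,l)}$ sits at weight $[rl,(r-1)l]$ and $\widetilde h_\kappa^{(r,l)}$ at weight $[rl,(r+1)l]$ --- exactly the weights of $s(\chi^-_{r,l})$ and $\widetilde s(\chi^+_{r,l})$.

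For the proportionality I would show that, for generic $\kappa$, the relevant space of highest weight vectors is one-dimensional. At $j=j^-_{r,l}$ only $T^-_{r,l}$ vanishes, so $\V^{k,j}$ has a unique proper submodule $\rad(\V^{k,j})$, generated by $\chi^-_{r,l}$ and itself irreducible. Since $s(\chi^-_{r,l})\neq 0$ by \eqref{eq:WHW}, we have $\chi^-_{r,l}\notin\Ker(s)$; as $\Ker(s)$ is a submodule contained in $\rad(\V^{k,j})$, it must be $0$, and then Proposition \ref{prop:dim_comp} forces $s$ to be an isomorphism. Hence the highest weight vectors of $\W^{\kappa,j^-_{r,l}}$ at weight $[rl,(r-1)l]$ form the line $\C\,s(\chi^-_{r,l})$, so $h_\kappa^{(r,l)}\in\C\,s(\chi^-_{r,l})$; the case of $\widetilde h_\kappa^{(r,l)}$ is identical with $\widetilde s$, $\chi^+_{r,l}$, $T^+_{r,l}$.

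It remains to prove the proportionality constant is not identically zero, which I expect to be the main obstacle. Since the screening current contains no $\gamma$-oscillator, $h_\kappa^{(r,l)}$ lies in the subspace spanned by monomials in $a_{-n},\beta_{-n}$ alone --- consistent with the manifestly $\gamma$-free form of $s(\chi^-_{r,l})$ in the displayed examples --- so I would read off a single coefficient in this sector, say that of $\beta_{-1}^{l}a_{-1}^{(r-1)l}w_{\kappa,j^-_{r,l}}$. Expanding $\prod_i\mathsf{E}^{-\kappa^{-1}}(w_i)$ on $w_{\kappa,j^-_{r,l}+l}$ produces the Coulomb-gas measure $\prod_i w_i^{-2(j^-_{r,l}+l)/(k+2)}\prod_{i<j}(w_i-w_j)^{2/(k+2)}$ times oscillator exponentials, while the product of the $\beta(w_i)$ supplies the $\beta_{-n}$ modes; the chosen coefficient then equals a Dotsenko--Fateev/Selberg-type multiple integral over the cycle, nonzero for generic $\kappa$ by the Selberg integral formula, and as the family is analytic in $\kappa$ the constant is nonzero for generic $\kappa$. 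The delicate point is to pick a cycle that simultaneously closes up --- guaranteeing the $J^-$-commutation of the first step --- and has nonvanishing period; I would secure this by degenerating $k+2$ to a value at which all exponents become integers, turning the integral into a computable sum of residues, with the worked cases $\chi^-_{1,l}$ and $\chi^-_{2,1}$ fixing the normalization. The identical computation with $\widetilde w$ in place of $w$ disposes of $\widetilde h_\kappa^{(r,l)}$.
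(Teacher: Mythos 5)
The paper does not prove this Proposition itself: it defers entirely to \cite{Bernard:1989iy}, Section 2, and only afterwards records the large-$\kappa$ asymptotics \eqref{eq:chi:rl:through:Q:asymptotics} for later use. Your skeleton --- (i) choose cycles on which the total-derivative OPE \eqref{eq:OPE:Jminus:screening:charges} integrates to zero, so that $\mathsf Q_-^l$ commutes with all $J^a_n$, $\widetilde J^a_n$ and hence all Sugawara modes; (ii) track the $a_0$-shift and the $J^0_0$-, $L_0$-eigenvalues to land exactly at the weights $[rl,(r-1)l]$ and $[rl,(r+1)l]$ of $\chi^-_{r,l}$ and $\chi^+_{r,l}$; (iii) observe that for generic $\kappa$ only the single factor $T^{\mp}_{r,l}$ vanishes, so $s$ (resp.\ $\widetilde s$) is an isomorphism by Proposition \ref{prop:S_det} (or Proposition \ref{prop:sinv_sing_characterization}) and the relevant space of highest weight vectors is the line through $s(\chi^-_{r,l})$ (resp.\ $\widetilde s(\chi^+_{r,l})$) --- is precisely the standard free-field argument and is what the cited reference carries out; steps (ii) and (iii) are correct as written.

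The thin spot is the pair of claims you yourself flag: that a cycle exists on which all boundary terms vanish \emph{and} on which the resulting vector is nonzero. A generic ``Pochhammer-type'' closed contour is not obviously available for $l$ nested screenings; the correct object is a cycle in homology with coefficients in the local system determined by the multivalued factor in \eqref{eq:singular:vector:integrand}, and constructing it is the actual content of \cite{Bernard:1989iy,Felder:1989} (the nested arcs from $1$ to $\mathrm e^{2\pi i}\cdot 1$). Your proposal to certify non-vanishing by degenerating $k+2$ to integer values is the weakest link: at integer exponents the local system trivializes and the Felder cycles can become homologically trivial --- this is exactly the locus where the screening construction degenerates --- so a vanishing value there would prove nothing, and a nonzero value requires knowing which continued cycle is being evaluated. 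The cleaner route, and the one the paper actually records in \eqref{eq:chi:rl:through:Q:asymptotics}, is the opposite limit $\kappa\to\infty$: with the Bernard--Felder contours the exponents tend to $0$ and $-r$, the iterated integral collapses to residues, and one reads off the manifestly nonzero leading term $(2\pi i)^l\,\beta_{-r}^l\,\widetilde w_{\kappa,j^+_{r,l}}$ (and likewise for $h^{(r,l)}_\kappa$), which settles non-vanishing for generic $\kappa$ without any Selberg-type evaluation.
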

\begin{proof}
Proof of Proposition \ref{prop:singular:vectors:through:screening:charges} can be found in \cite{Bernard:1989iy}, Section 2. 
\end{proof}
The integration contours chosen in \cite{Bernard:1989iy} form a set of nested
curves going counterclockwise from $1$ to $\mathrm{e}^{2\pi i}1,$ and are localized in the
neighborhood of the circle of radius 1 centered at the origin (as in \cite{Felder:1989}).

\vskip 5mm
\centerline{\includegraphics[width=7cm]{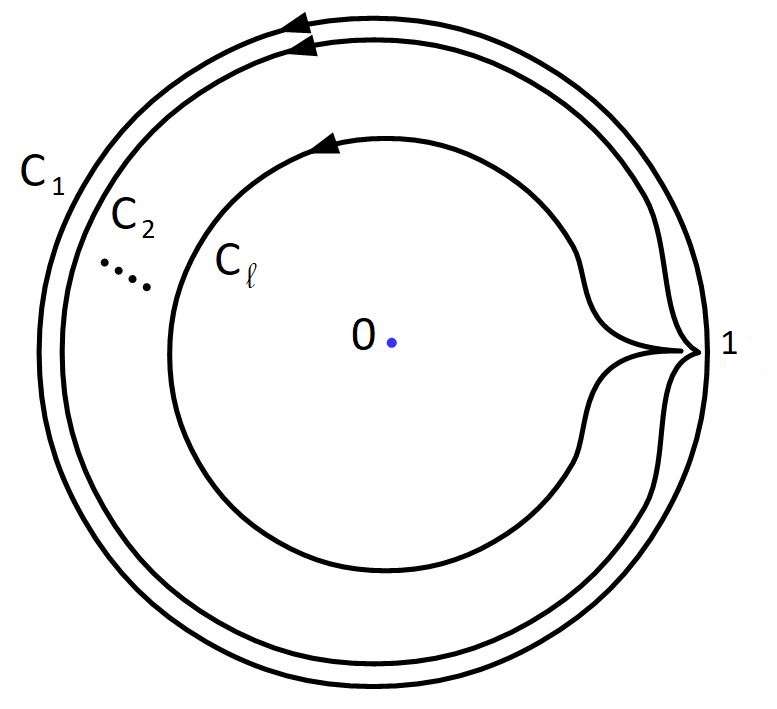}}

For $|z_1|> |z_2| > \ldots > |z_l|$ standard oscillator calculations give:
\begin{equation}
\label{eq:singular:vector:integrand}
\prod\limits_{s=1}^l \beta(z_s){\mathsf E}^{-\kappa^{-1}}(z_s)\widetilde{w}_{\kappa,j-l} 
= 
\prod\limits_{s<k}(z_s-z_k)^{2\kappa^{-2}}\prod\limits_{s=1}^l z_s^{2\kappa^{-2}(j-l+1)}\beta_{\leq}(z_s){\mathsf E}_{<}^{-\kappa^{-1}}(z_s)
\widetilde{w}_{\kappa,j},
\end{equation}
where:
\[
{\mathsf E}_{<}^{\alpha}(z) =  {\mathrm e}^{2\alpha\sum\limits_{m=1}^\infty \frac{a_{-m}}{m}z^m}
\hskip 1cm \mathrm{and} \hskip 1cm
\beta_{\leq}(z) = \sum\limits_{m=0}^{\infty}z^{m-1}\beta_{-m}.
\]
Since, up to the terms vanishing  for $\kappa^{-1}\to 0:$
\begin{eqnarray}
&&
\hskip -1.5cm
\oint\limits_{C_1}\!dz_1\ldots \hskip -3pt\int\limits_{C_l}\hskip -5pt dz_l\ 
\prod\limits_{s<k}(z_s-z_k)^{2\kappa^{-2}}\prod\limits_{s=1}^l z_s^{2\kappa^{-2}(j^+_{r,l}-l+1)}\beta_{\leq}(z_s){\mathsf E}_{<}^{-\kappa^{-1}}(z_s)\widetilde{w}_{\kappa,j^+_{r,l}}
\\
\nonumber
& \stackrel{\kappa^{-1}\to\, 0}{\approx} &
\oint\limits_{C_1}\!dz_1\ldots \hskip -3pt\int\limits_{C_l}\hskip -5pt dz_l\ 
\prod\limits_{s=1}^l z_s^{-r}\beta_{\leq}(z_s)\widetilde{w}_{\kappa,j^+_{r,l}} 
=
(2\pi i)^{l}\Bigg(\oint\limits_{0}\!\frac{dz}{2\pi i}\ z^{-r}\beta_{\leq}(z)\Bigg)^{l}\widetilde{w}_{\kappa,j^+_{r,l}},
\end{eqnarray}
we conclude that
\begin{equation}
\label{eq:chi:rl:through:Q:asymptotics}
 {\mathsf Q}_-^l\widetilde{w}_{\kappa,j^+_{r,l}-l}
=
(2\pi i)^l\beta_{-r}^l\widetilde{w}_{\kappa,j^+_{r,l}} + o(1)_{\kappa \to \infty},
\end{equation}
where $o(1)_{\kappa \to \infty}$ denotes terms vanishing for $\kappa \to \infty$. An analogous calculation give:
\begin{equation}
{\mathsf Q}_-^l  w_{\kappa,j^-_{r,l}+l}
=
(2\pi i)^l\beta_{-r}^l w_{\kappa,j^-_{r,l}} + o(1)_{\kappa \to \infty}.
\label{eq:chi:rl:through:Q:asymptotic:minus}
\end{equation}

We note that up to a multiplicative factor, formulas \eqref{eq:chi:rl:through:Q:asymptotics} and \eqref{eq:chi:rl:through:Q:asymptotic:minus} determine the asymptotics of $\left. s(\chi^-_{r,l}) \right|_{T^{-}_{r,l}=0}$ and $\left. \widetilde s(\chi^+_{r,l}) \right|_{T^{+}_{r,l}=0}$ for large $\kappa$. Another structural property of these vectors seen in the calculation with screening charges is that they may be obtained by acting exclusively with operators $a_n$ and $\beta_n$, without $\gamma_n$. This can also be seen directly, because $\left. s(\chi^-_{r,l}) \right|_{T^{-}_{r,l}=0}$ is annihilated by $J_0^+=\beta_0, J_1^+ = \beta_1$ etc., and analogously for $\left. \widetilde s(\chi^+_{r,l}) \right|_{T^{+}_{r,l}=0}$.

The limit $k \to \infty $ can also be interpreted in semiclassical terms. One way to make this systematic is with Poisson vertex algebras, see e.g.\ the textbook \cite{FBZ}. or the recent work \cite{BigCenter}. 

\section{Tensor product}
\label{section:tensor:product}

Let $\ut \V^{k,j, \epsilon} = \V^{k,j} \otimes_{\C} \H^{1, \epsilon}$. In this space we have two commuting sets of currents, $\sJ^a(z)$ inherited from $\V^{k,j}$, and $K^a(z)$ inherited from $\H^{1, \epsilon}$. The combined currents
\begin{equation}
    \J^a(z) = \sJ^a(z) +  K^a(z) = \sum_{n \in \Z} \frac{\mathcal J^a_n}{z^{n+1}}
    \label{eq:J_curl}
\end{equation}
make $\ut \V^{k,j, \epsilon}$ a representation of $\asl$ at level $k+1$. 

Suppose that $k \not \in \{ -2 , -3 \}$, which we assume for the remainder of this Section. Let $T^{J}, T^K$ and $T^{\J}$ be Sugawara fields constructed from the respective currents. Define
\begin{equation}
    T^{\Vir}(z) = T^J(z) + T^K(z) - T^{\J}(z) = \sum_{n \in \Z} \frac{L_n^{\Vir}}{z^{n+2}}.
    \label{eq:T_Vir}
\end{equation}
Then $T^{\Vir}(z)$ satisfies OPE of a Virasoro current with central charge 
\begin{equation}
 c_k^{\otimes} = c_k + c_1 - c_{k+1}.   
\end{equation}
Moreover, it commutes with $\J^a(z)$. Hence $\ut \V^{k,j,\epsilon}$ is a~representation of $\asl \oplus \Vir$, where $\Vir$ is the Virasoro algebra spanned by $L_n^{\Vir}$ \cite{GKO}.

First hints on the structure of $\ut \V^{k,j,\epsilon}$ regarded as a representation of $\asl \oplus \Vir$ can be obtained by inspection of characters. We have \cite{DiFrancesco:1997nk}:
\begin{eqnarray}
\nonumber 
\chi_{k,j}^{\V}(q,y) 
& := & 
\mathrm{Tr}_{\V^{k,j}}\,q^{L_0-\frac{\scriptstyle{c}_{\scriptscriptstyle k}}{24}}y^{J^0_0}
=
\frac{q^{-\frac{\scriptstyle{c}_{\tiny{k}}}{24}}}{1-y^{-1}}\frac{y^jq^{\frac{j(j+1)}{k+2}}}{(qy;q)(q;q)(qy^{-1};q)},
\\[4pt]
\chi_{1,\epsilon}^{\H}(q,y)
& := &
\mathrm{Tr}_{\mathcal{H}^{1,\epsilon}}\,q^{L_0-\frac{\scriptstyle{c}_{\scriptscriptstyle 1}}{24}}y^{J^0_0} 
=
\frac{q^{-\frac{\scriptstyle{c}_{\tiny{1}}}{24}}}{(q;q)}\sum\limits_{m \in \mathbb{Z}}q^{(m+\epsilon)^2}y^{m+\epsilon},
\end{eqnarray}
where 
\begin{equation}
(z;q) = \prod\limits_{n=1}^\infty \left(1-zq^{n-1}\right).
\end{equation}
Consequently,
\begin{equation}
\label{eq:chracter:decomposition}
\chi_{k,j}^{\V}(q,y)\chi_{1,\epsilon}^{\H}(q,y)
=
\sum\limits_{n\in\mathbb Z} \chi_{k+1,j+\epsilon + n}^{\V}(q,y)\chi^{\Vir}_{c_k^\otimes,\Delta^n_{k,j,\epsilon}}(q),
\end{equation}
where 
\begin{equation}
\Delta^n_{k,j,\epsilon} = \frac{j(j+1)}{k+2} -\frac{(j+\epsilon+n)(j+\epsilon+n+1)}{k+3} + (n+\epsilon)^2
\label{eq:highest_weights_Vir}
\end{equation}
and
\begin{equation}
    \chi^{\Vir}_{c,\Delta}(q) = \frac{q^{\Delta - \frac{c}{24}}}{(q;q)}
\end{equation}
is the character of the Virasoro Verma module $\V_{\Vir}^{c,\Delta}$ with the $L_0$-weight $\Delta$ and the central charge~$c.$ Identity \eqref{eq:chracter:decomposition} does not prove, but is suggestive of existence of a decomposition 
\begin{equation}
    \V^{k,j} \otimes \H^{1 , \epsilon} \cong \bigoplus_{n \in \Z} \V^{k+1,j+\epsilon+n} \otimes \V^{c_k^{\otimes},\Delta^n_{k,j,\epsilon}}_{\Vir}.
    \label{eq:module:decomp}
\end{equation}

The goal of the present Subsection is to explicitly construct vectors $v^n_{\kappa,j,\epsilon} \in \V^{k,j,\epsilon}$, $n \in \Z$, which are highest weight vectors of $\asl{}_{,k+1} \oplus \Vir$ with weights as indicated in
\eqref{eq:module:decomp}.

Let $\ut \W^{\kappa,j, \epsilon} = \W^{\kappa,j} \otimes_{\C} \H^{1,\epsilon}$ and $\ut \Wt^{\kappa,j, \epsilon} = \Wt^{\kappa,j} \otimes_{\C} \H^{1,\epsilon}$. Constructions (\ref{eq:J_curl},\ref{eq:T_Vir}) may be repeated in these modules. Homomorphisms $s, \widetilde s$ induce maps:
\begin{equation}
     \mathbf s = s \otimes 1_{\H^{1, \epsilon}}: {}^{\otimes}\V^{k,j, \epsilon} \to \ut \W^{\kappa,j, \epsilon}, \qquad  \widetilde {\mathbf s} = \widetilde s \otimes 1_{\H^{1, \epsilon}} : {}^{\otimes}\V^{k,j, \epsilon} \to \ut \Wt^{\kappa,j, \epsilon}
\end{equation}
satisfying
\begin{equation}
\label{eq:relation_between_bilinear_forms}
    g(v,v' ) =  \langle  \widetilde {\mathbf s}(v) | \mathbf s (v') \rangle ,
\end{equation}
where $g$, $ \langle \cdot | \cdot \rangle$ are the self-evident induced pairings on $\ut \V^{k, j , \epsilon}$ and between $\ut \Wt^{\kappa, j , \epsilon}$ and $\ut \W^{\kappa,j,\epsilon}$, respectively. We hope that the overloaded notation is not too confusing.

We define the following fields:
\begin{align}
    \rho(z) &= \gamma^2(z) K^+(z) - 2 \gamma(z) K^0(z) - K^-(z) - \partial \gamma(z) \qquad && \text{in } \ut \W^{\kappa,j,\epsilon}, \\
    \widetilde \rho(z) &=  \gamma^2(z) K^-(z) - 2 \gamma(z) K^0(z) - K^+(z) + \partial \gamma(z) \qquad && \text{in } \ut \Wt^{\kappa,j,\epsilon}. \nonumber
\end{align}

\begin{proposition} \label{prop:rho_OPE}
The following OPE are satisfied:
\begin{align}
    & \J^+(z) \rho(w) \sim \mathrm{reg}, \qquad && \J^+(z) \widetilde \rho(w) \sim \frac{2 \gamma(w) \widetilde \rho(w)}{z-w}, \nonumber \\
    & \J^0(z) \rho(w) \sim - \frac{\rho(w)}{z-w}, \qquad && \J^0(z) \widetilde \rho(w) \sim \frac{\widetilde \rho(w)}{z-w}, \nonumber \\
    & \J^-(z) \rho(w) \sim \frac{2 \gamma(w) \rho(w)}{z-w}, \qquad && \J^-(z) \widetilde \rho(w) \sim \mathrm{reg}, \label{eq:rho_OPE} \\
   &  (T^J(z)+T^K(z))\rho(w) \sim \frac{\partial}{\partial w} \frac{\rho(w)}{z-w}, && (T^J(z)+T^K(z))\widetilde \rho(w) \sim \frac{\partial}{\partial w} \frac{\rho(w)}{z-w}, \nonumber \\
    & \rho(z) \rho(w) \sim \mathrm{reg}, \qquad && \widetilde \rho(z) \widetilde \rho(w) \sim \mathrm{reg}. \nonumber
\end{align}
\end{proposition}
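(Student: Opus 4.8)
The plan is to reduce everything to two elementary families of operator product expansions: the action of the Wakimoto currents on the field $\gamma$, and the level-one current algebra of the fermionic currents $K^a$. On $\ut \W^{\kappa,j,\epsilon}$ the combined current is $\J^a(z)=J^a(z)+K^a(z)$, and since $J^a$ involves only $\beta,\gamma,\partial\phi$ while $K^a$ involves only the fermions, the two summands act on disjoint tensor factors and hence contract only with the corresponding constituents of $\rho$. First I would record, from $\beta(z)\gamma(w)\sim -\tfrac{1}{z-w}$ and the definitions \eqref{eq:free_currents}, the single-contraction rules
\begin{equation*}
J^+(z)\gamma(w)\sim\frac{-1}{z-w},\qquad J^0(z)\gamma(w)\sim\frac{-\gamma(w)}{z-w},\qquad J^-(z)\gamma(w)\sim\frac{\gamma(w)^2}{z-w},
\end{equation*}
together with $\widetilde J^a(z)\gamma(w)=-J^{-a}(z)\gamma(w)$ on the dual side; alongside these the level-one OPE $K^a(z)K^b(w)\sim\frac{1}{z-w}\sum_c f^{ab}_{\ \ c}K^c(w)+\frac{h^{ab}}{(z-w)^2}$ obtained from \eqref{eq:current_OPE} with $k=1$.

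For each current OPE I would then split $\J^a(z)\rho(w)=J^a(z)\rho(w)+K^a(z)\rho(w)$, compute the first summand by Wick-contracting the $\beta$ inside $J^a$ against the $\gamma$'s in $\rho$ (Taylor expanding the surviving $\gamma(z)$-factors about $w$), the second by the $K$-algebra above, and add. The combination defining $\rho$ is engineered so that the resulting double poles cancel between the two summands and the single poles assemble into the stated right-hand sides; for instance in $\J^+(z)\rho(w)$ the contribution $\frac{-1}{z-w}(2\gamma K^+-2K^0)+\frac{1}{(z-w)^2}$ from $\beta(z)$ cancels exactly against $\frac{1}{z-w}(2\gamma K^+-2K^0)-\frac{1}{(z-w)^2}$ from $K^+(z)$. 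The Virasoro OPEs reduce to the observation that $\gamma$ is a weight-zero primary for $T^J$ (so that $\gamma^2$ and $\partial\gamma$ are weight-zero and weight-one primaries respectively) while each $K^a$ is a weight-one primary for $T^K$; since $T^J+T^K$ acts as a derivation across the two factors, every monomial of $\rho$, being a product of a $\gamma$-factor with a $K$-factor of total weight one, transforms as a weight-one primary, giving the claimed form.

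The self-OPE $\rho(z)\rho(w)\sim\mathrm{reg}$ is where the essential cancellation lives, but it simplifies once one notices that $\rho$ contains no $\beta$, so the $\gamma$-factors never contract: the only contractions are among the fermionic currents, with the $\gamma$'s acting as commuting spectators. Writing $\rho=\sum_a u_a K^a-\partial\gamma$ with $u_+=\gamma^2,\ u_0=-2\gamma,\ u_-=-1$, the double-pole term is $\frac{1}{(z-w)^2}\sum_{a,b}u_a(z)u_b(w)h^{ab}=\frac{-(\gamma(z)-\gamma(w))^2}{(z-w)^2}$, which is regular because $\gamma(z)-\gamma(w)=O(z-w)$; here the isotropy of $(u_a)$ with respect to the invariant form $h^{ab}$ is decisive. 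The single-pole term $\frac{1}{z-w}\sum_{a,b,c}u_a u_b f^{ab}_{\ \ c}K^c(w)$ vanishes by antisymmetry of the structure constants in $a,b$, and $\partial\gamma$ contributes only regular terms.

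Finally, I expect the genuinely laborious step to be $\J^-(z)\rho(w)\sim\frac{2\gamma(w)\rho(w)}{z-w}$ (and its mirror $\J^+(z)\widetilde\rho(w)$), because the cubic current $J^-=-:\gamma^2\beta:-2\kappa\gamma\partial\phi-k\partial\gamma$ produces several terms whose $\gamma$-coefficients must be reorganized, together with the $K^-(z)$ contractions, into a single multiple of $\rho$ itself; this is the only OPE whose right-hand side is nonlinear in the constituents of $\rho$, so it demands verifying closure rather than mere cancellation. The $\widetilde\rho$ column then follows from the $\rho$ column under the involution $J^a_n\mapsto -J^{-a}_n$, $K^a\mapsto -K^{-a}$, which fixes $T^J,T^K$ and interchanges the two Wakimoto realizations \eqref{eq:free_currents} and \eqref{eq:free_currents_dual}, so those computations need not be repeated.
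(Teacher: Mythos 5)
Your proposal is correct and follows essentially the same route as the paper's (sketched) proof: split $\J^a=J^a+K^a$, Wick-contract each summand against $\rho$ using the free-field OPEs and the level-one $K$-algebra, and observe the cancellations upon adding, with the $\widetilde\rho$ column handled by the dual realization; your isotropy argument for $\rho(z)\rho(w)$ and the weight-one-primary argument for the $T^J+T^K$ OPE are clean packagings of the same computation. One remark: carried out literally, your involution yields $\J^+(z)\widetilde\rho(w)\sim -\,2\gamma(w)\widetilde\rho(w)/(z-w)$, and a direct mode/Wick computation confirms this minus sign, so the plus sign printed in the statement (like the $\rho$ that should read $\widetilde\rho$ in the fourth line) appears to be a harmless typo rather than a defect of your argument.
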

\begin{proof}
We sketch the computation for $\rho$, the case of $\widetilde \rho$ being analogous. OPE of $J$ and $K$ with $\rho$ are easy to get, but separately not very illuminating -- cancellations occur upon adding them up. To calculate the OPE of $T^J+T^K$ with $\rho$, we use \eqref{eq:free_T} to expand $T^J(z) \rho(w)$ and the OPE satisfied by $K^a$ to expand $T^K(z) \rho(w)$. The OPE between $\rho$ and $\rho$ is easy.
\end{proof}

Let us write
\begin{equation}
    \rho(z) = \sum_{n \in \Z} \frac{\rho_n}{z^{n+1}}, \qquad \widetilde \rho(z) = \sum_{n \in \Z} \frac{\widetilde \rho_n}{z^{n+1}}.
\end{equation}
Many useful commutation rules obeyed by $\rho_n, \widetilde \rho_n$ are encoded in Proposition \ref{prop:rho_OPE}, e.g.
\begin{equation}
    \label{eq:comm:LVir:rho}
  [\mathcal J^+_n, \rho_m]=0, \qquad  [\mathcal J^0_n,\rho_m] = -\rho_{n+m}, \qquad  [L^J_n+L^K_n,\rho_m] = -m\rho_{m+n}.
\end{equation}
We note also the identity
\begin{equation}
    \langle \varphi | \rho_n w \rangle = \langle \widetilde \rho_{-n} \varphi | w \rangle \qquad \text{for } \varphi \in \ut \Wt^{\kappa,j,\epsilon}, \, w \in \ut \W^{\kappa,j,\epsilon}.
\end{equation}

We proceed to construct highest weight vectors in $\ut \W^{\kappa,j,\epsilon}$ and $\ut \widetilde \W^{\kappa,j,\epsilon}$. We start from the most basic ones:
\begin{equation}
    w_{\kappa,j, \epsilon}^0 = w_{\kappa,j} \otimes f_\epsilon \in \ut \W^{\kappa,j,\epsilon}, \qquad \widetilde w_{\kappa,j,\epsilon}^0 = \widetilde w_{\kappa,j} \otimes f_{\epsilon} \in \ut \widetilde \W^{\kappa,j, \epsilon}.
\end{equation}

Then, for any $n \geq 1$ we introduce:
\begin{align}
\label{eq:special:states}
\nonumber
& w_{\kappa,j,0}^n = \rho_{-2n+1} \cdots \rho_{-3}\rho_{-1} w_{\kappa,j,0}^0 \hskip 30pt\in\; \ut \W^{\kappa,j,0},
\\[2pt]
\nonumber
& w_{\kappa,j,\frac12}^n = {(-1)^n}\rho_{-2n+2} \cdots \rho_{-2}\rho_0 w_{\kappa,j,\frac12}^0 \hskip 3pt\in\; \ut \W^{\kappa,j,\frac12},
\\[-7pt]
\\[-7pt] \nonumber  
& \widetilde w_{\kappa,j,0}^n = \widetilde \rho_{-2n+1} \cdots \widetilde\rho_{-3}\widetilde\rho_{-1}\widetilde w_{\kappa,j,0}^0  \hskip 30pt \in\; \ut \Wt^{\kappa,j,0},
\\[2pt]
\nonumber
& \widetilde w_{\kappa,j,\frac12}^n = (-1)^n \widetilde \rho_{-2n} \cdots \widetilde\rho_{-2}\widetilde w_{\kappa,j,\frac12}^0 \hskip 25pt\in\; \ut \Wt^{\kappa,j,\frac12}.
\end{align}

To prove that vectors \eqref{eq:special:states} have the desired properties (Proposition \ref{prop:explicit:hs:states} below), it is useful to use the embedding of $\H^{1, \epsilon}$ in $\F$ explained in Proposition \ref{prop:Fermi_Rep}. This allows to express $\rho(z)$ and $\widetilde{\rho}(z)$ fields (acting in extended spaces $\W^{\kappa, j} \otimes \F$ and $\widetilde \W^{\kappa, j} \otimes \F$) as products of fermionic fields:
\begin{equation}
\label{eq:rho:through:fermions}
    \rho(z) = \chi(z){\overline \chi}(z),
    \hskip 1cm
    \widetilde{\rho}(z) = \widetilde{\chi}(z)\widetilde{\overline \chi}(z),
\end{equation}
where
\begin{equation*}
  \chi(z) = \psi^1(z)- \gamma(z)\psi^2(z), \hskip 5mm \overline{\chi}(z) = \psibar^2(z)+\gamma(z)\psibar^1(z),
\end{equation*}
and
\begin{equation*}
  \widetilde{\chi}(z) = \psi^2(z)+\gamma(z)\psi^1(z), \hskip 5mm \widetilde{\overline{\chi}}(z) = \psibar^1(z)- \gamma(z)\psibar^2(z).
\end{equation*}

Modes of $\chi$ are defined by
\begin{equation}
    \chi(w) = \sum\limits_{k\in{\mathbb Z}+\frac12}\frac{\chi_k}{z^{k+\frac12}},
\end{equation}
and analogously for $\overline \chi, \widetilde \chi$, and $\widetilde{\overline \chi}$. Modes of $\chi_k$ and $\widetilde \chi_k$ all anticommute, and the same is true for $\widetilde \chi_k$ and $\widetilde{\overline \chi}_k$. That is, we have superbrackets:
\begin{equation}
\label{eq:chi:superbrackets}
[\chi_k,\chi_l] = [\overline{\chi}_k,\chi_l] = [\overline{\chi}_k,\overline{\chi}_l]
=
[\widetilde{\chi}_k,\widetilde{\chi}_l] = [\widetilde{\overline{\chi}}_k,\widetilde{\chi}_l] = [\widetilde{\overline{\chi}}_k,\widetilde{\overline{\chi}}_l]
=
0.
\end{equation}
Vector $w^0_{\kappa,j,\frac12}$ (resp.\ $\widetilde w^0_{\kappa,j,0}$) is annihilated by $\chi_k$ for $k \geq \frac32$ and by $\chibar_k$ for $k \geq \frac12$ (resp. by $\widetilde \chi_k$ and $\widetilde \chibar_k$ for $k \geq \frac12$). All the other $w^n_{\kappa,j,\epsilon}$ and $\widetilde w^n_{\kappa,j,\epsilon}$ may be obtained recursively by acting with $\chi$ and $\chibar$ (resp. $\widetilde \chi$ and $\widetilde \chibar$):
\begin{align}
    w^n_{\kappa,j,0} =&\; \chi_{- \frac{2n-1}{2}} w_{\kappa, j ,\frac12}^n, \qquad w^{n+1}_{\kappa,j,\frac12}= \overline \chi_{- \frac{2n+1}{2}} w_{\kappa,j,0}^n, \\
    \widetilde w_{\kappa, j , \frac12}^n =&\; \widetilde{\overline \chi}_{- \frac{2n+1}{2}} \widetilde w_{\kappa,j,0}^n, \qquad\hskip 4pt \widetilde w_{\kappa,j,0}^{n+1} = \widetilde \chi_{- \frac{2n+1}{2}} \widetilde w_{\kappa,j,\frac12}^n.
\end{align}
More explicitly, we have
\begin{align}
\label{eq:special:states:through:fermions}
w_{\kappa,j,\frac12}^{n}  = &\; \chibar_{-\frac{2n-1}{2}} \chi_{- \frac{2n-3}{2}} \cdots \chibar_{-\frac12} \chi_{\frac12} w_{\kappa,j,\frac12}^0, 
\hskip 30pt 
w_{\kappa,j,0}^n  =\; \chi_{-\frac{2n-1}{2}}  w_{\kappa,j, \frac12}^n, 
\\[2pt]
\nonumber
\widetilde w_{\kappa,j,0}^n  = &\; \widetilde{\chi}_{-\frac{2n-1}{2}}\widetilde{\chibar}_{-\frac{2n-1}{2}} \cdots \widetilde{\chi}_{-\frac12}\widetilde{\chibar}_{-\frac12}\widetilde w_{\kappa,j,0}^0, 
\qquad 
\widetilde w_{\kappa,j,\frac12 }^{n}  =\; \widetilde \chibar_{- \frac{2n+1}{2}} \widetilde w_{\kappa,j,0}^n.
\end{align}

\begin{proposition}
\label{prop:explicit:hs:states}
$w_{\kappa,j,\epsilon}^n, \widetilde w_{\kappa,j,\epsilon}^n$ are nonzero vectors annihilated by $\{ \J^a_n ,L_n^{\Vir} \}_{n >0}$ and $\J_0^+$. They satisfy eigenvalue equations
\begin{align}
   & \J^0_0 w_{\kappa,j,\epsilon}^n = (j + \epsilon -n) w_{\kappa,j,\epsilon}^n, \quad && L_0^{\Vir} w_{\kappa,j,\epsilon}^n = \Delta_{k,j, \epsilon}^{-n}  w_{\kappa,j,\epsilon}^n, \label{eq:tensor_hw_eig} \\
    &\J^0_0 \widetilde w_{\kappa,j,\epsilon}^n = (j+\epsilon +n) \widetilde w_{\kappa,j,\epsilon}^n, \quad && L_0^{\Vir} \widetilde w_{\kappa,j,\epsilon}^n = \Delta_{k,j, \epsilon}^n  \widetilde w_{\kappa,j,\epsilon}^n, \nonumber
\end{align}
where $\Delta_{k,j,\epsilon}^n$ are as in \eqref{eq:highest_weights_Vir}. 
\end{proposition}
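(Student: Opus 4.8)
The plan is to verify all the assertions for the untilded vectors $w^n_{\kappa,j,\epsilon}$ and to obtain the tilded statements by the symmetric argument, with the roles of $\chi,\overline\chi$ and $\widetilde\chi,\widetilde{\overline\chi}$ (and of $s,\widetilde s$) interchanged. Throughout I would work in the extended space $\W^{\kappa,j}\otimes\F$ of Proposition \ref{prop:Fermi_Rep}, where $\rho(z)=\chi(z)\overline\chi(z)$ and the vectors are given by the fermionic expressions \eqref{eq:special:states:through:fermions}. The first task is to record, alongside Proposition \ref{prop:rho_OPE}, the OPEs of the currents and of $T^{\Vir}$ with the individual fields $\chi,\overline\chi$. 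A direct Wick computation gives $\J^+(z)\chi(w)\sim\mathrm{reg}$ and $\J^+(z)\overline\chi(w)\sim\mathrm{reg}$, so that $[\J^+_m,\chi_k]=[\J^+_m,\overline\chi_k]=0$ for all $m,k$; by contrast the OPEs with $\J^0,\J^-$ and with $T^{\Vir}$ produce $\gamma$-dependent descendants, and these are the source of all the real work.

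The genuinely clean part is then immediate. Since $\J^+_m w^0_{\kappa,j,\epsilon}=0$ for $m\ge 0$ and $\J^+_m$ commutes with every $\chi_k,\overline\chi_k$, one gets $\J^+_m w^n_{\kappa,j,\epsilon}=0$ for all $m\ge 0$, in particular the $\J^+_0$ condition. The $\J^0_0$-eigenvalue is read off from $[\J^0_0,\rho_m]=-\rho_m$ in \eqref{eq:comm:LVir:rho}: each of the $n$ factors $\rho$ lowers the charge by one, giving $j+\epsilon-n$. For $L^{\Vir}_0$ I would \emph{not} commute with $\rho$ directly, since $T^{\J}(z)\rho(w)$ is not of primary form; instead I would split $L^{\Vir}_0=(L^J_0+L^K_0)-L^{\J}_0$, compute the $(L^J_0+L^K_0)$-grading cleanly from $[L^J_0+L^K_0,\rho_m]=-m\rho_m$ (obtaining $n^2+\Delta_{k,j}$ for $\epsilon=0$ and $n(n-1)+\tfrac14+\Delta_{k,j}$ for $\epsilon=\tfrac12$), and evaluate $L^{\J}_0$ as the Sugawara weight $\tfrac{(j+\epsilon-n)(j+\epsilon-n+1)}{k+3}$ once the highest-weight property is available. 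Combining the two reproduces $\Delta^{-n}_{k,j,\epsilon}$.

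The heart of the matter is therefore to show that $w^n_{\kappa,j,\epsilon}$ is a highest weight vector for $\asl{}_{,k+1}$. Here I would use that the positive subalgebra $\aslp$ at level $k+1$ is generated as a Lie algebra by the Chevalley generators $\J^+_0$ and $\J^-_1$, so it suffices to prove $\J^+_0 w^n=0$ (done) and $\J^-_1 w^n=0$; annihilation by all $\J^a_m$, $m>0$, then follows by an inductive bracket argument, and with it $L^{\J}_m w^n=0$ for $m>0$. To conclude $L^{\Vir}_m w^n=0$ for $m>0$ it then remains to check $(L^J_m+L^K_m)w^n=0$, and since the positive Virasoro modes are generated by $L_1,L_2$ this reduces to $m\in\{1,2\}$. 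Using $[L^J_m+L^K_m,\rho_p]=-p\rho_{m+p}$, commutativity of the $\rho_p$ among themselves, and $(L^J_m+L^K_m)w^0=0$, each such expression collapses to a sum of vectors $(\text{product of }\rho\text{'s})\,\rho_q\,w^0_{\kappa,j,\epsilon}$, and the reduction rests on a family of vanishing relations, the simplest being $\rho_q w^0_{\kappa,j,0}=0$ for $q\ge 0$ and $\rho_{-2}\rho_{-1}w^0_{\kappa,j,0}=0$.

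These vanishing relations, together with the condition $\J^-_1 w^n=0$, are exactly where the fermionic realization is indispensable, and I expect the main difficulty to lie here. The favourable structural fact is that $\chi_k,\overline\chi_k$ generate a purely anticommuting (Grassmann) system — all brackets in \eqref{eq:chi:superbrackets} vanish — while $\rho_p=\sum_k{:}\chi_k\overline\chi_{p-k}{:}$ commutes with every $\chi_l,\overline\chi_l$; combined with the explicit annihilation of the base vectors ($w^0_{\kappa,j,1/2}$ killed by $\chi_k$, $k\ge\tfrac32$, and $\overline\chi_k$, $k\ge\tfrac12$), this lets one evaluate the required products mode by mode and organize the computation along the recursion \eqref{eq:special:states:through:fermions}. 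The obstruction is that the cubic current acts by $\J^-(z)\chi(w)\sim\gamma(w)\chi(w)/(z-w)$, reintroducing the Wakimoto field $\gamma$, so that establishing $\J^-_1 w^n=0$ ultimately requires careful bookkeeping of how these $\gamma$-insertions act on $w_{\kappa,j}$. Finally, nonvanishing of $w^n_{\kappa,j,\epsilon}$ follows from the same Grassmann structure: the modes in \eqref{eq:special:states:through:fermions} are distinct creation operators, so the state is a nonzero element of the fermionic Fock space, which one may also confirm by isolating a leading term or by pairing with $\widetilde w^n_{\kappa,j,\epsilon}$.
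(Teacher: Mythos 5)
Your strategy is essentially the paper's: pass to the fermionic realization $\rho=\chi\chibar$, use that $\J^+$ commutes with all modes of $\chi,\chibar$, read the $\J^0_0$- and $(L^J_0+L^K_0)$-eigenvalues off the commutators \eqref{eq:comm:LVir:rho}, and obtain $L_0^{\Vir}$ by subtracting the Sugawara weight once the highest-weight property is in hand. Your two reductions (to the Chevalley generators $\J^+_0,\J^-_1$ of $\aslp$, and to $L_1,L_2$ for the Virasoro conditions) are valid and mildly streamline matters, but they buy little: the paper's direct computation for general $m>0$ is no harder than for $m=1$.

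The genuine gap is that you never prove $\J^-_1 w^n_{\kappa,j,\epsilon}=0$; you correctly single it out as the crux and then defer it to ``careful bookkeeping of the $\gamma$-insertions.'' That bookkeeping is the actual content of the paper's proof and is short, so you should carry it out. In modes, $[\J^-_m,\chi_{-k}]=\sum_{l\in\Z}\gamma_l\,\chi_{-k-l+m}$ (likewise for $\chibar$), and every $\gamma_l$ commutes with all fermion modes. Applying the Leibniz rule to the product formula \eqref{eq:special:states:through:fermions}, each resulting term contains a single factor $\gamma_l\,\chi_{-k-l+m}$. Terms with $l>0$ vanish because $\gamma_l$ can be pushed onto $w_{\kappa,j}$, which it annihilates. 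Terms with $l\le 0$ raise the fermion index by $m-l\ge 1$, so the shifted mode is either one already occurring in the product (the term dies by $\chi_{-k}^2=\chibar_{-k}^2=0$) or lies in the range annihilating $w^0_{\kappa,j,\epsilon}$; the modes $-\tfrac{2i-1}{2}$ form a gapless ladder, so there is no third possibility. Exactly the same counting, with $\gamma_l$ replaced by the index shift coming from $[L^J_m+L^K_m,\rho_p]=-p\rho_{p+m}$, yields the full family of vanishing relations $\rho_q\prod_{i\neq i_0}\rho_{-2i+1}\,w^0_{\kappa,j,\epsilon}=0$ that your $L_1,L_2$ check requires; the two relations you list ($\rho_q w^0_{\kappa,j,0}=0$ for $q\ge 0$ and $\rho_{-2}\rho_{-1}w^0_{\kappa,j,0}=0$) only cover $n\le 2$. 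With these points filled in, your argument coincides with the paper's proof; your remark on nonvanishing (isolating the leading pure-$\psi^1,\psibar^2$ term) is fine and is in fact more explicit than what the paper writes.
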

\begin{proof} 
We discuss vectors $w_{\kappa,j,0}^n$, the other three being analogous. If $n=0$, the statement is obvious. Now let $n$ be general. By \eqref{eq:comm:LVir:rho}, $w_{\kappa,j,0}^n$ is annihilated by $\mathcal J^+_m$ for $m \geq 0$, by $\mathcal J^0_m$ for $m>0$, and by $\mathcal J_0^0-j+n$. To argue that it is also annihilated by $\mathcal J^-_m$ for $m > 0$, we use the fermionic formula \eqref{eq:special:states:through:fermions}. We have commutators
\begin{equation}
    [\mathcal J_m^-,\chi_{-k}] = \sum_{l \in \Z} \gamma_l \chi_{-k-l+m}, \qquad [\mathcal J_m^-,\chibar_{-k}] = \sum_{l \in \Z} \gamma_l \chibar_{-k-l+m}.
\end{equation}
By Leibniz rule for commutators, $\mathcal J^-_m w_{\kappa,j,0}^n$ is the sum of terms of the form
\begin{equation}
    \pm \sum_{l \in \Z} \left( \prod_{i=1}^n \chibar_{- \frac{2i-1}{2}} \right) \left( \prod_{i \neq i_0} \chi_{- \frac{2i-1}{2}} \right)  \gamma_l \chi_{- \frac{2i_0-1}{2} -l+m} w^{0}_{\kappa,j,0}
\end{equation}
and analogous with the roles of $\chi$ and $\chibar$ interchanged. In the sum over $l$, terms with $l > 0$ vanish because $\gamma_l w^{0}_{\kappa,j,0} =0$, and all other terms vanish because $\chi_{- \frac{2i_0-1}{2} -l+m}$ either annihilates $w^{0}_{\kappa,j,0}$ or is included in the product $\prod_{i \neq i_0} \chi_{- \frac{2i-1}{2}}$ (and $\chi_{-k}^2 =0$). This result also implies, that
\begin{equation}
\label{eq:lm:for:sum:being:zero}
L^{\J}_m w_{\kappa,j,0}^n = 0, \hskip 1cm m > 0.
\end{equation}

Using nilpotency of $\chi_{-k}$ and $\chibar_{-k}$ one also verifies that for any $l \geq 0:$
\begin{equation}
\rho_{-2l}w_{\kappa,j,l}^0
=
\rho^2_{-2l-1}w_{\kappa,j,l}^0
=
0
\end{equation}
so that, using \eqref{eq:comm:LVir:rho} and \eqref{eq:lm:for:sum:being:zero}, we obtain the Virasoro highest weight condition
\begin{equation}
    0 = \left(L^J_m + L^K_m\right)w_{\kappa,j,0}^n = L^{\Vir}_mw_{\kappa,j,0}^n, \hskip 1cm m > 0. 
\end{equation}

Finally we verify the eigenequation for $L_0^{\Vir}$. By \eqref{eq:comm:LVir:rho} we have
\begin{equation}
    (L_0^{J}+L_0^K) w^{n}_{\kappa,j, 0} = \left( \Delta_{k,j} + \sum_{i=1}^{n} (2i-1) \right) w^{n}_{\kappa,j, 0}.
\end{equation}
Since $w^n_{\kappa,j,0}$ is a highest weight vector for $\mathcal J^a$, we have also
\begin{equation}
L_0^{\mathcal J} w^n_{\kappa,j,0} = \Delta_{k+1,j-n} w^n_{\kappa,j,0}.
\end{equation}
\end{proof}

\begin{lemma} \label{lem:overlap_calc}
    We have
    \begin{equation}
     \left.   \langle  \widetilde s( T \chi^+_{r,l}) \otimes f , w^n_{\kappa,j,\epsilon} \rangle \right|_{T^+_{r,l}} \equiv 0, 
     \hskip 5mm \text{resp.} \hskip 5mm
     \left.   \langle   \widetilde w^n_{\kappa,j,\epsilon}, s( T \chi^-_{r,l}) \otimes f \rangle \right|_{T^-_{r,l}} \equiv 0,
     \label{eq:spec_state_overlap_singular}
    \end{equation}
    for all $T \in U \aslm$ and all $ f \in \H^{1,\epsilon}$ if and only if $r+l > 2n - 2 \epsilon$ (resp. $r+l > 2n+2\epsilon$).
\end{lemma}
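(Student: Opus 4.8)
The plan is to fix the hypersurface $T^+_{r,l}=0$ and reduce the vanishing of the family of overlaps to a single question about the range of $\mathbf s$. On this surface $\widetilde s(\chi^+_{r,l})$ is a highest weight vector (the Proposition establishing that the vectors \eqref{eq:WHW} are highest weight vectors), so $\widetilde s(T\chi^+_{r,l})|_{T^+_{r,l}=0}=T\cdot\widetilde s(\chi^+_{r,l})|_{T^+_{r,l}=0}$ runs, as $T$ ranges over $\U\aslm$, over $\widetilde s(\rad(\V^{k,j}))$; at a generic point of the surface $\rad(\V^{k,j})$ is generated by $\chi^+_{r,l}$ and is isomorphic to $\V^{k,j-l}$ (Corollary \ref{cor:reducibility_dim1}). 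Hence the left-hand family in \eqref{eq:spec_state_overlap_singular} vanishes for all $T$ and all $f$ iff $w^n_{\kappa,j,\epsilon}$ is orthogonal to $\widetilde{\mathbf s}(\rad(\V^{k,j})\otimes\H^{1,\epsilon})$. By Proposition \ref{prop:sinv_sing_characterization} and non-degeneracy of the fermionic pairing, this orthogonal complement is exactly $\Ran(s)\otimes\H^{1,\epsilon}=\Ran(\mathbf s)$. Thus the whole statement becomes: generically on $T^+_{r,l}=0$, one has $w^n_{\kappa,j,\epsilon}\in\Ran(\mathbf s)$ iff $r+l>2n-2\epsilon$.

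Next I would exploit the exact sequence of $\asl{}_{,k+1}\oplus\Vir$–modules
\begin{equation*}
0\longrightarrow \V^{k,j-l}\otimes\H^{1,\epsilon}\xrightarrow{\ \iota\ } \V^{k,j}\otimes\H^{1,\epsilon}\xrightarrow{\ \pi\ }\H^{k,j}\otimes\H^{1,\epsilon}\longrightarrow 0,
\end{equation*}
in which $\iota$ is induced by $\rad(\V^{k,j})\cong\V^{k,j-l}\hookrightarrow\V^{k,j}$ and $\pi$ is realized by $\mathbf s$, so that $\Ran(\mathbf s)\cong\H^{k,j}\otimes\H^{1,\epsilon}$. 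At a generic surface point both Verma tensor products decompose as in \eqref{eq:module:decomp}; write $u_N$ for the highest weight vector of charge $j+\epsilon+N$ and Virasoro weight $\Delta^N_{k,j,\epsilon}$ of the middle module. Since $w^n_{\kappa,j,\epsilon}$ has charge $j+\epsilon-n$ and Virasoro weight $\Delta^{-n}_{k,j,\epsilon}$, uniqueness (generically) of the highest weight line of this weight in $\ut\W^{\kappa,j,\epsilon}$ shows that $w^n_{\kappa,j,\epsilon}\in\Ran(\mathbf s)$ iff $\pi(u_{-n})\neq 0$.

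The vanishing of $\pi(u_{-n})$ is controlled inside the charge $j+\epsilon-n$ isotypic block. There $\iota$ carries the charge $j+\epsilon-n$ highest weight vector of $\V^{k,j-l}\otimes\H^{1,\epsilon}$, of Virasoro weight $\Delta^{l-n}_{k,j-l,\epsilon}$, into the summand $\V^{k+1,j+\epsilon-n}\otimes\V^{c_k^{\otimes},\Delta^{-n}_{k,j,\epsilon}}_{\Vir}$, so $\Ran(\iota)$ meets this summand in $\V^{k+1,j+\epsilon-n}$ tensored with a Virasoro Verma of weight $\Delta^{l-n}_{k,j-l,\epsilon}$. The generator $u_{-n}$, of Virasoro weight $\Delta^{-n}_{k,j,\epsilon}$, lies in this sub-Verma — so that $\pi(u_{-n})=0$ — precisely when the latter is not a proper submodule sitting strictly above it, i.e. when $\Delta^{l-n}_{k,j-l,\epsilon}\le \Delta^{-n}_{k,j,\epsilon}$. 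A direct computation, eliminating $j$ through $T^+_{r,l}=0$, gives
\begin{equation*}
\Delta^{-n}_{k,j,\epsilon}-\Delta^{l-n}_{k,j-l,\epsilon}=l\big(2n-2\epsilon-(r+l)\big),
\end{equation*}
so $\pi(u_{-n})\neq 0$ — equivalently $w^n_{\kappa,j,\epsilon}\in\Ran(\mathbf s)$ and the overlaps vanish — exactly when $r+l>2n-2\epsilon$. The assertion for $\widetilde w^n_{\kappa,j,\epsilon}$ and $\chi^-_{r,l}$ is handled identically, now with $\rad(\V^{k,j})\cong\V^{k,j+l}$ and charge $j+\epsilon+n$; the analogous weight difference is $\Delta^n_{k,j,\epsilon}-\Delta^{n-l}_{k,j+l,\epsilon}=l\big(2n+2\epsilon-(r+l)\big)$, giving the threshold $r+l>2n+2\epsilon$.

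The hard part will be the module-theoretic step of the previous paragraph: justifying that $u_{-n}$ is annihilated by $\pi$ exactly when $\Delta^{l-n}_{k,j-l,\epsilon}\le\Delta^{-n}_{k,j,\epsilon}$. This hinges on the relative nesting of the two Virasoro Verma modules occupying the same $\asl{}_{,k+1}$ isotypic block — when $\Delta^{l-n}_{k,j-l,\epsilon}>\Delta^{-n}_{k,j,\epsilon}$ the image of $\iota$ is a proper Virasoro submodule (a singular vector at positive level) and $u_{-n}$ survives, whereas in the opposite case the corresponding block of $\V^{k,j-l}\otimes\H^{1,\epsilon}$ is the larger one and swallows $u_{-n}$. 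Making this rigorous requires working at a generic point of the surface, where \eqref{eq:module:decomp} applies to both tensor products and the relevant level $k+1$ and Virasoro Verma modules follow the generic GKO pattern; since each overlap in \eqref{eq:spec_state_overlap_singular} is polynomial, vanishing on a dense subset of the surface suffices. A more computational alternative, bypassing these genericity subtleties, is to substitute the fermionic formulas \eqref{eq:special:states:through:fermions} for $w^n_{\kappa,j,\epsilon}$ together with the $\gamma$-free, leading $\beta_{-r}^{\,l}$ structure of $\widetilde s(\chi^+_{r,l})|_{T^+_{r,l}=0}$ and reduce the overlap to a Wick-contraction count, but the weight-comparison argument above is cleaner and explains transparently why the threshold takes the stated form.
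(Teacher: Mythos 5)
Your opening reduction is sound and is a genuinely different framing from the paper's: by Proposition \ref{prop:sinv_sing_characterization} (valid generically on the hypersurface $T^+_{r,l}=0$, where all $T^-_{r',l'}$ are nonzero) together with non-degeneracy of the form on $\H^{1,\epsilon}$, the simultaneous vanishing of all the overlaps is indeed equivalent to $w^n_{\kappa,j,\epsilon}\in\Ran(\mathbf s)$, and your computation $\Delta^{-n}_{k,j,\epsilon}-\Delta^{l-n}_{k,j-l,\epsilon}=l\bigl(2n-2\epsilon-(r+l)\bigr)$ is correct and reproduces the threshold. The problem is that everything between these two correct endpoints is unproven, and you concede as much (``the hard part''). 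Concretely: (i) the decomposition \eqref{eq:module:decomp} is established, in this paper via this very Lemma (so there is also a circularity issue), only for generic $(k,j)$, and as a direct sum of Verma modules it is expected to fail precisely on the degenerate hypersurface you are working on --- the non-split extensions that replace it are exactly what the Lemma is probing; (ii) the assertion that $\pi(u_{-n})=0$ iff $\Delta^{l-n}_{k,j-l,\epsilon}\le\Delta^{-n}_{k,j,\epsilon}$ does not follow from comparing conformal weights: an integer gap between two Virasoro highest weights at the same central charge does not by itself produce an embedding of Virasoro Verma modules, let alone force the specific vector $u_{-n}$ to lie in $\Ran(\iota)$, and at the borderline $r+l=2n-2\epsilon$ you would additionally need the two highest weight vectors of equal weight to be proportional; (iii) the uniqueness of the highest weight line of weight $(j+\epsilon-n,\Delta^{-n}_{k,j,\epsilon})$ in $\ut\W^{\kappa,j,\epsilon}$, which you need in order to identify $w^n_{\kappa,j,\epsilon}$ with the image of $u_{-n}$, fails exactly in the regime $r+l<2n-2\epsilon$ where your own picture predicts a second (Virasoro-singular) vector of the same weight inside the $N=l-n$ block.

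The paper's proof is the ``computational alternative'' you set aside: reduce to $T=1$ using the adjointness relations and the highest weight property of $w^n_{\kappa,j,\epsilon}$; use the screening-charge asymptotics of Subsection \ref{sec:screenieng_charges} to see that $\widetilde s(\chi^+_{r,l})|_{T^+_{r,l}=0}$ is built from the $a$'s and exactly $l$ operators $\beta$ of total degree $rl$; push the $\beta$'s onto the fermionic factor as $K^-$'s; and bound $\sum_j(a_j+b_j)=rl$ by a Pauli-exclusion count of fermionic modes. The converse direction then requires the explicit witnesses $f$ constructed case by case in the paper, for which \eqref{eq:chi:rl:through:Q:asymptotics} is again essential. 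To salvage your structural argument you would have to independently establish the submodule structure of $\V^{k,j}\otimes\H^{1,\epsilon}$ on the hypersurface, which is at least as hard as the Lemma itself.
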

\begin{proof}
Consider the left hand side of the first expression in \eqref{eq:spec_state_overlap_singular} (the analysis of the second expression is analogous). $T$ is a polynomial in $J^a_{-m}$ ($m<0$ or $m=0$, $a=-$). We may use adjoint relations for $\sJ^a_m, K^a_m$ and the equations $J^a_m w^n_{\kappa,j,\epsilon} = -K^a_m w^n_{\kappa,j,\epsilon}$ to rewrite \eqref{eq:spec_state_overlap_singular} in the same form with $T=1$ and some new $ f$. As discussed in Subsection \ref{sec:screenieng_charges},  $\left. \widetilde s(  \chi^+_{r,l}) \right|_{T^+_{r,l}=0}$ is a linear combination of terms of the form 
\begin{equation}
    a_{-j_1} \cdots a_{-j_p} \beta_{-i_1} \cdots \beta_{-i_l} \widetilde w_{\kappa,j}
\end{equation}
with $j_1 + \dots + j_p +i_1 + \dots + i_l = rl$. Terms with $p \neq 0$ do not contribute in \eqref{eq:spec_state_overlap_singular} because $w^n_{\kappa,j , \epsilon}$ does not contain any $a$ operators (and thus is annihilated by $a_n$ for $n>0$). Therefore, \eqref{eq:spec_state_overlap_singular} is a linear combination of terms of the form
\begin{equation}
    \langle \beta_{-i_1} \cdots \beta_{-i_l} \widetilde w_{\kappa,j} \otimes f , w^n_{\kappa,j,\epsilon} \rangle, 
    \label{eq:overlap_term}
\end{equation}
with $i_1 + \dots + i_l =rl$. Next we note that $\beta_n = \widetilde J^-_n$, so using adjointness relations and the highest weight condition again we transform the term \eqref{eq:overlap_term} to form
\begin{equation}
    \langle \widetilde w_{\kappa,j} \otimes K_{-i_1}^- \cdots K_{-i_l}^- f , w^n_{\kappa,j,\epsilon} \rangle,
\end{equation}
possibly with a sign absorbed into a redefinition of $f$. In this expression we can drop all terms in $w^n_{\kappa,j,\epsilon}$ containing any $\gamma$ operators. Therefore, by definition of $w^n_{\kappa,j,\epsilon}$, we are left with (again we absorb signs)
\begin{equation}
    \langle K_{-i_1}^- \cdots K_{-i_l}^- f |  \prod_{i=1}^n \psi^1_{- \frac{2i-1}{2} +2 \epsilon} \overline \psi^2_{- \frac{2i-1}{2}} \cdot f_\epsilon \rangle.
\end{equation}
Now we use the adjointness relations for $K^-$ and then express $K^+$ in terms of fermions to rewrite this expression as a combination of terms
\begin{equation}
    \langle f | \overline \psi^1_{a_1} \psi^2_{b_1} \cdots \overline \psi^1_{a_l} \psi^2_{b_l} \cdot \prod_{i=1}^n \psi^1_{- \frac{2i-1}{2} +2 \epsilon} \overline \psi^2_{- \frac{2i-1}{2}} \cdot f_\epsilon \rangle,
    \label{eq:random_label}
\end{equation}
where the indices $a_j,b_j$ satisfy $\sum_{j=1}^l (a_j + b_j) = rl$. Clearly the expression \eqref{eq:random_label} vanishes if any $a_j > \frac{2n-1}{2} - 2 \epsilon$ or $b_j > \frac{2n-1}{2}$. Moreover, it vanishes if there are repetitions among $a_j, b_j$. Hence for \eqref{eq:random_label} to be nonzero it is necessary that we have
\begin{equation}
    \sum_j a_j \leq \sum_{j=0}^{l-1} \left( \frac{2n-1}{2} -2  \epsilon-j \right) = l \left( \frac{2n-l}{2} -2 \epsilon \right).
\end{equation}
Similarly we derive $\sum_j b_j \leq l  \frac{2n-l}{2}$, and hence
\begin{equation}
    rl = \sum_{j=1}^l(a_j + b_j) \leq l (2n-l - 2 \epsilon),
\end{equation}
which is equivalent to $r+l \leq 2n - 2 \epsilon$. This completes the proof of the first part of the statement.

Suppose now that $r+l\leq 2n-2\epsilon.$ To construct an example of a pair $(T, f)\in \aslm\times \H^{1,\epsilon}$ such that 
$\left.\langle  \widetilde s( T \chi^+_{r,l}) \otimes f , w^n_{\kappa,j,\epsilon} \rangle \right|_{T^+_{r,l}}$ is not identically zero it is enough to take $T=1.$
Formula (\ref{eq:chi:rl:through:Q:asymptotics}) implies that there exists a polynomial $C_{r,l}(\kappa)$ such that
\begin{equation}
    \left.   \langle \widetilde{s}\left(\chi^+_{r,l} \right)  \otimes f , w^n_{\kappa,j,\epsilon} \rangle \right|_{T^+_{r,l}=0}
    = C_{r,l}(\kappa) \left( \left. \langle \beta_{-r}^l\widetilde{w}_{\kappa,j^+_{r,l}} \otimes f , w^n_{\kappa,j,\epsilon} \rangle \right|_{T^+_{r,l}=0} + R_{r,l}(\kappa) \right)
\end{equation}
where $\lim\limits_{\kappa\to\infty}R_{r,l}(\kappa) = 0.$  
A direct calculation  shows that if we take
\begin{equation}
    f = \left\{
    \begin{array}{rcl}
    \frac{(-1)^n}{r!} K^-_{-2n+1}\ldots K^-_{-2r-3}K^-_{-2r-1} \left(K^-_{-r}\right)^{r-l}f_0 & \;\mathrm{for}\; & \epsilon = 0,\; r \geq l,
    \\[6pt]
     \frac{(-1)^n}{l!} K^-_{-2n+1}\ldots K^-_{-2l-1} K^0_{-2l+r+1}\ldots K^0_{-r-3}K^0_{-r-1} f_0& \;\mathrm{for}\; & \epsilon = 0,\; r < l,
    \\[6pt]
    \frac{1}{(r-1)!} K^-_{-2n+2}\ldots K^-_{-2r-2}K^-_{-2r}\left(K^-_{-r}\right)^{r-l-1}K^-_0f_{\frac12} & \;\mathrm{for}\; & \epsilon = \frac12,\; r > l,
    \\[8pt]
   \frac{1}{l!}K^-_{-2n+2}\ldots K^-_{-2l-2}K^0_{-2l+r}\ldots K^0_{-r-2}K^0_{-r}K^-_0f_\frac12  & \;\mathrm{for}\; & \epsilon = \frac12,\; r \leq l,
    \end{array}
    \right.
\end{equation}
then
\begin{equation}
    \left. \langle \beta_{-r}^l\widetilde{w}_{\kappa,j^+_{r,l}} \otimes f , w^n_{\kappa,j,\epsilon} \rangle \right|_{T^+_{r,l}} = 1.
\end{equation}
Indeed,
\begin{equation}
\langle \beta_{-r}^l\widetilde{w}_{\kappa,j^+_{r,l}} \otimes f , w^n_{\kappa,j^+_{r,l},0} \rangle
=
\langle \widetilde{w}_{\kappa,j^+_{r,l}} \otimes \left(K^-_{-r}\right)^l f , w^n_{\kappa,j^+_{r,l},0} \rangle, 
\end{equation}
and for $\epsilon = 0,\ r \geq l:$
\begin{eqnarray}
\nonumber
&& \hskip -3cm
\langle \widetilde{w}_{\kappa,j^+_{r,l}} \otimes \left(K^-_r\right)^l f , w^n_{\kappa,j^+_{r,l},0} \rangle
=
\frac{(-1)^n}{r!}\langle \widetilde{w}_{\kappa,j^+_{r,l}} \otimes K^-_{-2n+1}\ldots K^-_{-2r-1} \left(K^-_{-r}\right)^rf_0 , w^n_{\kappa,j^+_{r,l},0} \rangle
\\[4pt]
& = &
\langle \widetilde{w}_{\kappa,j^+_{r,l}} \otimes K^-_{-2n+1}\ldots K^-_{-3} K^-_{-1}f_0 , w_{\kappa,j^+_{r,l}} \otimes K^-_{-2n+1}\ldots K^-_{-3} K^-_{-1}f_0 \rangle
= 1
\end{eqnarray}
while for $\epsilon = 0,\ r < l:$
\begin{eqnarray}
\nonumber
&& \hskip -1cm
\langle \widetilde{w}_{\kappa,j^+_{r,l}} \otimes \left(K^-_{-r}\right)^l f , w^n_{\kappa,j^+_{r,l},0} \rangle
\\
& = &
\frac{(-1)^n}{l!}\langle \widetilde{w}_{\kappa,j^+_{r,l}} \otimes 
 K^-_{-2n+1}\ldots K^-_{-2l-1} \left(K^-_{-r}\right)^lK^0_{-2l+r+1}\ldots K^0_{-r-3}K^0_{-r-1} f_0 , w^n_{\kappa,j^+_{r,l},0} \rangle
\\[4pt]
\nonumber
& = &
\frac{(-1)^n}{r!}\langle \widetilde{w}_{\kappa,j^+_{r,l}} \otimes 
 K^-_{-2n+1}\ldots K^-_{-2l-1}K^-_{-2l+1}\ldots K^-_{-2r-3}K^-_{-2r-1}  \left(K^-_{-r}\right)^rf_0 , w^n_{\kappa,j^+_{r,l},0} \rangle
= 1,
\end{eqnarray}
where we used the fact that $(K_{-r}^-)^{r+1} f_0 =0$. Computations for $\epsilon = \frac12$ are similar.
\end{proof}

Now for any $n \in \Z$ we define 
\begin{equation}
v^n_{\kappa,j,\epsilon} = \begin{cases}
         \prod_{r+l \leq 2|n| - 2 \epsilon} T^+_{r,l} \cdot \mathbf s^{-1}(w^{|n|}_{\kappa,j,\epsilon}) & \text{if } n \leq 0, \\[4pt]
         \prod_{r+l \leq 2n + 2 \epsilon} T^-_{r,l} \cdot \widetilde{\mathbf s}^{-1}(\widetilde w^{n}_{\kappa,j,\epsilon}) & \text{if } n \geq 0.
    \end{cases}
\label{eq:special_states_Verma}
\end{equation}
Then $v^n_{\kappa,j,\epsilon}$ depends polynomially on $\kappa,j$ by Lemmas \ref{lem:overlap_criterion}, \ref{lem:kappa_singularity}, \ref{lem:overlap_calc}. If $n=0$, expressions in the two cases of \eqref{eq:special_states_Verma} are both equal $v_{k,j} \otimes  f_{\epsilon}$.

\begin{proposition}
Let $n \neq 0$. There exists at most finitely many pairs $\kappa,j$ such that $v^n_{\kappa,j,\epsilon}=0$. They are all contained in the union of zero loci of $\kappa$ and $(T^+_{r,l})_{r+l \leq 2|n| - 2 \epsilon }$ ($n < 0$) or $\kappa$ and $(T^-_{r,l})_{r+l \leq 2n + 2 \epsilon }$ ($n > 0$).
\label{prop:renormalization}
\end{proposition}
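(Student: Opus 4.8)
The plan is to carry out the case $n \leq 0$ in full; the case $n \geq 0$ follows \emph{verbatim} after replacing $\mathbf s, w^{|n|}_{\kappa,j,\epsilon}, T^+_{r,l}, \chi^+_{r,l}$ by $\widetilde{\mathbf s}, \widetilde w^{n}_{\kappa,j,\epsilon}, T^-_{r,l}, \chi^-_{r,l}$ and using the second halves of Lemma \ref{lem:overlap_calc} and Proposition \ref{prop:explicit:hs:states}. Writing $P = \prod_{r+l \leq 2|n|-2\epsilon} T^+_{r,l}$, the definition \eqref{eq:special_states_Verma} reads $v^n_{\kappa,j,\epsilon} = P \cdot \mathbf s^{-1}(w^{|n|}_{\kappa,j,\epsilon})$, and applying $\mathbf s$ yields the polynomial identity
\begin{equation}
\mathbf s\bigl(v^n_{\kappa,j,\epsilon}\bigr) = P \cdot w^{|n|}_{\kappa,j,\epsilon},
\label{eq:plan_key}
\end{equation}
valid for all $\kappa \neq 0$ (it holds as an identity of rational functions, and both sides are polynomial). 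For the containment, note that for $\kappa \neq 0$ the vector $w^{|n|}_{\kappa,j,\epsilon}$ is nonzero by Proposition \ref{prop:explicit:hs:states}, so \eqref{eq:plan_key} shows that $v^n_{\kappa,j,\epsilon}=0$ forces $P=0$, i.e.\ $T^+_{r,l}=0$ for some $(r,l)$ with $r+l\leq 2|n|-2\epsilon$; adjoining $\{\kappa=0\}$ gives exactly the asserted set, and in particular the zero set lies in a finite union of irreducible curves.

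To see that $v^n_{\kappa,j,\epsilon}$ is not identically zero, and to dispose of the component $\{\kappa=0\}$, I would isolate a single coordinate. Since $\rho(z)=\gamma^2(z)K^+(z)-2\gamma(z)K^0(z)-K^-(z)-\partial\gamma(z)$ involves in its Wakimoto factor only the operators $\gamma_m$, which strictly move $w_{\kappa,j}$ out of $\W^{\kappa,j}_{[0,0]}$, the only contribution of $w^{|n|}_{\kappa,j,\epsilon}$ to $\W^{\kappa,j}_{[0,0]}=\C\,w_{\kappa,j}$ comes from taking the $-K^-$ term in every factor; thus its $[0,0]$-part equals $w_{\kappa,j}\otimes\xi$, where $\xi$ is a nonzero string of $K^-$ modes applied to $f_\epsilon$ (nonzero by the computations in Lemma \ref{lem:overlap_calc}). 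As $\mathbf s^{-1}=s^{-1}\otimes 1$ preserves the grading and $s^{-1}(w_{\kappa,j})=v_{k,j}$, the coefficient of $v^n_{\kappa,j,\epsilon}$ along $v_{k,j}\otimes\xi$ equals $\pm P$; its restriction to $\kappa=0$ is $\pm\prod(2j+1-l)\not\equiv 0$, so $v^n_{\kappa,j,\epsilon}$ vanishes identically neither on $\{\kappa=0\}$ nor globally.

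It remains to prove finiteness, for which — the zero set lying in a finite union of irreducible curves — it suffices to show $v^n_{\kappa,j,\epsilon}$ vanishes identically on none of them. The component $\{\kappa=0\}$ was just handled; fix instead $(r_0,l_0)$ with $r_0+l_0\leq 2|n|-2\epsilon$. Combining \eqref{eq:relation_between_bilinear_forms} with \eqref{eq:plan_key}, for $\eta=T\chi^+_{r_0,l_0}\otimes f$ one has $g(\eta,v^n_{\kappa,j,\epsilon})=P\,\langle\widetilde{\mathbf s}(\eta)\,|\,w^{|n|}_{\kappa,j,\epsilon}\rangle$. By Lemma \ref{lem:overlap_calc} — this is exactly where $r_0+l_0\leq 2|n|-2\epsilon$ enters — one can choose $T,f$ so that the overlap is \emph{not} divisible by $T^+_{r_0,l_0}$; since $T^+_{r_0,l_0}$ divides $P$ exactly once, $g(\eta,v^n_{\kappa,j,\epsilon})$ vanishes to order exactly one along $\{T^+_{r_0,l_0}=0\}$. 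Were $v^n_{\kappa,j,\epsilon}$ identically zero on this curve, we could write $v^n_{\kappa,j,\epsilon}=T^+_{r_0,l_0}\,v'$ with $v'$ polynomial; but $\chi^+_{r_0,l_0}$ is a singular vector when $T^+_{r_0,l_0}=0$, so $\eta$ restricts there to an element of $\rad(\V^{k,j})\otimes\H^{1,\epsilon}=\Ker(g)$, whence $g(\eta,v')$ is again divisible by $T^+_{r_0,l_0}$ and $g(\eta,v^n_{\kappa,j,\epsilon})$ vanishes to order $\geq 2$ — a contradiction.

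The main obstacle is this last step. On a singular-vector curve the natural nonzero value of $v^n_{\kappa,j,\epsilon}$ is a multiple of a singular vector, which lies in the radical of $g$ and is therefore annihilated by the pairings built naively from $g$, $\mathbf s$ and $\widetilde{\mathbf s}$; the order-of-vanishing bookkeeping above, fed by the \emph{explicit} nonvanishing overlaps of Lemma \ref{lem:overlap_calc}, is what detects it indirectly. A secondary point requiring care is that all identities are first derived where the Wakimoto modules are defined, i.e.\ for $\kappa\neq 0$, and only then promoted to identities of polynomials in $\kappa,j$.
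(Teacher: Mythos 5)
Your proof is correct and follows the same route as the paper's: apply $\mathbf s$ (resp.\ $\widetilde{\mathbf s}$) to get $\mathbf s(v^n_{\kappa,j,\epsilon})=\prod T^{\pm}_{r,l}\cdot w^{|n|}_{\kappa,j,\epsilon}$, deduce the stated containment from the nonvanishing of $w^{|n|}_{\kappa,j,\epsilon}$, and then reduce finiteness to non-vanishing of $v^n_{\kappa,j,\epsilon}$ on each line of the containing union. The paper compresses that last check into the phrase ``by construction''; your two explicit verifications --- the $[0,0]$-component computation handling the line $\kappa=0$ (and global non-vanishing), and the order-of-vanishing argument for $g(T\chi^+_{r_0,l_0}\otimes f,\,v^n_{\kappa,j,\epsilon})$ along $T^+_{r_0,l_0}=0$, which is a correct dual restatement of what Lemma \ref{lem:overlap_criterion} combined with Lemma \ref{lem:overlap_calc} supplies --- are both sound.
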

\begin{proof}
We discuss $n > 0$. We have $\widetilde{\mathbf s} (v^n_{\kappa,j,\epsilon}) = \prod_{r+l \leq 2n + 2 \epsilon} T^-_{r,l} \cdot\widetilde w^{n}_{\kappa,j,\epsilon}$, which is nonzero if $\kappa$ and all $T^-_{r,l}$ in the product are nonzero. Therefore, the zero locus of $v^n_{\kappa,j,\epsilon}$ is contained in a finite union of complex lines. By construction, $v^n_{\kappa,j,\epsilon}$ does not vanish identically on any of these lines, so its zero locus is zero-dimensional. 
\end{proof}
We have, for example,
\begin{eqnarray}
\label{eq:highest:weight:tensor:product:1}
v^{-1}_{\kappa,j,\frac12}
& = & 
T^+_{0,1}\,{\mathbf s}^{-1}\left(w^1_{\kappa,j,\frac12}\right)
=
\left(-\sJ^-_0 \otimes {\mathbf 1} +2j\, {\mathbf 1}\otimes K^-_0\right) (v_{k,j}\otimes  f_{\frac12}),
\\[6pt]
\nonumber
v^1_{\kappa,j,0}
& = & 
T^-_{1,1}\,{\widetilde{\mathbf s}}^{-1}\left({\widetilde w}^1_{\kappa,j,0}\right) 
=
\left(-\sJ^+_{-1}\otimes {\mathbf 1} + (k-2j){\mathbf 1}\otimes  K^+_{-1}\right)(v_{k,j}\otimes  f_0),
\end{eqnarray}
as well as
\begin{eqnarray}
\label{eq:highest:weight:tensor:product:2a}
\nonumber
&&
\hskip -14mm
v^{-1}_{\kappa,j,0}
=
T^+_{0,1}T^+_{0,2}T^+_{1,1}\,{\mathbf s}^{-1}\left(w^1_{\kappa,j,0}\right) 
=
\left(-\sJ^+_{-1}\left(\sJ^-_0\right)^2 - 2(2j-1)\sJ^0_{-1}\sJ^-_0 +2j(2j-1)\sJ^-_{-1}\right)v_{k,j}\otimes  f_0
\\
&& \hskip -9mm
+ (k+2j+2)\left(\left(\sJ^-_0\right)^2\otimes  K^+_{-1} + 2(2j-1)\sJ^-_0\otimes  K^0_{-1} -2j(2j-1){\mathbf 1}\otimes  K^-_{-1}\right)(v_{k,j}\otimes  f_0),
\end{eqnarray}
and
\begin{eqnarray}
\label{eq:highest:weight:tensor:product:2b}
v^1_{\kappa,j,\frac12}
& = & 
T^-_{1,1}T^-_{1,2}T^-_{2,1}{\widetilde{\mathbf s}}^{-1}\left({\widetilde w}^1_{\kappa,j,\frac12}\right)
\\
\nonumber
& = & 
\left(-\sJ^-_{0}\left(\sJ^+_{-1}\right)^2 + 2(k-2j-1)\sJ^0_{-1}\sJ^+_{-1} + (k-2j)(k-2j-1)\sJ^+_{-2}\right)(v_{k,j}\otimes  f_{\frac12})
\\
\nonumber
&& 
+\;2(k-j+1)
\left(\left(\sJ^+_{-1}\right)^2\otimes K^-_0 -2(k-2j-1)\sJ^+_{-1}\otimes K^0_{-1}\right.
\\
\nonumber
&&
\left.
\hskip 5.8cm -(k-2j)(k-2j-1)\mathbf{1}\otimes K^+_{-2}\right)(v_{k,j}\otimes  f_{\frac12}),
\end{eqnarray}
in agreement\footnote{Recall that there are no singular vectors corresponding to 
$T^\pm_{r,0}$ and $T^-_{0,l},$ and in effect those factors are absent in (\ref{eq:special_states_Verma}) and (\ref{eq:highest:weight:tensor:product:1}) -- (\ref{eq:highest:weight:tensor:product:2b}).} 
with (\ref{eq:special_states_Verma}).

Let us note that in all examples above, $v^n_{\kappa,j,\epsilon}$ has no zeros: it contains a term with coefficient $1$. We have not been able to verify whether this is true in general. Nevertheless, we can deduce the following existence result for highest weight vectors:

\begin{corollary}
\label{cor:existence:of:hws}
    For any $k\not\in\{-2,-3\},$ any $j, \epsilon$ and for $n \in \Z$ there exists nonzero $v \in \ut \V^{k, j , \epsilon}$ which is annihilated by $\{ \mathcal J^a_n, L_n^{\Vir} \}_{n >0}$ and $\mathcal J_0^+$, and satisfies
    \begin{equation}
        \mathcal J^0_0 v = (j + \epsilon +n)v, \qquad L_0^{\Vir} v = \Delta_{k,j,\epsilon}^{n} v.
    \end{equation}
\end{corollary}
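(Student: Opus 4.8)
The plan is to show that the explicit family $v^n_{\kappa,j,\epsilon}$ of \eqref{eq:special_states_Verma} already furnishes the required vector wherever it is nonzero, and to reach the remaining parameter values by a degeneration (rescaling) argument. I would treat $n>0$ in detail; the case $n<0$ is identical after interchanging the roles of $\mathbf s,\widetilde{\mathbf s}$ and of $T^+,T^-$, and $n=0$ is immediate since both branches give $v_{k,j}\otimes f_\epsilon$.

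First I would check that $v^n_{\kappa,j,\epsilon}$ satisfies all the required highest weight equations \emph{identically} in $(\kappa,j)$, for $k\notin\{-2,-3\}$. For generic $(\kappa,j)$ — precisely when $\kappa\neq 0$ and all $T^-_{r,l}\neq 0$ — Proposition \ref{prop:S_det} shows that $\widetilde s$, hence $\widetilde{\mathbf s}=\widetilde s\otimes 1$, is an isomorphism. Since $\widetilde{\mathbf s}$ intertwines the $\asl{}_{,k+1}\oplus\Vir$ action and $\widetilde{\mathbf s}(v^n_{\kappa,j,\epsilon})=\prod_{r+l\leq 2n+2\epsilon}T^-_{r,l}\cdot \widetilde w^n_{\kappa,j,\epsilon}$, each of $\mathcal J^a_m v^n_{\kappa,j,\epsilon}$ ($m>0$), $\mathcal J^+_0 v^n_{\kappa,j,\epsilon}$, $L^{\Vir}_m v^n_{\kappa,j,\epsilon}$ ($m>0$), $(\mathcal J^0_0-(j+\epsilon+n))v^n_{\kappa,j,\epsilon}$ and $(L_0^{\Vir}-\Delta^n_{k,j,\epsilon})v^n_{\kappa,j,\epsilon}$ is sent by $\widetilde{\mathbf s}$ to $\prod T^-_{r,l}$ times the same operator applied to $\widetilde w^n_{\kappa,j,\epsilon}$, which vanishes by Proposition \ref{prop:explicit:hs:states}. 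As $\widetilde{\mathbf s}$ is injective generically, all these vectors vanish for generic $(\kappa,j)$. They are polynomial in $(\kappa,j)$ (for the $\mathcal J$-equations) or rational with poles only at $k\in\{-2,-3\}$ (for the $L^{\Vir}$-equations), so they vanish for all $(\kappa,j)$ with $k\notin\{-2,-3\}$.

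If $v^n_{\kappa_0,j_0,\epsilon}\neq 0$ we are done. Otherwise I would invoke Proposition \ref{prop:renormalization}, which confines the zero locus of $v^n_{\kappa,j,\epsilon}$ to a finite set $Z\subset\C^2$. Given $(\kappa_0,j_0)\in Z$ with $k_0\notin\{-2,-3\}$, choose a complex line $\ell(t)$ with $\ell(0)=(\kappa_0,j_0)$ along which $v^n$ does not vanish identically — possible because $Z$ is finite — and write $v^n|_{\ell(t)}=t^d u(t)$ with $u$ polynomial, $d\geq 1$ and $u(0)\neq 0$. I would set $v=u(0)=\lim_{t\to 0} t^{-d} v^n|_{\ell(t)}$. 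Restricting the identities of the previous paragraph to $\ell$, dividing by $t^d$, and letting $t\to 0$ — legitimate since every operator involved depends rationally on $t$ with no pole at $t=0$ (as $k_0\notin\{-2,-3\}$) — shows that this nonzero $v\in\ut\V^{k_0,j_0,\epsilon}$ is annihilated by $\{\mathcal J^a_m,L^{\Vir}_m\}_{m>0}$ and $\mathcal J^+_0$, with $\mathcal J^0_0 v=(j_0+\epsilon+n)v$ and $L_0^{\Vir} v=\Delta^n_{k_0,j_0,\epsilon}v$.

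The hard part is the last step: although $v^n$ solves the highest weight equations everywhere, it may degenerate to $0$ at isolated parameters, so one must manufacture a genuinely nonzero solution there. The finiteness of the vanishing locus from Proposition \ref{prop:renormalization} is exactly what makes the rescaling $t^{-d}v^n$ converge to a nonzero limit while preserving the homogeneous highest weight conditions; without this codimension-two input the limit could be identically zero or fail to exist.
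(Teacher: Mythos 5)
Your proof is correct and follows essentially the same route as the paper: the paper also fixes a one-parameter family through the possibly bad point (it varies $j$ with $\kappa$ fixed, relying on Proposition \ref{prop:renormalization} to know $v^n_{\kappa,j',\epsilon}$ is not identically zero there), factors out the order of vanishing, and takes the leading coefficient as $v$. Your additional verification that the highest weight identities hold for all $(\kappa,j)$ with $k\notin\{-2,-3\}$ is a point the paper leaves implicit, but it is the same argument carried out in more detail.
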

\begin{proof}
Fix $\kappa,j$. For some $m \in \N$ we have $v^n_{\kappa,j',\epsilon} = (j'-j)^m v_{j}$ with $v_{j'} \neq 0$. Take $v=v_j$.  
\end{proof}

\begin{remark}
If our conjecture that $v_{\kappa,j,\epsilon}$ has no zeroes is false, then one can't construct a~polynomial family of highest weight vectors by dividing out a normalization factor. The essence of the proof of Corollary \ref{cor:existence:of:hws} is to restrict to a one parameter family of vectors (varying $j$ with fixed $\kappa$). Approaching the point $\kappa,j$ from a different direction one would obtain different limits. Let us illustrate this phenomenon with an example. 

Consider the family of vectors $\xi(x,y) = \begin{bmatrix}
    x \\ y
\end{bmatrix}$ in $\mathbb C^2$ depending on two complex parameters $x,y$. We have $\xi(0,0)=0$, but there is no polynomial $p(x,y)$ such that $\frac{1}{p(x,y)} \xi(x,y)$ has a finite and nonzero limit at $(0,0)$. The situation improves if we restrict to one-dimensional subspaces, e.g.:
\begin{equation}
    \lim_{x \to 0} \frac{1}{x} \xi(x,0) = \begin{bmatrix}
        1 \\ 0
    \end{bmatrix}, \qquad  \lim_{y \to 0} \frac{1}{y} \xi(0,y) = \begin{bmatrix}
        0 \\ 1
    \end{bmatrix}.
\end{equation}
Note that these limits are not equal. 
\end{remark}

We can summarize this Subsection with the following Theorem.

\begin{theorem}
\label{th:character:decomposition}
An isomorphism \eqref{eq:module:decomp} has been constructed explicitly for generic $k,j$. In~particular we have the character identity
\begin{equation}
    \mathrm{Tr}_{\V^{k,j} \otimes \H^{1 , \epsilon}} \, \left( q_{\J}^{L_0^{\J} - \frac{c_{k+1}}{24}} y^{\J^0_0} q_{\Vir}^{L_0 - \frac{c_k^\otimes}{24}} \right) = \sum_{n \in \Z} \chi^{\V}_{k+1,j+\epsilon+n}(q_{\J},y) \chi^{\Vir}_{c_k^\otimes,\Delta_{k,j,\epsilon}^n}(q_{\Vir}).
\label{eq:finer_character}
\end{equation}
\end{theorem}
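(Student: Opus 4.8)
The plan is to assemble the explicit decomposition from the ingredients already constructed, and then read off the character identity as a corollary. The central claim is the module isomorphism \eqref{eq:module:decomp}, and the character statement \eqref{eq:finer_character} is its trace-level shadow. First I would fix generic $k,j$ — meaning that none of the Kac-Kazhdan factors $T^a_{r,l}(k,j)$ vanishes, so that $\V^{k,j}$ and $\ut \V^{k,j,\epsilon}$ are as ``nondegenerate'' as possible. For each $n \in \Z$, Corollary \ref{cor:existence:of:hws} (together with Proposition \ref{prop:renormalization} guaranteeing that $v^n_{\kappa,j,\epsilon}$ is nonzero away from the finite bad locus) furnishes a genuine highest weight vector $v^n_{\kappa,j,\epsilon} \in \ut \V^{k,j,\epsilon}$ for $\asl{}_{,k+1}\oplus\Vir$ with $\asl{}_{,k+1}$-charge $j+\epsilon+n$ and Virasoro weight $\Delta^n_{k,j,\epsilon}$. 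By Proposition \ref{prop:hw_weights} applied to $\asl{}_{,k+1}$ and the analogous statement for the Virasoro algebra, each such vector generates a homomorphic image of the tensor-product Verma module $\V^{k+1,j+\epsilon+n}\otimes \V^{c_k^\otimes,\Delta^n_{k,j,\epsilon}}_{\Vir}$. For generic weights these Verma modules are irreducible, so the map from each summand is injective.

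Next I would verify that the images of these embeddings are linearly independent and together span $\ut \V^{k,j,\epsilon}$. Independence follows because distinct $n$ yield distinct $\J^0_0$-eigenvalues $j+\epsilon+n$ on the respective highest weight vectors, and the submodules sit in eigenspaces that are compatible with this grading; a nonzero intersection or relation would force a highest weight vector of a weight not realized, contradicting Proposition \ref{prop:hw_weights}. For the spanning statement, the cleanest route is a dimension (character) count: summing the characters of the proposed summands gives exactly the left-hand side of \eqref{eq:module:decomp} by the purely combinatorial identity \eqref{eq:chracter:decomposition}, which was established independently at the level of formal $q,y$-series. Since the direct sum of the images injects into $\ut \V^{k,j,\epsilon}$ and has the same graded dimension in every $(L_0^{\Vir},\J^0_0)$-bidegree, the injection is an isomorphism. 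This is where genericity is essential: it is what makes both the individual Verma modules irreducible and the character matching force surjectivity.

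The character identity \eqref{eq:finer_character} then follows immediately by taking the graded trace of the isomorphism \eqref{eq:module:decomp}. Because $L_0^{\J}$, $\J^0_0$ and $L_0^{\Vir}=L_0^{\J}$-complement all act diagonally and the decomposition respects the bigrading, the trace factorizes over the two tensor factors of each summand, producing $\chi^{\V}_{k+1,j+\epsilon+n}(q_{\J},y)\,\chi^{\Vir}_{c_k^\otimes,\Delta^n_{k,j,\epsilon}}(q_{\Vir})$ for each $n$, and summing over $n$ gives the right-hand side.

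The main obstacle I anticipate is the spanning/surjectivity step, not the construction of the vectors. Producing highest weight vectors is already done in Proposition \ref{prop:explicit:hs:states} and Corollary \ref{cor:existence:of:hws}; what remains genuinely nontrivial is ruling out that the submodules they generate could overlap or fail to exhaust the tensor product. My strategy is to let the character identity \eqref{eq:chracter:decomposition} do the bookkeeping, so the hard analytic work is replaced by comparing graded dimensions once injectivity of each summand is secured by irreducibility of generic Verma modules. The subtlety to handle carefully is that \eqref{eq:chracter:decomposition} by itself only proves equality of \emph{characters} and is explicitly flagged in the text as \emph{not} proving the decomposition; the logical gap is closed precisely by combining it with the \emph{honest} embeddings coming from the constructed highest weight vectors, which upgrade the formal identity to a bona fide module isomorphism.
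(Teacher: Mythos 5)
Your proposal is correct and follows essentially the same route as the paper: map each summand $\V^{k+1,j+\epsilon+n}\otimes \V^{c_k^\otimes,\Delta^n_{k,j,\epsilon}}_{\Vir}$ in by sending its highest weight vector to $v^n_{\kappa,j,\epsilon}$, get injectivity from irreducibility and pairwise non-isomorphism of the generic summands, and promote injectivity to surjectivity via the character identity \eqref{eq:chracter:decomposition}. The only point you omit is the paper's closing remark that the character identity \eqref{eq:finer_character} (though not the module isomorphism) extends from generic to all $k,j$ by taking limits.
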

\begin{proof}
A module homomorphism
\begin{equation}
\bigoplus_{n \in \Z} \left( \V^{k+1,j+\epsilon + n} \otimes \V^{c_k^\otimes, \Delta_{k,j,\epsilon}^n}_{\Vir} \right) \to \V^{k,j} \otimes \H^{1 , \epsilon} 
\label{eq:constructed_homomorphism}
\end{equation}
is constructed by taking the highest weight vector of the $n$th summand of the domain to $v^n_{\kappa,j,\epsilon}$. For generic $\kappa,j,\epsilon$ the summands are irreducible and non-isomorphic, and $v^n_{\kappa,j,\epsilon} \neq 0$. Hence the homorphism is generically injective. Whenever it is injective, it is also surjective by \eqref{eq:chracter:decomposition}. For such $k,j$ equality \eqref{eq:finer_character} is obvious. It extends to general $k,j$ by taking limits.
\end{proof}

We note that \eqref{eq:finer_character} is a refinement of \eqref{eq:chracter:decomposition}: it reduces to it if we put $q_{\J}=q_{\Vir}=q$. We would also like to highlight that the assumption of $k,j$ being generic is essential to conclude the isomorphism \eqref{eq:module:decomp}. 
It is not enough to compare the characters (example: the Verma module and the Wakimoto module for $\asl$ are not isomorphic for certain weights, despite having a~highest weight vector of the same weight and equal characters; in such cases the highest weight vector is not cyclic in the Wakimoto module).

\appendix

\section{Noncommutative localization} \label{app:localization}

Let $R$ be a ring and let $S \subset R$ be a multiplicative set, i.e.\ $1 \in S$ and $s_1s_2 \in S$ for any two $s_1,s_2 \in S$. A ring homomorphism $j : R \to R_S$ (or often the ring $R_S$ itself, with $j$ implicit) is called the localization of $R$ with respect to $S$ if the following two conditions are satisfied:
\begin{itemize}
    \item $j(S)$ consists of invertible elements of $R_S$,
    \item if $ j' : R \to R'$ is a ring homomorphism such that $j'(S)$ consists of invertible elements, there exists a unique homomorphism $h : R_S \to R'$ such that $j' = h \circ j$.
\end{itemize}
With this definition it is apparent that $R_S$ is determined up to a canonical isomorphism. For a~simple proof of existence of $R_S$, see \cite[Proposition 9.2]{Lam}. Unfortunately in general it is difficult to work with $R_S$. Its elements are linear combinations of words $j(r_1) j(s_1)^{-1} j(r_2) j(s_2)^{-1} \cdots$, and it is not easy to describe relations between such words.

If $R$ was assumed to be commutative, we could combine all denominators in any single word and find a common denominator for any finite collection of words, so every element of $R_S$ would be of the form $j(r) j(s)^{-1}$. The same conclusion can be made under a weaker assumption, called the \textbf{right Ore condition}: for every $r \in R$ and $s \in S$ there exist $r' \in R$ and $s' \in S$ such that $rs'=sr'$. There is an analogous (but non-equivalent) \textbf{left Ore condition}, which implies that every element of $R_S$ is of the form $j(s)^{-1}j(r)$. 

Concerning the kernel of $j$, first note that $\Ker(j)$ is an ideal. Now consider the sets
\begin{equation}
    I_1 = \{ r \in R \, | \, \exists s \in S \ rs = 0 \}, \qquad I_2 = \{ r \in R \, | \, \exists s \in S \ sr = 0 \}.
\end{equation}
Clearly $I_1, I_2 \subset \Ker(j)$. It is easy to check that the right Ore condition implies that $I_1 $ is an ideal (and similarly, the left Ore condition implies that $I_2$ is an ideal). Therefore if we supplement the right Ore condition with the \textbf{right cancellability condition} $I_2 \subset I_1$, then $I_1$ is a natural candidate for $\Ker(j)$. This guess turns out to be correct:  
\begin{proposition}
The following conditions are equivalent:
\begin{itemize}
    \item $R,S$ satisfy the right Ore condition and the right cancellability condition,
    \item every element of $R_S$ is of the form $j(r)j(s)^{-1}$ and $\Ker(j) = I_1$.
\end{itemize}
The same is through if we replace right by left and $I_1$ by $I_2$ throughout. 
\end{proposition}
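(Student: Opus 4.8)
The plan is to prove the two implications separately, writing (I) for the ``right Ore plus right cancellability'' condition and (II) for the ``every element is a right fraction and $\Ker(j)=I_1$'' condition. I would dispatch the easy direction (II)$\Rightarrow$(I) first. For cancellability: if $r\in I_2$, say $sr=0$ with $s\in S$, then $j(s)j(r)=j(sr)=0$, and since $j(s)$ is invertible this forces $j(r)=0$, so $r\in\Ker(j)=I_1$; hence $I_2\subset I_1$. For right Ore: given $r\in R$ and $s\in S$, write the element $j(s)^{-1}j(r)\in R_S$ in the assumed form $j(r')j(s')^{-1}$. Cross-multiplying gives $j(r)j(s')=j(s)j(r')$, hence $rs'-sr'\in\Ker(j)=I_1$, so $(rs'-sr')s''=0$ for some $s''\in S$. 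Setting $\tilde s=s's''\in S$ and $\tilde r=r's''$ yields $r\tilde s=s\tilde r$, which is exactly the right Ore condition.

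For the substantive direction (I)$\Rightarrow$(II), I would build the ring of right fractions explicitly and identify it with $R_S$ through the universal property. On the set of pairs $(r,s)\in R\times S$, with $(r,s)$ standing for the intended fraction $rs^{-1}$, I would impose the relation $(r_1,s_1)\sim(r_2,s_2)$ iff there are $c_1,c_2\in R$ with $s_1c_1=s_2c_2\in S$ and $r_1c_1=r_2c_2$. Reflexivity and symmetry are immediate; transitivity is the first place the hypotheses enter, using right Ore to produce common right multiples of denominators (given $s_1,s_2\in S$, Ore applied to $s_2$ and $s_1$ yields $u,v$ with $s_1u=s_2v\in S$) and the cancellability condition $I_2\subset I_1$ to convert left annihilations into right ones. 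I would then define addition and multiplication by passing to a common denominator: for addition combine $r_1s_1^{-1}$ and $r_2s_2^{-1}$ over $s=s_1u=s_2v\in S$, and for multiplication rewrite $s_1^{-1}r_2$ as a right fraction via Ore, checking in each case that the result is well defined on equivalence classes.

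Let $R'$ denote the resulting ring and $j':R\to R'$, $r\mapsto[(r,1)]$. Each $j'(s)$ is invertible with inverse $[(1,s)]$, and for any $j'':R\to R''$ inverting $S$ the formula $[(r,s)]\mapsto j''(r)j''(s)^{-1}$ is the unique homomorphism factoring $j''$; thus $R'$ satisfies the defining universal property and is canonically isomorphic to $R_S$, with $j'$ corresponding to $j$. By construction every element of $R'$ is a right fraction $j'(r)j'(s)^{-1}$, giving the first half of (II). For the kernel, $[(r,1)]=[(0,1)]$ means there are $c_1,c_2$ with $c_1=c_2\in S$ and $rc_1=0$, i.e.\ $rc=0$ for some $c\in S$; hence $\Ker(j')=I_1$, which transports to $\Ker(j)=I_1$.

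The main obstacle lies entirely in the explicit construction: verifying transitivity of $\sim$ and the well-definedness of the two operations on equivalence classes is where the bookkeeping is delicate, and it is precisely here --- turning an identity $su=0$ into an identity $us'=0$ --- that the cancellability condition $I_2\subset I_1$ is indispensable, right Ore alone sufficing only when $S$ contains no zero divisors. Finally, the ``left'' variant follows for free by applying the ``right'' statement to the opposite ring $R^{\mathrm{op}}$, under which right Ore, right cancellability, and $I_1$ become their left counterparts and $I_2$.
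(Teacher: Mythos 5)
Your proposal follows exactly the route the paper itself points to: the paper gives no proof of this proposition but cites Lam (Theorem 10.6 and Corollary 10.11), describing its main part as the explicit construction of $R_S$ as a quotient set of $R \times S$ with the class of $(r,s)$ playing the role of the fraction $rs^{-1}$ --- which is precisely your (I)$\Rightarrow$(II) argument, while your (II)$\Rightarrow$(I) direction and the opposite-ring reduction for the left variant are complete and correct. The only caveat is that the substance of the hard direction --- transitivity of $\sim$ and well-definedness of the two operations, which is exactly where the Ore and cancellability hypotheses do their real work and what the paper calls the ``explicit (but tedious)'' part --- is announced rather than executed, so your write-up is a correct blueprint of Lam's proof rather than a self-contained one.
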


We refer to \cite[Theorem 10.6 and Corollary 10.11]{Lam} for a proof. Its main part is an explicit (but tedious) construction of $R_S$ as a quotient set of $R \times S$, with the class of pair $(r,s)$ corresponding to a "fraction" $rs^{-1}$.

Now let $\mathfrak g$ be a Lie algebra and let $S_0 \subset \mathfrak g$ be a set consisting of locally ad-nilpotent elements, i.e.\ for every $x \in S_0$ and $y \in \mathfrak g$ there exists $n \in \N$ such that $[x , \cdot]^n(y)=0$. Let $R = \U \mathfrak g$ and let $S \subset R$ be the multiplicative set generated by $S_0$, i.e.\ all element in $R$ that can be expressed as words in letters from the alphabet $S_0$. $R$ has no zero divisors, so left and right cancellability conditions are satisfied. We prove that the right Ore condition is satisfied. By the same method, the left Ore condition also holds. It is sufficient to show that if $r \in R$ and $s \in S_0$, there exist $r' \in R$ and $s' \in S$ such that $rs' = sr'$. Indeed (since $r$ is a polynomial in elements of $\mathfrak g$), by assumption on $S_0$ there exists $n \in \N$ such that $[s, \cdot]^n (r)=0$, and we may take $s'=s^{n}$:
\begin{equation}
    rs^{n} = s \sum_{i=0}^{n-1} \binom{n}{i} s^{n-i-1} [\cdot, s]^{i}(r).  
\end{equation}

In summary, $\U \mathfrak g$ is canonically embedded in the algebra $R_S$, in which elements of $S_0$ are invertible and (upon identifying $\U \mathfrak g$ with its image in $R_S$) every element can be written as $rs^{-1}$ and as $s'^{-1} r'$ for some $r,r' \in \U \mathfrak g$ and $s,s' \in S$.

\section*{Declarations}
\subsubsection*{Funding and/or Conflicts of interests/Competing interests}

The authors have no competing interests to declare that are relevant to the content of this article.

\end{document}